\documentclass[12pt,a4paper]{article}

\usepackage{amsmath,amssymb,amsthm,mathtools}
\usepackage[margin=2.5cm]{geometry}

\usepackage{parskip}
\usepackage{enumerate}
\usepackage{enumitem}
\setlist[enumerate, 1]{label=(\roman*)}
\setlist[itemize]{leftmargin=1.5em}
\setlist[description]{leftmargin=1em}

\usepackage{overpic}
\usepackage{contour}
\contourlength{1pt}
\usepackage{graphicx}
\graphicspath{ {figures/} }
\DeclareGraphicsExtensions{.pdf, .jpg, .jpeg, .png}

\usepackage[hidelinks, backref=page]{hyperref}
\usepackage[format=plain,labelsep=period,justification=justified,font=small,labelfont=bf, margin=1em]{caption}

\makeatletter 
\newcommand\nobreakpar{\par\nobreak\@afterheading} 
\makeatother

\numberwithin{equation}{section}

\usepackage{thmtools}
\declaretheorem[style=plain,name=Theorem,qed={\tiny$\blacksquare$},numberwithin=section]{theorem}

\declaretheorem[style=definition,name=Definition,sibling=theorem,qed={\tiny$\blacksquare$}]{definition}
\declaretheorem[style=plain,name=Lemma,sibling=theorem,qed={\tiny$\blacksquare$}]{lemma}

\declaretheorem[style=definition,name=Remark,sibling=theorem,qed={\tiny$\blacksquare$}]{remark}
\declaretheorem[style=definition,name=Question,sibling=theorem,qed={\tiny$\blacksquare$}]{question}

\declaretheorem[style=plain,name=Proposition,sibling=theorem,qed={\tiny$\blacksquare$}]{proposition}

\usepackage{manfnt} 
\newcommand*{\cube}{\mbox{\mancube}}

\newcommand{\mobq}{{\mathcal{M}}}

\newcommand{\lieq}{{\mathcal{L}}}

\newcommand{\Z}{\mathbb{Z}}
\newcommand{\R}{\mathbb{R}}

\newcommand{\RP}{\mathbb{R}\mathrm{P}}

\newcommand{\set}[2]{\left\{#1\ \middle| \ #2 \right\}}
\newcommand{\pol}{\perp}
\newcommand{\sca}[1]{\left<#1\right>}

\title{Multi-dimensional consistency of principal binets}

\author{
  Niklas C. Affolter\thanks{Institut für Diskrete Mathematik und Geometrie, TU Wien, Austria; and
	Institut für Mathematik, TU Berlin, Germany. 
   \textit{E-mail address}: \texttt{affolter@posteo.net}},
 Jan Techter\thanks{Institut für Mathematik, TU Berlin, Germany. 
    \textit{E-mail address}: \texttt{techter@tu-berlin.de}} 
    \bigskip\bigskip\\
}

\date{\today}

\begin{document}

\maketitle

\noindent
\textbf{Abstract.}
Principal binets are a discretization of curvature line parametrized surfaces
defined on the vertices and faces of the square lattice $\Z^2$.
They generalize the previously established discretizations given by circular nets, conical nets, and principal contact element nets.
We show that principal binets constitute a discrete integrable system in the sense of multi-dimensional consistency.
In particular, they generalize to higher-dimensional square lattices $\Z^N$.
We also discuss relations to the notion of discrete orthogonal coordinate systems as previously established for discrete confocal quadrics.

\newpage

\setcounter{tocdepth}{2}
\tableofcontents

\newpage

\section{Introduction} \label{sec:introduction}

\begin{figure}[t]
  \centering
  \begin{overpic}[width=0.6\textwidth]{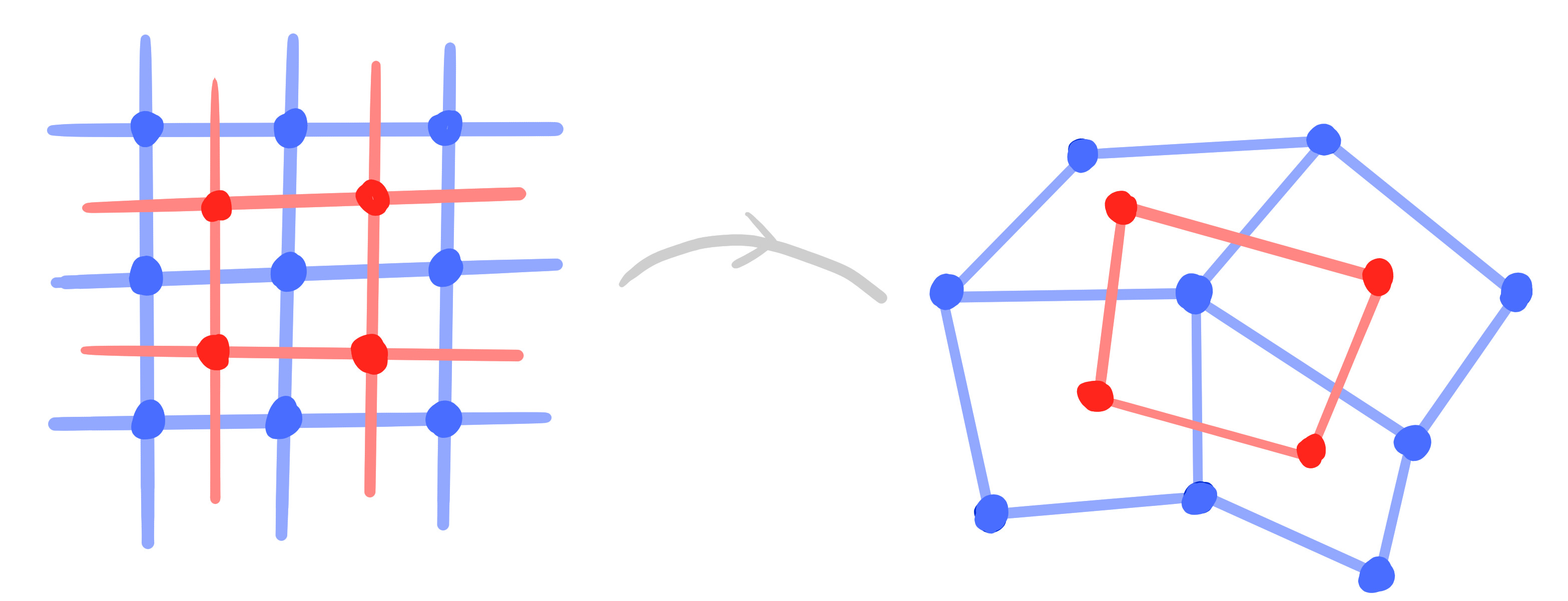}
    \put(3,5){$D_2$}
    \put(90,30){$\RP^n$}
  \end{overpic}
  \caption{A (two-dimensional) binet is a map defined on both the vertices and faces of  $\Z^2$.}
  \label{fig:binet}
\end{figure}

Discrete differential geometry translates the concepts of smooth differential geometry into a discrete setting,
replacing smooth surfaces with discrete meshes that retain their fundamental properties.
A common approach is to discretize parametrized surfaces as maps defined on $\Z^2$, known as \emph{discrete nets} \cite{ddgbook}.
Of special interest are specific types of parametrizations.
For example, conjugate parametrizations have been discretized as \emph{discrete conjugate nets} \cite{sauerqnet}.
Principal (curvature line) parametrizations, in turn, have been discretized as \emph{circular nets} \cite{bpisothermic, bobenkocircular}, \emph{conical nets} \cite{lpwywconical}, and \emph{principal contact element nets} \cite{bsorganizing}.
Recently, \emph{binets} have been introduced as a more general approach to discretizing parametrized surfaces \cite{atbinets}.
These are defined as maps on both the vertices \emph{and} the faces of $\Z^2$ (see Figure~\ref{fig:binet}).
Notably, the special case of \emph{principal binets} led to a generalization of the aforementioned previously established discretizations of principal parametrizations.

In their seminal work \cite{bsorganizing}, Bobenko and Suris formulated two main principles that structure preserving discretizations should adhere to,
namely the \emph{transformation group principle} and the \emph{consistency principle}.
The latter principle asserts that discretizations should inherit
the integrability properties of their smooth counterparts.  
This idea is formalized by the notion of
\emph{multi-dimensional consistency}, which concerns the ability to extend discrete surface theories consistently from maps on $\Z^2$ to maps on $\Z^N$.
While \cite{atbinets} established the transformation group principle for principal binets,
the present paper focuses on the consistency principle.
We generalize binets from $\Z^2$ to $\Z^N$, defining them as maps on the vertices and (2-dimensional) faces of $\Z^N$.
Our main result is the multi-dimensional consistency of principal binets.
In the process, we also establish the consistency of \emph{conjugate binets} and \emph{polar conjugate binets}.

In view of the smooth theory, the additional directions in a discrete net on $\Z^N$ can be interpreted in two ways: as new parameters in a higher-dimensional parametrization, or as  fundamental transformations of these parametrizations.
For principal binets, these two perspectives allow an interpretation as either \emph{discrete orthogonal coordinate systems} or as their \emph{Ribaucour transformations}.
We investigate both of these viewpoints,
and in particular discuss relations to the notion of discrete orthogonal coordinate systems as previously established for discrete confocal quadrics \cite{bsstconfocali, bsstconfocalii}.

\subsection*{Results and context}

The most prominent classical discretization of conjugate parametrizations on smooth surfaces are \emph{discrete conjugate nets},
characterized by the property that all elementary quadrilaterals are planar \cite{sauerqnet}.
We introduce three types of conjugate nets on $\Z^N$ as follows.

Let $V_N, E_N, F_N$ denote the vertices, edges, and (2-dimensional) faces of $\Z^N$, respectively.  
Moreover, let $D_N = V_N \cup F_N$ denote the union of vertices and faces of $\Z^N$.  
We say that two vertices $v, v' \in V_N$ (resp. two faces $f, f' \in F_N$) are \emph{adjacent} if they share an edge, i.e., $(v, v') \in E_N$.  
A vertex $v \in V_N$ and a face $f \in F_N$ are called \emph{incident} if $v$ is a boundary vertex of $f$, and we write $v \sim f$.
We denote by $\RP^n$ the $n$-dimensional real projective space.

\begin{figure}[tb]
  \centering
  \raisebox{0.15\textwidth}{
    \begin{overpic}[width=0.19\textwidth]{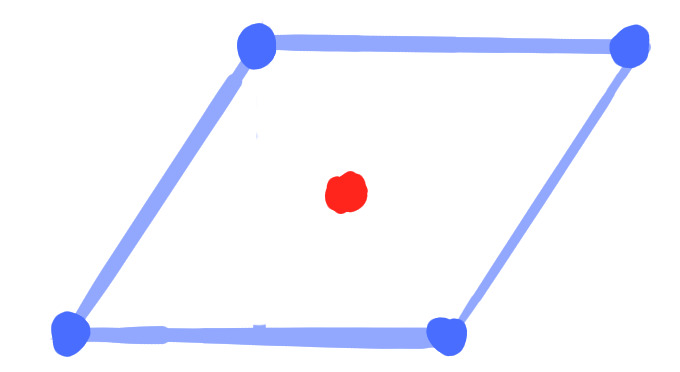}
      \put(53,22){$\color{red}f$}
    \end{overpic}
  }
  \begin{overpic}[width=0.38\textwidth]{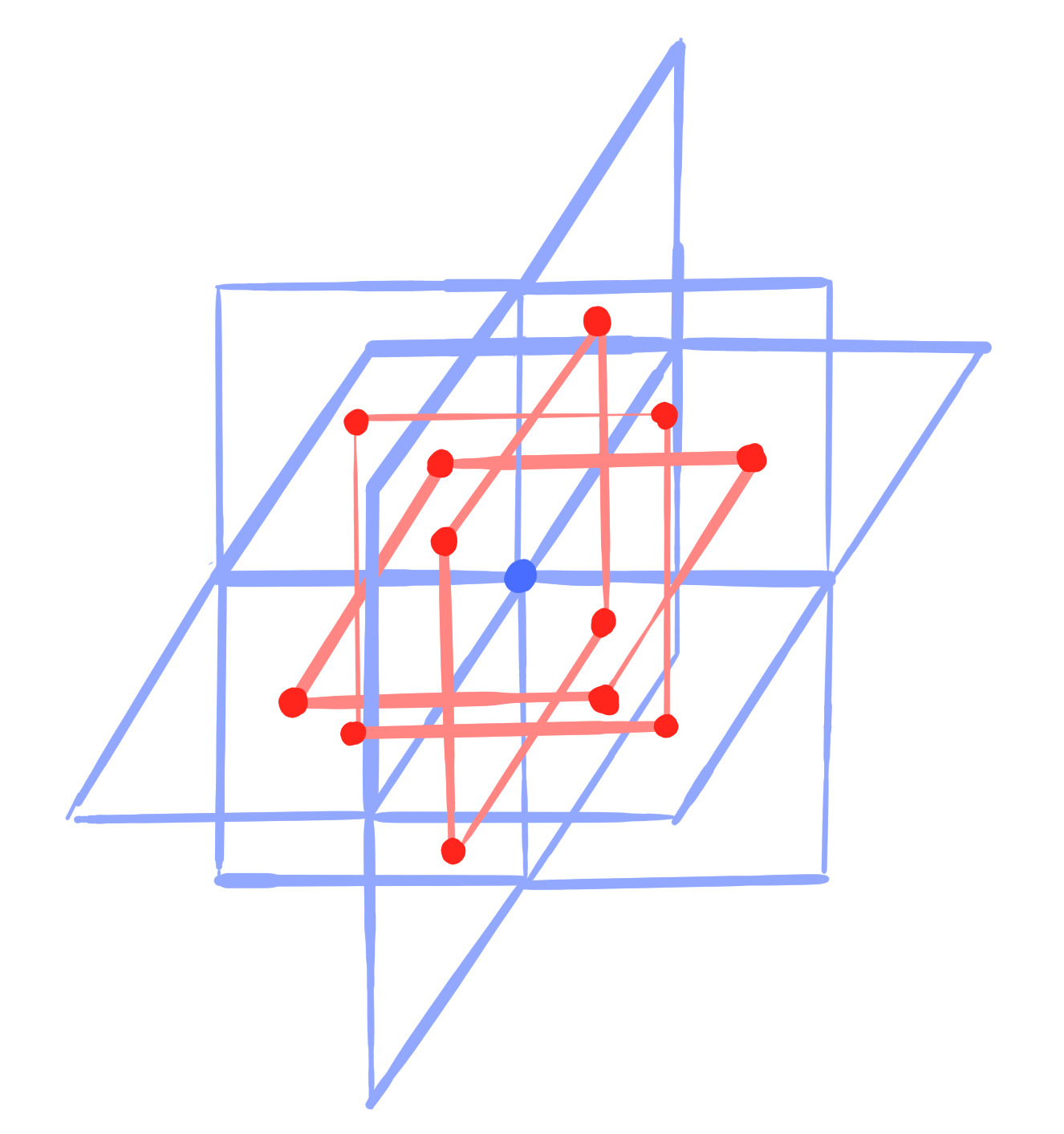}
    \put(49,50){$\color{blue}v$}
  \end{overpic}
  \caption{
    Combinatorics for the conditions of conjugate vertex-nets and conjugate face-nets in $\Z^3$.
    \emph{Left}: A face $f \in F_3$ has four incident vertices.
    The corresponding points of a conjugate vertex-net are coplanar.
    \emph{Right}: A vertex $v \in V_3$ has twelve incident faces.
    The corresponding points of a conjugate face-net are coplanar.
  }
  \label{fig:conjugate-face-nets}
\end{figure}
\begin{samepage}
\begin{definition}\label{def:conjugate}
  We define three types of conjugate nets as follows (cf.\ Figure~\ref{fig:conjugate-face-nets}).
  {
    \setlength{\abovedisplayskip}{6pt}
    \setlength{\belowdisplayskip}{6pt}
    \setlength{\abovedisplayshortskip}{4pt}
    \setlength{\belowdisplayshortskip}{4pt}
    \begin{enumerate}[noitemsep]
    \item
      A \emph{conjugate vertex-net} is a map $g: V_N \rightarrow \RP^n$ such that the span
      \begin{equation}
        \bigvee_{v \sim f} g(v) \quad
        \text{is a plane for every}~f\in F_N.
      \end{equation}
    \item
      \label{itm:conjugateface}
      A \emph{conjugate face-net} is a map $h: F_N \rightarrow \RP^n$ such that the span
      \begin{equation}
        \bigvee_{f \sim v} h(f) \quad
        \text{is a plane for every}~v\in V_N.
      \end{equation}
    \item
      A \emph{conjugate binet} is a map $b: D_N \rightarrow \RP^n$ such that the restriction to $V_N$ is a conjugate vertex-net and the restriction to $F_N$ is a conjugate face-net.
      \qedhere
    \end{enumerate}
  }
\end{definition}
\end{samepage}

Section~\ref{sec:conjugate-binets-preliminaries} discusses non-degeneracy and genericity conditions for all three types of conjugate nets.

The definition of conjugate vertex-nets coincides with that of \emph{quadrilateral lattices} in \cite{dsqnet} and \emph{Q-nets} in \cite{ddgbook}.
The definition of conjugate face-nets can be regarded as a natural higher-dimensional generalization of \emph{Q*-nets}, which were defined only on $\Z^2$ in \cite{ddgbook}.
For $N=2$, our definition of conjugate binets agrees with that in \cite{atbinets}.
In this case, an isomorphism exists that interchanges the roles of vertices $V_2$ and faces $F_2$ of $\Z^2$,
allowing a conjugate face-net on $F_2$ to be identified with a conjugate vertex-net.
Consequently, a conjugate binet on $D_2$ may be viewed as a pair of conjugate vertex-nets.
For $N>2$, however, no such bijection between $V_N$ and $F_N$ exists.
Therefore, conjugate face-nets are generally distinct from conjugate vertex-nets,
and thus we use the terms ``vertex-net'' and ``face-net'' to clearly distinguish between these combinatorially different cases in this paper.

However, we show that there is still a close correspondence between conjugate face-nets and conjugate vertex-nets.  
For $1 \leq i < j \leq N$, let $F^{ij}_N \subset F_N$ denote the (2-dimensional) $ij$-faces of $\Z^N$, that is, the faces spanned by the $i$-th and $j$-th unit vectors.  
Clearly, there is a natural bijection between $V_N$ and $F_N^{ij}$ for any $i,j$ (cf.\ Figure~\ref{fig:face-identification}).  
Thus, we may consider conjugate vertex-nets on $F_N^{ij}$.  
There is a one-to-one correspondence between conjugate face-nets on $F_N$ and conjugate vertex-nets on $F_N^{ij}$, which is established in Section~\ref{sec:conjugate-binets-correspondence} and formalized in the following theorem.  

\begin{figure}[tb]
  \centering
  \includegraphics[width=0.35\textwidth]{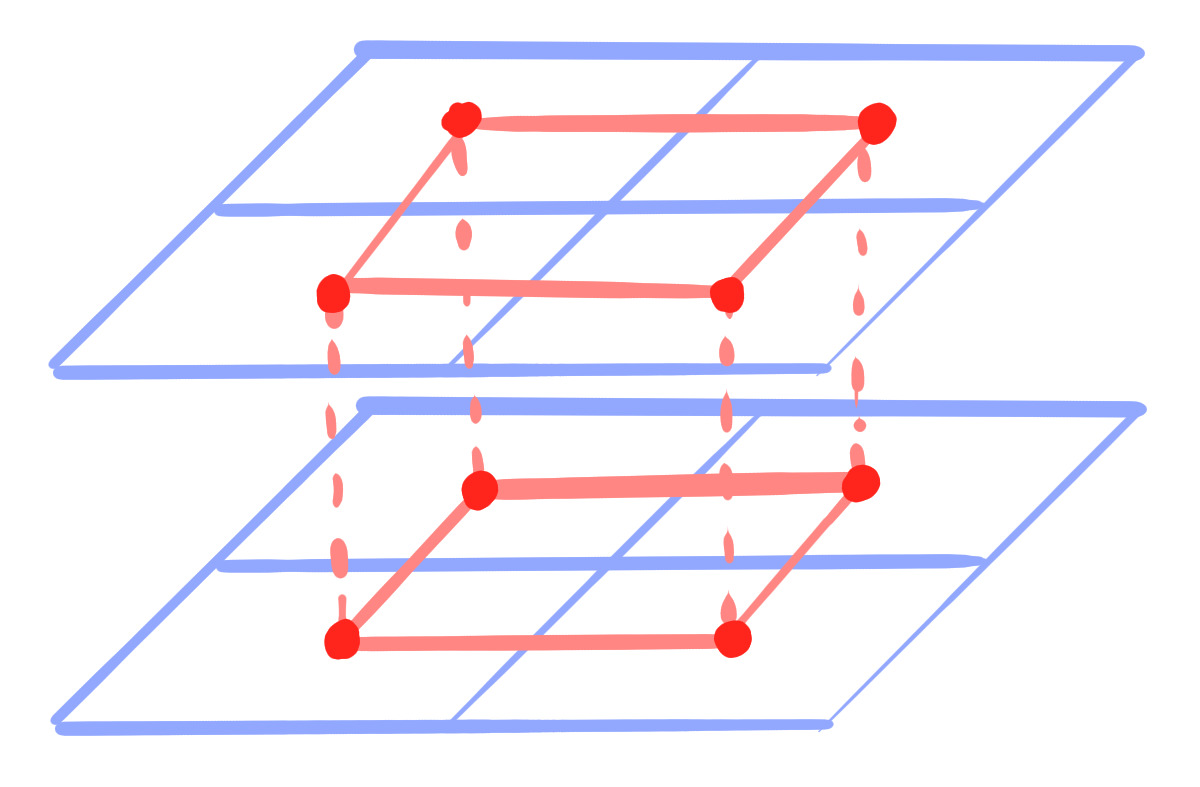}
  \caption{
    Identification of $F_3^{12}$  with $\Z^3$.
    In blue two layers of $12$-faces of $\Z^3$.
    In red $F_3^{12} \simeq \Z^3$.
    The dashed edges are the missing edges for the identification with $\Z^3$.
  }
  \label{fig:face-identification}
\end{figure}

\begin{theorem}\label{th:conjugatebijection}
  Let $1 \leq i < j \leq N$. 
  For every conjugate face-net defined on $F_N$ the restriction to $F_N^{ij}$ is a conjugate vertex-net.
  Conversely, every conjugate vertex-net defined on $F_N^{ij}$ is the restriction of a unique conjugate face-net defined on $F_N$.
\end{theorem}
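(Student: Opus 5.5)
The plan is to make the non-degeneracy conventions of Section~\ref{sec:conjugate-binets-preliminaries} do the work. For a conjugate face-net $h$, let $P_v$ denote the plane $\bigvee_{f\sim v} h(f)$. The first step is an edge-line lemma. For an edge $e=(v,v')$, all $2(N-1)$ faces containing $e$ are incident to both $v$ and $v'$, so their points lie in $P_v\cap P_{v'}$. Under non-degeneracy we have $P_v\neq P_{v'}$, so these points lie on a line $\ell_e$. Everything will be phrased in terms of these lines. I identify $F_N^{ij}$ with $\Z^N$ via $f_v\mapsto v$, where $f_v$ is the $ij$-face with lowest corner $v$. A $kl$-square of $F_N^{ij}$ then consists of $f_v, f_{v+e_k}, f_{v+e_l}, f_{v+e_k+e_l}$.

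For the first statement I split into three cases according to $|\{k,l\}\cap\{i,j\}|$. If $\{k,l\}=\{i,j\}$, all four faces are incident to the vertex $v+e_i+e_j$, so they lie in $P_{v+e_i+e_j}$. If $k=i$ and $l\notin\{i,j\}$, then $f_v$ and $f_{v+e_i}$ share the $j$-edge $e=(v+e_i,v+e_i+e_j)$, so both lie on $\ell_e$. Likewise $f_{v+e_l}$ and $f_{v+e_i+e_l}$ lie on $\ell_{e+e_l}$. The $jl$-face spanned by $e$ and $e+e_l$ lies on both lines, so the two lines meet and span a plane containing all four points. The case $k=j$ is symmetric.

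The remaining case $k,l\notin\{i,j\}$ is the one I expect to be the main obstacle. Here no single edge or vertex is shared by enough of the four faces. I would first try to express $h(f_{v+e_k})$ as the intersection of $\ell_{(v+e_k,v+e_k+e_i)}$ with $\ell_{(v+e_k,v+e_k+e_j)}$, and similarly for the other three faces. Then I would show that all four points lie in the span of two coplanar lines coming from the $ik$-, $il$-, $jk$- and $jl$-faces around the $4$-cube with directions $i,j,k,l$. If that fails, the fallback is to use the mixed case twice, once with $k$ and once with $l$. This makes the restriction of $h$ to directions $\{i,k,l\}$ a Q-net whose $ik$- and $il$-squares are planar. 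I would then argue that the $kl$-squares are planar because all $kl$-faces of that cube can be reconstructed via the lines $\ell$, and hence lie in a common $3$-space spanned by incident planes $P_w$, where a dimension count forces planarity.

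For the converse I construct the extension explicitly, and this construction also gives uniqueness. Let $g$ be the $ik$-face at $v$, with $k\neq i,j$. It contains the $i$-edges $(v,v+e_i)$ and $(v+e_k,v+e_k+e_i)$. By the edge-line lemma, $h(g)$ must lie on the line through $h(f_v)$ and $h(f_{v-e_j})$, and also on the line through $h(f_{v+e_k})$ and $h(f_{v+e_k-e_j})$. These two lines lie in a common plane by the $jk$-planarity of the given Q-net on $F_N^{ij}$, so they meet in a unique point, which defines $h(g)$. The same construction handles $jk$-faces. For $kl$-faces with $k,l\notin\{i,j\}$, I would apply it again using the $ik$-net just built; the resulting Q-net is independent of the choice of $i$ or $j$ by the preceding step. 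Finally, I would verify that each vertex plane $P_v$ is indeed a plane. This uses the coplanarities established in the first part together with the fact that all the lines $\ell_e$ for edges at $v$ pass through points of the $ij$-faces at $v$ and meet pairwise.
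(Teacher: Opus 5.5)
Your cases $\{k,l\}=\{i,j\}$ and $|\{k,l\}\cap\{i,j\}|=1$ are correct and are exactly the paper's arguments, and so is your construction of the points on $ik$- and $jk$-faces in the converse. The gap is the case $k,l\notin\{i,j\}$, in both directions, and this is where the real content of the theorem lies.

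\emph{Restriction direction.} Your first idea never says which two lines are meant or why they should be coplanar. The fallback rests on implications that are false. First, a 3-cube in $\RP^3$ can have all four of its $ik$- and $il$-faces planar without its $kl$-faces being planar: the point $x_{kl}$ can be chosen freely, and $x_{ikl}$ then lies on the line $(x_l\vee x_{il}\vee x_{kl})\cap(x_k\vee x_{ik}\vee x_{kl})$. Second, four points lying in one 3-space are not forced to be coplanar by any dimension count; you would need two distinct 3-spaces containing them. Also, the faces $f_{v+\epsilon}$, $\epsilon\in\{0,e_k,e_l,e_k+e_l\}$, have pairwise disjoint vertex sets. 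So no $P_w$, and no join $P_w\vee P_{w'}$, contains all four points by incidence.

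What is missing is a genuinely higher-dimensional count. One first passes to a lift as in Lemma~\ref{lem:dimfacenets}, which is unavoidable for $n=3$. In the lift, the planes over an edge, quad, 3-cube and 4-cube span exactly a 3-, 4-, 5- and 6-space. The paper then takes the eight 3-spaces $A_p=P_w\vee P_{w+e_l}$ for $w$ in the $ijk$-cube at $v$.
\begin{itemize}
\item Adjacent $A_p$ meet in planes.
\item The four $A_p$ over a quad of this cube meet in a line $B_q$.
\item The lines $h(f_v)\vee h(f_{v+e_l})$ and $h(f_{v+e_k})\vee h(f_{v+e_k+e_l})$ are the lines $B_q$ of the two opposite $ij$-quads.
\item Lines of adjacent quads lie in a common plane $A_p\cap A_{p'}$.
\item The three lines at a vertex $p$ concur, being three planes intersected inside the 3-space $A_p$.
\end{itemize}
From this one gets that the lines of opposite quads also meet, which is the desired planarity.

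\emph{Converse direction.} ``Applying the construction again to the $ik$-net'' requires the $il$-squares of that net to be planar. But the two lines involved are $P_w\cap P_{w+e_k}$ and $P_{w+e_l}\cap P_{w+e_k+e_l}$. Their meeting is literally the statement that these four planes share a point, which is exactly what has to be shown. Invoking the first part here is circular, since you do not yet have a conjugate face-net.

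The paper defines $\lceil h\rceil(f)=\bigcap_{w\sim f}P_w$ directly. This makes conjugacy automatic and removes your choice of $i$ versus $j$. It then proves non-emptiness by a count you are missing:
\begin{itemize}
\item The four planes are the planes of the four $ij$-squares of a single $ijkl$-cube of the given Q-net. That cube spans at most a 4-space, and exactly a 4-space after a lift as in Lemma~\ref{lem:dimvertexnets}.
\item The joins of adjacent planes are the spans of the $ijk$- and $ijl$-cubes, i.e.\ four hyperplanes of that 4-space. Four hyperplanes of a 4-space always share a point.
\item Each plane is the intersection of the two hyperplanes containing it. Hence the intersection of the four hyperplanes equals the intersection of the four planes, which is therefore non-empty.
\end{itemize}
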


Hence, to some extent, we may think of a conjugate binet as a pair of conjugate vertex-nets,
even for $N > 2$, up to the choice of a pair of directions $i,j$.  
Building on this correspondence, we extend the result on multi-dimensional consistency of conjugate vertex-nets from \cite{dsqnet} to conjugate face-nets, and thus conjugate binets, as summarized in the following theorem and established in Section~\ref{sec:conjugate-binets-consistency}.  

\begin{theorem}
  \label{th:conjugateconsistency}
  Conjugate vertex-nets, conjugate face-nets and conjugate binets are multi-di\-men\-sionally consistent 3D-systems.
\end{theorem}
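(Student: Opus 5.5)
For conjugate vertex-nets, multi-dimensional consistency is the classical result of \cite{dsqnet}, and I would quote it directly. Given generic values on seven vertices of a 3-cube, the eighth vertex is the intersection of the three planes through the three faces that meet at it. In $\RP^n$ these planes all lie in the 3-space spanned by four of the given points, so generically they meet in exactly one point. The 4D consistency follows from the usual argument. In a 4-cube, the three ways of computing a vertex $g_{1234}$ all give the point that is the intersection of four 3-spaces. The key fact is that the 3-spaces of the four 3-cubes through a common vertex $g_{123}$, etc., pairwise intersect in planes of the quadrilaterals. For $n\ge 4$ this is shown by projecting to 4-space, and for $n=3$ it is shown by lifting to $\RP^4$ and projecting back. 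I would take this over as given, after making precise what ``3D-system'' means, namely the initial data on the coordinate hyperplane/stair.

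For conjugate face-nets, I would use Theorem~\ref{th:conjugatebijection} as the main tool. Fix directions $i,j$. A conjugate face-net on $F_N$ is then equivalent to a conjugate vertex-net on $F_N^{ij}\simeq\Z^N$, whose lattice directions are the $N$ directions of $\Z^N$. The face-net on the 3-cube of directions $\{a,b,c\}$ corresponds to a vertex-net restricted to a finite box in $F_N^{ij}$. The box is built from the faces in the cube, together with the $ij$-faces determined via the bijection. I would therefore set up the 3D-system for face-nets as follows. The face values on five faces of a 3-cube (together with neighbouring data, as needed) determine the sixth face uniquely. I would establish this by transporting the problem to the vertex-net picture, applying the vertex-net 3D step there, and then reconstructing the unique face-net. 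Multi-dimensional consistency then follows because all $ij$-faces of $\Z^N$ form a vertex-net on $\Z^N$, which is consistent. Every other face value is determined uniquely from the $ij$-faces by Theorem~\ref{th:conjugatebijection}, so there is no room for two computations to disagree.

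A more direct alternative works inside one cube. The six face points $h$ of a 3-cube must satisfy coplanarity around each of the eight vertices, and each vertex is incident to three faces of the cube (plus faces in other directions). I would check that, given the relevant initial data, the missing face point is the intersection of planes spanned by already known points. In this picture the 4D consistency is the statement that different routes produce the same point. I would reduce that statement to the vertex-net case via the bijection rather than prove it by hand.

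Finally, for conjugate binets, the vertex part and the face part satisfy independent conditions in Definition~\ref{def:conjugate}. The binet system is therefore the product of the two systems, and its consistency follows from the two previous parts.

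The main obstacle is the face-net step. I need to specify correctly what the initial data for a 3D-system of face-nets are and to verify genericity. In particular, the reconstruction in Theorem~\ref{th:conjugatebijection} must be compatible with restriction to sub-cubes, meaning that the face-net generated on a 3-cube does not depend on the choice of $i,j$. I would first try to prove this locality by showing that the value on a non-$ij$ face is determined by the $ij$-faces of a bounded neighbourhood. From that, independence of the choice of $i,j$ should follow from uniqueness.
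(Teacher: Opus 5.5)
Your proposal is correct and follows the paper's proof: vertex-nets by citing Doliwa--Santini, face-nets by transporting that consistency through the bijection of Theorem~\ref{th:conjugatebijection}, and binets because the vertex and face conditions are independent. The obstacle you flag is milder than you fear. Locality is immediate from the formula $\lceil h\rceil(f)=\bigcap_{v\sim f}\square h(v)$ in Proposition~\ref{prop:conjugatetoconjugateface}, and independence of $i,j$ follows from uniqueness; the local step is also more cleanly phrased, as in Remark~\ref{rem:conjugate-face-net-propagation}, as seven vertex planes $\square h$ of a cube determining the eighth, rather than five faces determining a sixth.
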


The first classical discretization of principal (curvature line) parametrizations on smooth surfaces
are \emph{circular nets} \cite{bpisothermic, bobenkocircular},
which form a special class of conjugate vertex-nets.
Their multi-dimensional consistency (as a reduction of conjugate vertex-nets) was established in \cite{cdscircular} using Miquel's theorem.
Alternatively, one can observe that circular nets are the stereographic projections of conjugate vertex-nets inscribed in the Möbius quadric (a sphere in $\RP^{n+1}$) \cite{ddgbook}.
The result then follows from the more general theorem that conjugate vertex-nets in quadrics
are consistent reductions of conjugate vertex-nets \cite{doliwaqnetsquadrics}.

A novel discretization of principal parametrizations put forward in \cite{atbinets} are \emph{principal binets}.
These are conjugate binets with orthogonal pairs of dual edges.
Independently, and from a different perspective, a closely related discretization was introduced in \cite{dellingercbpatterns} under the name
\emph{principal checkerboard patterns}.
Similar to circular nets, principal binets can be realized as central projections of conjugate vertex-nets in $\RP^{n+1}$, called \emph{Möbius lifts}, in which incident vertices and faces are related by polarity with respect to the Möbius quadric.
Möbius lifts of principal binets are special cases of \emph{polar conjugate binets}, which generalize conjugate vertex-nets in quadrics.
In analogy with the relation between circular nets and conjugate vertex-nets in quadrics,
our proof of the multi-dimensional consistency of principal binets
is based on the multi-dimensional consistency of polar conjugate binets.

\begin{definition}
  \label{def:polar-conjugate}
  Let $\mathcal Q$ be a quadric in $\RP^n$.
  A \emph{polar conjugate binet} $b: D_N \rightarrow \RP^n$
  is a conjugate binet such that for every pair of incident $v \in V_N$ and $f \in F_N$
  the points $b(v)$ and $b(f)$ are related by polarity with respect to $\mathcal Q$.
\end{definition}

For $N=2$ the definition of conjugate polar binets coincides with that given in \cite{atbinets}. 
In Section~\ref{sec:polarbinets}, we prove the following theorem, which generalizes the corresponding result for conjugate vertex-nets in quadrics \cite{doliwaqnetsquadrics}.

\begin{theorem}
  \label{th:polarconsistency}
  Polar conjugate binets are a consistent reduction of conjugate binets.
\end{theorem}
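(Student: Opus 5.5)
The plan is to prove the statement locally on a single elementary cube of $\Z^3$. Then, as usual for reductions, I would lift it to $\Z^N$ using the multi-dimensional consistency of conjugate binets (Theorem~\ref{th:conjugateconsistency}).

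Let $\sca{\cdot,\cdot}$ be the symmetric bilinear form defining $\mathcal Q$. The polarity condition for an incident pair $v\sim f$ then reads $\sca{b(v),b(f)}=0$, i.e.\ $b(f)$ lies in the polar hyperplane $b(v)^\pol$. What must be shown is this: if the initial data of the 3D system for conjugate binets on a cube (and its neighbourhood) satisfies all polarity relations, then every newly constructed vertex and face value satisfies all polarity relations with its incident partners. Consistency in higher dimensions is then inherited. The constructed values are unique by Theorem~\ref{th:conjugateconsistency}, so no new compatibility issue arises in a 4-cube.

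The whole argument rests on one elementary observation, which follows from linearity of $\sca{\cdot,\cdot}$: the set of points conjugate to a fixed point $x$ is a projective subspace. Consequently:
\begin{enumerate}
\item if a point lies in the span of points that are all conjugate to $x$, it is itself conjugate to $x$;
\item dually, if $y$ is conjugate to three non-collinear points of a plane, it is conjugate to every point of that plane.
\end{enumerate}
I would use (ii) for vertex-type conditions and (i) for face-type conditions.

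\textbf{Step 1 (new faces against old vertices).} Take a new face value, say $b(f)$ for the top $12$-face $f$ with vertices $v_3,v_{13},v_{23},v_{123}$, and an old vertex $v\sim f$, say $v_3$. By the conjugate face-net condition, $b(f)$ lies in the plane $\bigvee_{f'\sim v_3} b(f')$. I would show that, under the non-degeneracy assumptions of Section~\ref{sec:conjugate-binets-preliminaries}, this plane is already spanned by three non-collinear \emph{old} face values incident to $v_3$. Candidates are the $13$- and $23$-faces at $v_3$ in the cube, together with a $12$-face in the layer $x_3=1$ that lies outside the cube and belongs to the initial data. All of these are conjugate to $b(v_3)$ by hypothesis, so (i) gives $\sca{b(v_3),b(f)}=0$. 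The same argument handles $v_{13}$, $v_{23}$, and the analogous vertices of the other two new faces.

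\textbf{Step 2 (new vertex).} The new vertex $b(v_{123})$ lies in the plane of each new face quadrilateral, e.g.\ in $b(v_3)\vee b(v_{13})\vee b(v_{23})$. By Step 1, every new face $f\ni v_{123}$ has $b(f)$ conjugate to these three old vertex values, so (ii) gives $\sca{b(v_{123}),b(f)}=0$. The remaining new incidences between $v_{123}$ and faces outside the cube are treated the same way, by evolving neighbouring cubes.

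\textbf{Main obstacle.} I expect Step 1 to be the hardest part: justifying that each plane $\bigvee_{f'\sim v}b(f')$ is spanned by old, already-polar face values, i.e.\ understanding precisely which face values are initial data in the 3D evolution of conjugate face-nets.

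To settle this I would first try the correspondence of Theorem~\ref{th:conjugatebijection}. It reduces the face-net evolution to the vertex-net evolution on $F_3^{12}\simeq\Z^3$. There, the determined point is the intersection of three planes, each spanned by three given points. I would then read off directly that each new face value lies in the span of already-determined faces around each old incident vertex.

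If genericity fails, for example because the three old faces are collinear, I would restrict to generic data and conclude by continuity, since polarity is a closed condition.
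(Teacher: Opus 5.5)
Your proposal is essentially the paper's proof. Step~1 is its argument $\square b(f^{12}_3)\subset b(f^{12}_3)^\pol$, namely $b(f^{12}_3)\in\square b(v_3)\subset b(v_3)^\pol$ at the three old vertices, and Step~2 is its argument at $v_{123}$, both written pointwise instead of via planes. The obstacle you flag disappears if, as in the paper, the planes $\square b(v)$ are propagated as data (Remark~\ref{rem:conjugate-face-net-propagation}) and polarity at a vertex is phrased as $\square b(v)\subset b(v)^\pol$. This is then part of the inductive hypothesis:
\begin{itemize}
\item On the coordinate planes it holds because $\square b(v)$ is spanned by incident initial faces. For the corner cube these are the $13$- and $23$-faces through $v_3$; your outside $12$-face at $v_3$ does not exist there.
\item At each new vertex it is re-established by Step~2.
\end{itemize}
Non-collinearity of any three faces at a vertex is already guaranteed by Definition~\ref{def:non-degeneracy}, so no continuity argument is needed.
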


\begin{figure}[t]
  \centering
  \includegraphics[width=0.3\textwidth, trim={0 90 0 20}, clip]{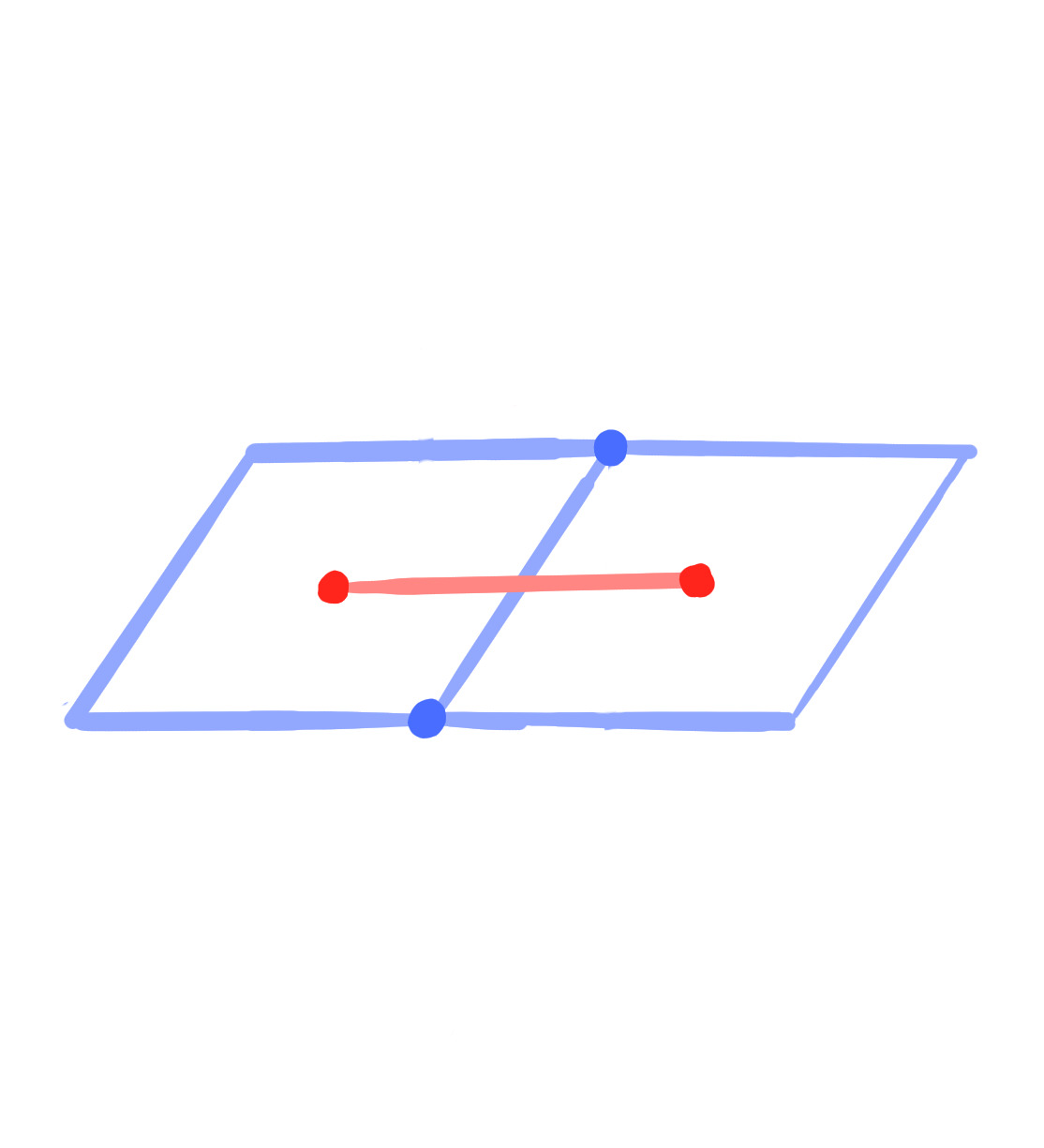}
  \includegraphics[width=0.3\textwidth, trim={0 90 0 20}, clip]{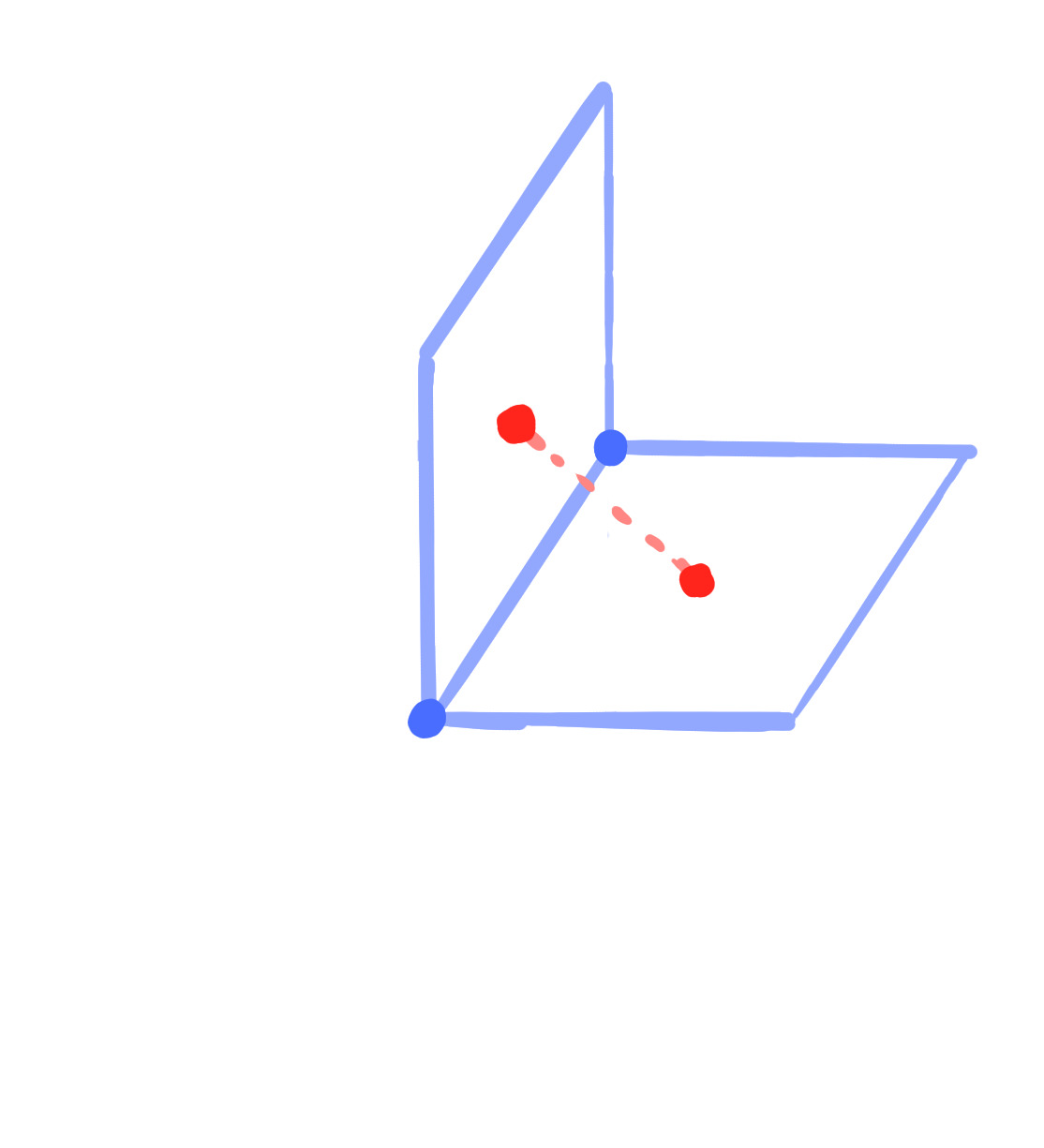}
  \caption{
    The two types of crosses in $\Z^3$.
    This amounts to a total of six crosses per edge of $\Z^3$.}
  \label{fig:crosses-types}
\end{figure}

Let $C_N$ denote the set of \emph{crosses} of $\Z^N$, defined as tuples consisting of two vertices and two faces such that both vertices are incident to both faces (cf.\ Figure~\ref{fig:crosses-types}):
\begin{equation}
  C_N \coloneqq \set{(v,f,v',f')}{v, v' \in V_N, \ f, f' \in F_N, \ v, v' \sim f, f'}.
\end{equation}

\begin{definition}
  \label{def:principal-binet}
  A \emph{principal binet} $b: D_N \rightarrow \R^n$ is a conjugate binet
  such that for every cross $(v,f,v',f') \in C_N$ the two lines $b(v) \vee b(v')$ and $b(f) \vee b(f')$ are orthogonal.
\end{definition}

For $N=2$, the definition of principal binets coincides with that given in \cite{atbinets}. 
Our main result is the following.
It may be viewed as a generalization of the corresponding result for circular nets \cite{cdscircular}.

\begin{theorem}\label{th:principalconsistency}
    Principal binets are a consistent reduction of conjugate binets.
\end{theorem}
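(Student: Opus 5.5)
We reduce Theorem~\ref{th:principalconsistency} to Theorem~\ref{th:polarconsistency} via Möbius lifts, in analogy with deriving circular nets from conjugate vertex-nets in the Möbius quadric. Embed $\R^n$ into $\RP^{n+1}$ with the Möbius quadric $\mathcal S$, and let $\infty$ denote the point of $\mathcal S$ that is the pole of the hyperplane at infinity in the relevant affine chart. The plan is to first establish, for general $N$, the lifting statement that $\cite{atbinets}$ proves for $N=2$. A map $b: D_N \to \R^n$ is a principal binet if and only if it is the central projection from $\infty$ of a polar conjugate binet $\hat b: D_N \to \RP^{n+1}$ with respect to $\mathcal S$. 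Moreover, such a lift is unique up to a one-parameter family, for instance a global choice of the lift of one vertex along its projection line.

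The local ingredient concerns a single cross $(v,f,v',f')$. For points $\hat b(v), \hat b(v')$ and $\hat b(f), \hat b(f')$ with both vertices polar to both faces, the line $\hat b(v)\vee\hat b(v')$ is polar to the codimension-two space orthogonal to $\hat b(f)\vee \hat b(f')$. Projecting from $\infty$ turns this polarity into Euclidean orthogonality of the projected lines. Conversely, given orthogonality, the lifts can be propagated along crosses. Since conjugacy of the vertex- and face-nets is preserved by central projection, and since the lift of a planar configuration can be chosen planar, the equivalence reduces to these local statements together with a compatibility check around each elementary cube. That check holds because each face has only the four vertices of its square, and the face point is forced to be the pole of the plane they span.

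With the lifting correspondence in hand, consistency follows formally. Take principal binet initial data on $D_N$ restricted to the coordinate planes, or to whatever initial domain the consistency notion of Theorem~\ref{th:conjugateconsistency} uses. Lift it to polar conjugate initial data, and propagate by Theorem~\ref{th:polarconsistency} to a polar conjugate binet on all of $D_N$. Projecting back gives a conjugate binet, since projection preserves coplanarity. By the lifting lemma this binet is principal. The remaining point is that this principal binet coincides with the conjugate binet obtained by propagating the projected data directly. This holds because the 3D conjugate propagation rule commutes with central projection: planes through projected points are projections of planes through lifted points, as long as genericity ensures the relevant spans do not contain $\infty$. So the principal condition is preserved under conjugate propagation, which is exactly what a consistent reduction requires.

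The main obstacle is the lifting lemma for $N\ge 3$, specifically the existence of a lift of given principal data that is globally compatible with polarity on all crosses, and not merely cross by cross. I would first try to fix the lift on the vertex-net alone: the lifted vertices must lie on $\mathcal S$, since a vertex lies on the polar hyperplanes of several faces which themselves are polar to it. Such a lift is essentially the inverse stereographic projection with freedom only in scaling. I would then define face lifts as poles of the planes spanned by the lifted incident vertices. The orthogonality condition should then show these poles project to $b(f)$. If the vertex points do not lie on $\mathcal S$ in the paper's setup, the fallback is to prove that the lift of one vertex determines everything by a cross-by-cross propagation, and to verify closure around a 3-cube using Theorem~\ref{th:polarconsistency} itself.
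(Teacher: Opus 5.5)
Your reduction is the paper's proof. You lift the principal initial data to a polar conjugate binet and extend it by Theorem~\ref{th:polarconsistency}. You then note that its projection coincides with the conjugate propagation of the original data, since both are conjugate binets with the same initial data. Finally, you conclude with the local fact that projections of polar conjugate binets are principal (Lemma~\ref{lem:moebius-lifts}~\ref{lem:moebius-lifts-projection}). That part is fine.

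What you call the main obstacle is mishandled by your primary plan, in two places.

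\textbf{Vertex lifts on the quadric.} The lifted vertices need not lie on $\mathcal S$. Polarity between $\hat b(v)$ and the points $\hat b(f)$ is a symmetric relation and says nothing about $\hat b(v)$ being self-polar. Forcing the vertex lifts onto the quadric would make every vertex-net circular (orthogonal spheres of radius zero, cf.\ Remark~\ref{rem:orthogonal-sphere-representation}), which is only a special case of principal binets. It would also leave no freedom at all: the projection line of $b(v)$ meets $\mathcal S$ in only one point besides the center. That contradicts the one-parameter family you correctly state at the outset.

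\textbf{Face points as poles.} The polar of a plane in $\RP^{n+1}$ has dimension $n-2$. So for $n\ge 3$ the face point is not determined as ``the pole'' of the plane of its four vertices. It is only confined to $(\square \hat b(f))^\pol$, which for $n=3$ is a line projecting to the line through $b(f)$ orthogonal to the vertex quad. The face-net therefore carries independent data, and your cube compatibility argument is invalid.

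Your fallback is the correct mechanism, and it is what Lemma~\ref{lem:moebius-lifts} rests on. The lifts of each point form a one-parameter family on its projection line, and each vertex--face incidence imposes one linear polarity condition. On a cross $(v,f,v',f')$, the two conditions on $\hat b(v')$ coming from $\hat b(f)$ and $\hat b(f')$ agree exactly when $b(v)\vee b(v')$ and $b(f)\vee b(f')$ are orthogonal.

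Moreover, you never need a global lifting lemma on all of $D_N$. It suffices to lift the initial data, including the mixed crosses at the corner where the coordinate planes meet; Remark~\ref{rem:moebius-lifts-combinatorics} supplies exactly this. The extension of the lift then comes from Theorem~\ref{th:polarconsistency}, and the converse direction is purely local. With that, your ``main obstacle'' disappears.
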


The proof of this theorem is given in Section~\ref{sec:principal-binets},
and is based on a $1\!:\!\R$ correspondence between principal binets and their Möbius lifts,
which are polar conjugate binets.

A pair of conjugate vertex-nets can be extended to a conjugate binet in two ways,
depending on which of the two nets is viewed as being defined on $V_N$ or on $F_N^{ij}$, respectively.
Principal binets exhibit a surprising symmetry with respect to this interchange,
which we record in Theorem~\ref{th:principalsymmetry}.

Circular nets constitute a special case of principal binets \cite{atbinets}.
This remains true in arbitrary dimension, and it turns out that principal binets exhibit similar geometric structures to circular nets, that allow for interpretation in terms of \emph{Ribaucour transformations} (Section~\ref{sec:circular-nets}).

As laid out in \cite{atbinets}, principal binets can be viewed not only as a generalization of circular nets,
but, in the case $n=3$, also of \emph{conical nets} \cite{lpwywconical}, another discretization of principal parametrizations.
In particular, analogously to circular nets, which are a consistent reduction of conjugate vertex-nets, conical nets may be regarded as a consistent reduction of conjugate face-nets, as stated in Theorem~\ref{thm:conical-nets-as-face-nets} (Section~\ref{sec:conical-nets}).

A unifying perspective on circular and conical nets has been presented in terms of \emph{principal contact element nets} \cite{bsorganizing},
which correspond to \emph{isotropic line complexes} in the Lie quadric.
The Lie geometric description of principal binets naturally gives rise to the more general notion of \emph{polar line bicomplexes} \cite{atbinets}.
We will use known results on line complexes to establish the multi-dimensional consistency of polar line bicomplexes, providing an alternative proof of the multi-dimensional consistency of principal binets, at least in the case $n=3$ (Section~\ref{sec:principal-contact-element-nets}).

One of the motivating examples for the introduction of principal binets in \cite{atbinets} are the
\emph{discrete confocal quadrics}, introduced in \cite{bsstconfocali, bsstconfocalii}
(see Figure~\ref{fig:discrete-confocal}, left).
However, for $N>2$, the combinatorics underlying discrete confocal quadrics differs from those of principal binets.
In particular, for $N=3$, the points of a \emph{discrete orthogonal coordinate system} used in the construction of
discrete confocal quadrics are defined on the vertices and elementary cubes of $\Z^3$,
rather than on the vertices and faces, as is the case for principal binets.
In Section~\ref{sec:toc}, we establish a direct correspondence between principal binets
and discrete orthogonal coordinate systems, which is formalized in Theorem~\ref{thm:binets-docs}.

Extending the definition of a principal binet from $\Z^2$ by a third coordinate direction allows us to view it as either a sequence of Ribaucour transformations of a principal parametrization or as an extension to an orthogonal coordinate system.
In the latter interpretation, and when compared with the smooth theory, we conclude that a natural interpretation is to regard only a distinguished subset of points of a principal binet as discretizations of points of a smooth orthogonal coordinate system.
The remaining points can then be viewed as discretizations of certain focal points of that system.
In Section~\ref{sec:smooth-vs-discrete}, we show that the orthogonality conditions imposed on principal binets correspond to those of a smooth orthogonal coordinate system.
As it turns out, also in the smooth setting, an orthogonal coordinate system can be characterized by imposing suitable orthogonality conditions on the focal points of a triply conjugate system (Theorem~\ref{thm:tos-from-focal-points}).

\subsection*{Acknowledgments}
Both authors were supported by the Deutsche Forschungsgemeinschaft (DFG) Collaborative Research Center TRR 109 ``Discretization in Geometry and Dynamics''.
The images in this paper were created with \texttt{Krita}, \texttt{GeoGebra}, and \texttt{Blender}.

\section{Conjugate binets}
\label{sec:conjugate-binets}

\subsection{Technical Preliminaries}
\label{sec:conjugate-binets-preliminaries}

We have introduced \emph{conjugate vertex-nets}, \emph{conjugate face-nets}, and \emph{conjugate binets} in Definition~\ref{def:conjugate} as maps to $\RP^n$.
Throughout the paper, we assume that $n \geq 3$, so that the constraints defining the three kinds of conjugate nets are meaningful.

We further assume that all conjugate nets are \emph{non-degenerate} in the following sense.
\begin{definition}\
  \label{def:non-degeneracy}
  \nobreakpar
  \begin{enumerate}
  \item
    \label{def:non-degeneracy-vertex-net}
    A conjugate vertex-net is \emph{non-degenerate} if the image of any three vertices of a quad spans a plane.
  \item
    \label{def:non-degeneracy-face-net}
    A conjugate face-net is \emph{non-degenerate} if the image of any three faces incident with a vertex spans a plane,
    and if the planes associated in this way with any three vertices of a quad intersect in exactly one point.
    \qedhere
  \end{enumerate}
\end{definition}

Since a conjugate binet is the union of a conjugate vertex-net and a conjugate face-net,
we may apply these and all subsequent notions introduced in this section directly to conjugate binets as well.
Under the assumption of non-degeneracy, we introduce the following notation
for the planes occurring in conjugate nets.
\begin{definition}\
  \nobreakpar
  \label{def:planes}
  \begin{enumerate}
  \item
    Let $g : V_N \rightarrow \RP^n$ be a conjugate vertex-net.
    Then for every face $f \in F_N$ we denote the plane spanned by the images of the incident vertices by
    \begin{align}
      \square g(f) \coloneqq \bigvee_{v \sim f} g(v).
    \end{align}
  \item
    \label{def:planes-face-net}
    Let $h : F_N \rightarrow \RP^n$ be a conjugate face-net.
    Then for every vertex $v \in V_N$ we denote the plane spanned by the images of the incident faces by
    \begin{align}
      \square h(v) \coloneqq \bigvee_{f \sim v} h(f).  
    \end{align}
      \qedhere
  \end{enumerate}
\end{definition}

\begin{remark}
  \label{rem:conjugate-face-net-planes}
  Instead of the points of a conjugate face-net,
  one may regard the planes defined on vertices as the primary objects.
  This leads to a map
  \begin{equation}
    \square h : V_N \rightarrow \mathrm{Planes}(\RP^n),
  \end{equation}
  introduced as a \emph{conjugate *net} in \cite{atbinets}.
  In this formulation, the non-degeneracy condition in
  Definition~\ref{def:non-degeneracy}~\ref{def:non-degeneracy-face-net}
  reduces to the requirement that the planes associated with any three vertices of a quad
  intersect in exactly one point, so that the conditions
  \ref{def:non-degeneracy-vertex-net} and \ref{def:non-degeneracy-face-net}
  become fully symmetric.
  Moreover, in this description, a conjugate binet is a pair consisting of a conjugate vertex-net
  and a conjugate *net, both defined on the vertices of $\Z^N$.
\end{remark}

To ensure that our proofs apply to $\RP^n$ with $n \geq 3$,
we require conjugate nets to satisfy certain additional genericity conditions.
When these conditions are not satisfied by a given net,
they are still satisfied by a \emph{lift} to a higher-dimensional projective space,
and we will see that our proofs are invariant under replacing a conjugate net by such a lift.
\begin{definition}\
  \nobreakpar
  \begin{enumerate}
  \item
    Let $g : V_N \rightarrow \RP^n$ be a conjugate vertex-net.
    A map $\hat g : V_N \rightarrow \RP^{n+m}$ is called a \emph{lift} of $g$
    if $\hat g$ is a conjugate vertex-net and there exists a central projection
    $\pi : \RP^{n+m} \setminus C \rightarrow \RP^n$ with center $C$ such that
    \begin{equation}
      \pi \circ \hat g = g,
    \end{equation}
    where $\RP^n$ is embedded into $\RP^{n+m}$ as a projective subspace,
    $C \subset \RP^{n+m}$ is a projective subspace of dimension $m-1$ disjoint from $\RP^n$,
    and $\pi$ is defined by
    \begin{equation}
      \pi : X \mapsto (X \vee C) \cap \RP^n.
    \end{equation}
  \item
    The lift of a conjugate face-net is defined analogously.
    \qedhere
  \end{enumerate}
\end{definition}

\begin{definition}
  We call a subset $A \subset \Z^N$ a \emph{block} if it is of the form
  \begin{align}
    A = \Z^N \cap \bigotimes_{i=1}^N [0, a_i],
  \end{align}
  for some $a_1, \dots, a_N \in \Z$.
  We denote the sum of side lengths of a block $A$ by $\delta(A) \coloneqq \sum_{i=1}^N a_i$.
  \qedhere
\end{definition}

\begin{lemma}
  \label{lem:dimvertexnets}
  Let $g : A \rightarrow \RP^n$ be the restriction of a conjugate vertex-net to a block $A \subset V_N$.
  Then $g$ admits a lift $\hat g$ such that the dimension spanned by the image of $\hat g$
  on every block $B \subset A$ is equal to $\delta(B)$.
\end{lemma}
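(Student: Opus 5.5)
Since a block spans dimension at most $\delta(B)$, I read the statement as asking for a lift in which this upper bound is attained on every sub-block simultaneously. The plan is an explicit construction by induction on the vertices of $A$, ordered so that the lift is built one point at a time. First I record the upper bound. In a conjugate vertex-net, the image of a block $B$ is spanned by the images of the $\delta(B)+1$ vertices on the coordinate axes through the corner $0$ of $B$, and so has dimension at most $\delta(B)$. Indeed, every other vertex $v$ of $B$ lies on some face whose other three vertices come earlier in a lexicographic (or total-degree) order. By planarity of that face and non-degeneracy, $g(v)$ lies in the plane spanned by those three points. Induction then shows $g(v)$ lies in the span of the axis points.

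Next I construct $\hat g$. Take $m$ large, say $m=\delta(A)$, and embed $\RP^n\subset\RP^{n+m}$ with a complementary center $C$ of dimension $m-1$. On the axis vertices of $A$ (the $\delta(A)+1$ vertices $k e_i$, $0\le k\le a_i$) I define $\hat g$ as points in general position projecting to $g$. This is possible: $\hat g(x)$ must lie on the $m$-dimensional space $g(x)\vee C$, and it can be chosen freely there, so in the total space of dimension $n+m\ge \delta(A)$ I can pick them projectively independent. The remaining vertices are defined by the same induction as before. For $v$ not on an axis, pick a face $f$ with $v$ as its maximal vertex. Then $\hat g(v)$ is determined as the intersection of the plane spanned by the three earlier lifted points with the line-or-space $g(v)\vee C$. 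Since the three earlier lifts project to three points spanning $\square g(f)$, the plane they span projects isomorphically onto $\square g(f)$, which contains $g(v)$. So there is a unique preimage point, and $\pi\circ\hat g=g$.

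The main obstacle is well-definedness and conjugacy of $\hat g$. Vertex $v$ may be the top vertex of several faces (one for each pair of nonzero coordinates in which it is not minimal), and the construction must give the same point and make all faces planar, including faces where $v$ is not the top. I plan to handle this by the classical multi-dimensional consistency of conjugate vertex-nets (Theorem~\ref{th:conjugateconsistency} for vertex-nets, i.e.\ the result of \cite{dsqnet}). That result says Q-nets are determined by their data on the coordinate axes, with consistent 3D cube steps. Applying it in $\RP^{n+m}$ to the axis data $\hat g|_{\text{axes}}$ yields a unique conjugate vertex-net $\hat g$ on $A$. Projection commutes with the construction (joins and intersections of planes through points transversal to $C$), and $g$ is likewise the unique Q-net with its axis data. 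Hence $\pi\circ\hat g=g$, with genericity of the chosen axis lifts guaranteeing the non-degeneracy needed at each step.

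Finally, the dimension count. For a sub-block $B$ with corner $c$, I again apply the upper-bound argument based at $c$ rather than $0$, giving $\dim\ge$ as follows. The span of $\hat g(B)$ contains the points $\hat g$ on the axes of $B$ through $c$, and I must show these $\delta(B)+1$ points are independent. For $c=0$ this holds by construction. For general $c$, I argue by induction on $|c|$ that moving the corner by $e_i$ preserves independence. The points on the new axes are obtained via planes of faces, and each new point contributes a direction not in the span of the others precisely because the original axis points of $A$ were in general position. I expect this last independence step to need the most care. If it resists, I would instead impose a stronger general-position condition directly: choose the axis lifts so that all $\delta(A)+1$ of them are independent, and argue that the span of $\hat g(B)$ is $\delta(B)$-dimensional by expressing it as a generic section.
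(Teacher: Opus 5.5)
You follow the paper's outline, and the paper's proof consists of exactly these two observations: lift the axis points of $A$ to $\delta(A)+1$ independent points, and note that every other point lies in their span. The gap is in the step you flag yourself, namely that the lift extends off the axes to a conjugate vertex-net. You settle it by asserting that Q-nets are determined by their data on the coordinate axes, so that $\hat g$ and $g$ are the unique conjugate vertex-nets with their axis data. This is false. Conjugate vertex-nets are a 3D-system whose Cauchy data live on the two-dimensional coordinate planes, and already the fourth point of a single quad can be any point of the plane through the other three. So neither uniqueness claim holds. The claim that ``projection commutes with the construction'' is also unjustified, since it holds only where the relevant 3-space misses $C$. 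Consequently neither the well-definedness of $\hat g$ nor $\pi\circ\hat g=g$ follows.

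What does work is a local argument on each 3-cube, pinning points by $g$ as your face-by-face construction already does. Let $v$ be the top vertex of a 3-cube with bottom vertex $c$. Assume the seven lower lifted points are built and the lower lifted quads are planar; then they lie in the span $H$ of the lifts of $c$ and its three neighbours. If the images of these four vertices under $g$ span a 3-space, then $\dim H=3$, $H\cap C=\emptyset$, and $\pi|_H$ is an isomorphism onto its image. Each upper lifted plane is therefore the preimage of the corresponding plane of $g$, which contains $g(v)$. Hence every face with top vertex $v$ yields the same point, namely the unique point of $H$ over $g(v)$; faces not sharing a 3-cube are compared through intermediate ones. All faces are then planar by construction.

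This hypothesis cannot be dropped. Suppose the image of a 3-cube under $g$ is flat, which the quad-level non-degeneracy of Definition~\ref{def:non-degeneracy} allows. Take an affine chart in which the projection of the lifted cube's span is vertical. After identifying spans compatibly with $\pi$, any two spatial lifts of $c$ and its neighbours differ by a map $z\mapsto sz+\ell(x)$ with $\ell$ affine. Such a map preserves conjugacy and commutes with $\pi$. Hence every spatial conjugate lift projects the top vertex to one fixed point determined by the other seven. So if $A$ is a single such cube with generic $g(v)$, no lift as in the lemma exists. The paper's short proof does not address this either, so an extra genericity assumption (no flat 3-cubes) is needed.

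Finally, your last paragraph is unnecessary. Blocks contain the origin by definition, so the axis points of $B\subset A$ are among the independent axis points of $A$, and $\dim\ge\delta(B)$ is immediate. Your induction over corners $c$ addresses a stronger claim and is not actually carried out.
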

\begin{proof}
  By the non-degeneracy assumption on conjugate nets,
  the span of the images of two adjacent points is one-dimensional.
  Using this, we may choose $\hat g$ such that the stated condition holds
  for all subsets $B \subset A$ with all but one side length equal to zero,
  i.e., along the coordinate axes of $\Z^N$.
  This implies that the image of $\hat g$ on any block $B \subset A$
  spans a space of at least dimension $\delta(B)$.
  Since all remaining points are contained in the span of the coordinate axes,
  equality follows.
\end{proof}

\begin{lemma}
  \label{lem:dimfacenets}
  Let $\square h : A \rightarrow \mathrm{Planes}(\RP^n)$ be the restriction of the planes
  of a conjugate face-net to a block $A \subset V_N$.
  Then $\square h$ admits a lift $\square \hat h$ such that the dimension spanned by the image
  of $\square \hat h$ on every block $B \subset A$ is equal to $2 + \delta(B)$.
\end{lemma}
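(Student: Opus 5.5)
We mirror the proof of Lemma~\ref{lem:dimvertexnets}, working with the planes $\square h(v)$ instead of the points $g(v)$. The plan is to first arrange the correct dimension count along the coordinate axes of $A$, then show that every other plane is already contained in the span of the axis planes, which caps the dimension from above.

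\textbf{Structure along a line.} Let $v, v+e_i$ be adjacent vertices. The two faces containing the edge $(v,v+e_i)$ in directions $i,j$ and $i,k$, together with further faces incident to both vertices, lie in both $\square h(v)$ and $\square h(v+e_i)$. For $N\ge 2$ there is at least one face containing the edge, so the two planes share a point. Generically, two adjacent planes span a $3$-space and meet in a line; we expect this to hold after lifting. In the $*$-net picture of Remark~\ref{rem:conjugate-face-net-planes}, non-degeneracy says that three planes of a quad meet in exactly one point. This forces adjacent planes to be distinct, and hence their span has dimension at least $3 = 2+1$.

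\textbf{Base case on the axes.} Using the lift freedom, we choose $\square\hat h$ so that for every block $B\subset A$ with all but one side length zero, the span of the $a+1$ planes along that segment has dimension exactly $2+a$. That is, each new plane along an axis adds exactly one new dimension. This is possible because each new plane already shares at least a line with its predecessor, and a lift can push the remaining freedom into new directions. Doing this simultaneously for all axes emanating from the corner of $A$, and making the axes independent apart from the common plane $\square\hat h(0)$, gives at least dimension $2+\delta(B)$ for blocks $B$ containing the corner. The same then follows for arbitrary sub-blocks by restricting.

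\textbf{Upper bound, the main obstacle.} The remaining step is to show that every plane $\square h(v)$ lies in the span of the axis planes. We argue by induction over quads. Given three planes $\square h(v), \square h(v+e_i), \square h(v+e_j)$ of a quad, the fourth plane $\square h(v+e_i+e_j)$ contains:
\begin{itemize}
\item the face point $h(f)$ of the $ij$-face $f$, which lies in all four planes;
\item the face points of the faces containing the edges $(v+e_i, v+e_i+e_j)$ and $(v+e_j, v+e_i+e_j)$ in other directions.
\end{itemize}
The difficulty is that these edge points are not a priori in the span of the three earlier planes, since the edges are new. To handle this, we invoke Theorem~\ref{th:conjugatebijection}: the restriction of the face-net to each $F_N^{ij}$ is a conjugate vertex-net. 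Its points are determined by the axis data of the corresponding layers, because a conjugate vertex-net is generated by its points on the coordinate axes. Each such point therefore lies in the span of face points attached to the axes of $A$, and those lie in the axis planes.

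Thus all face points, and hence all planes $\square\hat h(v)$ for $v \in B$, lie in a space of dimension at most $2+\delta(B)$. Combined with the lower bound from the base case, equality follows. Finally we check that lifting preserves conjugacy, meaning the planarity condition of Definition~\ref{def:conjugate}\ref{itm:conjugateface}. We arrange this by constructing the lift via Theorem~\ref{th:conjugatebijection} from lifts of the vertex-nets on each $F_N^{ij}$ given by Lemma~\ref{lem:dimvertexnets}, chosen compatibly on shared faces.
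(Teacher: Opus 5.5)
\textbf{Genuine gap: the upper-bound step is circular.} You handle what you call the main obstacle by appealing to Theorem~\ref{th:conjugatebijection}. In the paper, that theorem rests on Proposition~\ref{prop:conjugatefacetoconjugate}. Its proof, in the case of $kl$-faces with $k,l\notin\{i,j\}$, explicitly assumes the dimension statement of the very lemma you are proving, so you cannot use it here. Your last paragraph has the same problem. It needs $h^{ij}$ to be a conjugate vertex-net before Lemma~\ref{lem:dimvertexnets} applies. In addition, a lift of the \emph{points} of $h^{ij}$ does not by itself control the dimensions spanned by the \emph{planes} $\square\hat h$. Also, a conjugate vertex-net is not \emph{determined} by its axis data, since it is a 3D-system with two-dimensional Cauchy data. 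Only the weaker statement holds: all its points lie in the span of the axis points.

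\textbf{The obstacle is not real; here is the missing observation.} Every face containing the edge $(v+e_i, v+e_i+e_j)$ is incident to $v+e_i$. So its point lies in $\square h(v+e_i)$ by definition, and likewise for the edge $(v+e_j, v+e_i+e_j)$. By non-degeneracy alone, adjacent planes meet in exactly a line:
\begin{itemize}
\item they share the points of all faces through the common edge;
\item at least two of these points are distinct, because any three face points at a vertex span a plane;
\item the two planes cannot coincide, because three planes of a quad meet in a single point.
\end{itemize}
This is precisely the fact you left as ``we expect this to hold after lifting,'' and it is the only new input the paper cites. Consequently $\square h(v+e_i+e_j)$ is spanned by the two lines $\square h(v+e_i)\cap\square h(v+e_i+e_j)$ and $\square h(v+e_j)\cap\square h(v+e_i+e_j)$. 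These lines are distinct, since otherwise $\square h(v+e_i)\cap\square h(v+e_j)\cap\square h(v+e_i+e_j)$ would contain a line. Hence $\square h(v+e_i+e_j)\subset\square h(v+e_i)\vee\square h(v+e_j)$. Induction over the block then places all planes of $B$ in the span of the planes along the axes of $B$ through a corner. That span has dimension at most $2+\delta(B)$, because each successive axis plane shares a line with its predecessor and so adds at most one dimension. This is exactly the paper's ``completely analogous'' argument, and it needs no detour through face points or Theorem~\ref{th:conjugatebijection}.
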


\begin{proof}
  The proof is completely analogous to that of Lemma~\ref{lem:dimvertexnets},
  except that the non-degeneracy assumption implies that the span of the images
  of two adjacent planes is three-dimensional.
\end{proof}

\subsection{Correspondence between conjugate face-nets and conjugate vertex-nets}
\label{sec:conjugate-binets-correspondence}

As explained briefly in the introduction,
while the definitions of conjugate vertex-nets and conjugate face-nets
appear to be very symmetric with respect to $V_N \leftrightarrow F_N$,
the set of vertices and faces are not isomorphic for $N>2$.
Therefore, conjugate face-nets are a priori different from conjugate vertex-nets.
For $1 \leq i < j \leq N$, recall that $F_N^{ij}$ denotes the (2-dimensional) $ij$-faces of $\Z^N$.
We may identify $F_N^{ij}$ with $V_N$ by
\begin{align}
	F_N^{ij} \ni f = (r, r+e_i, r + e_i +e_j, r+e_j) \longleftrightarrow r \in V_N.
\end{align}
Note that this identification introduces edges between $ij$-faces that were previously not adjacent in $\Z^N$ (cf.\ Figure~\ref{fig:face-identification}).
For a map $h$ defined on $F_N$, we denote by $h^{ij}$ its restriction to the $ij$-faces.

\begin{figure}[b]
  \centering
  \begin{overpic}[width=0.3\textwidth]{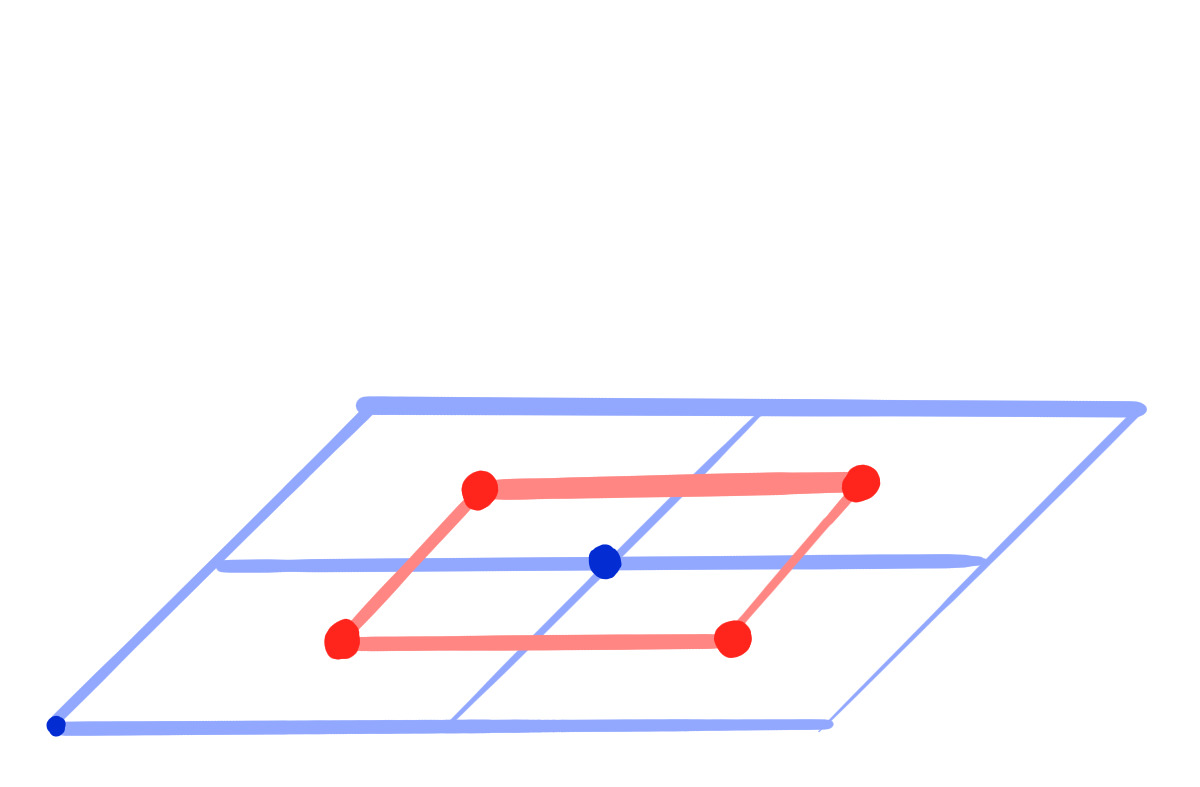}
    \put(1,-1){$\color{blue}r$}
    \put(51,13.5){$\color{blue}v$}
    \put(21,9){$\color{red}f$}
    \put(62,9){$\color{red}f_1$}
    \put(73,23){$\color{red}f_{12}$}
    \put(28,23){$\color{red}f_2$}
  \end{overpic}
  \begin{overpic}[width=0.3\textwidth]{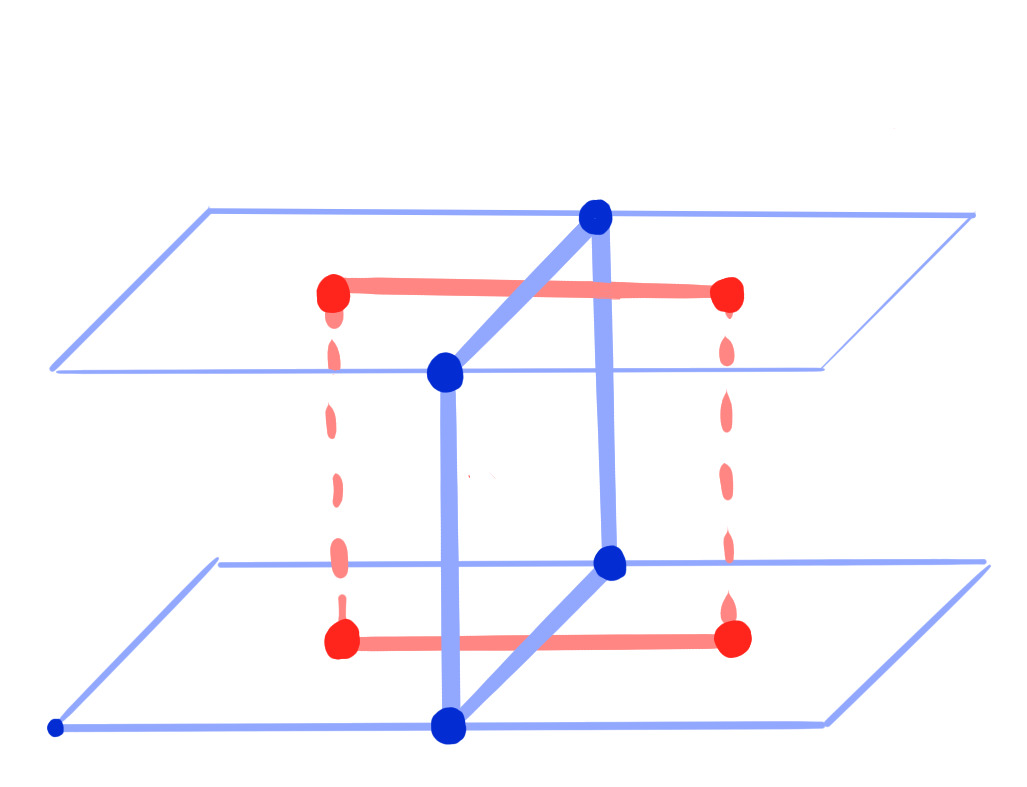}
    \put(2,0){$\color{blue}r$}
    \put(36,0){$\color{blue}v$}
    \put(61,24){$\color{blue}v_1$}
    \put(61,58){$\color{blue}v_{12}$}
    \put(34,35){$\color{blue}v_2$}
    \put(24,11){$\color{red}f$}
    \put(73,11){$\color{red}f_1$}
    \put(73,46){$\color{red}f_{13}$}
    \put(23,46){$\color{red}f_3$}
  \end{overpic}
  \begin{overpic}[width=0.3\textwidth]{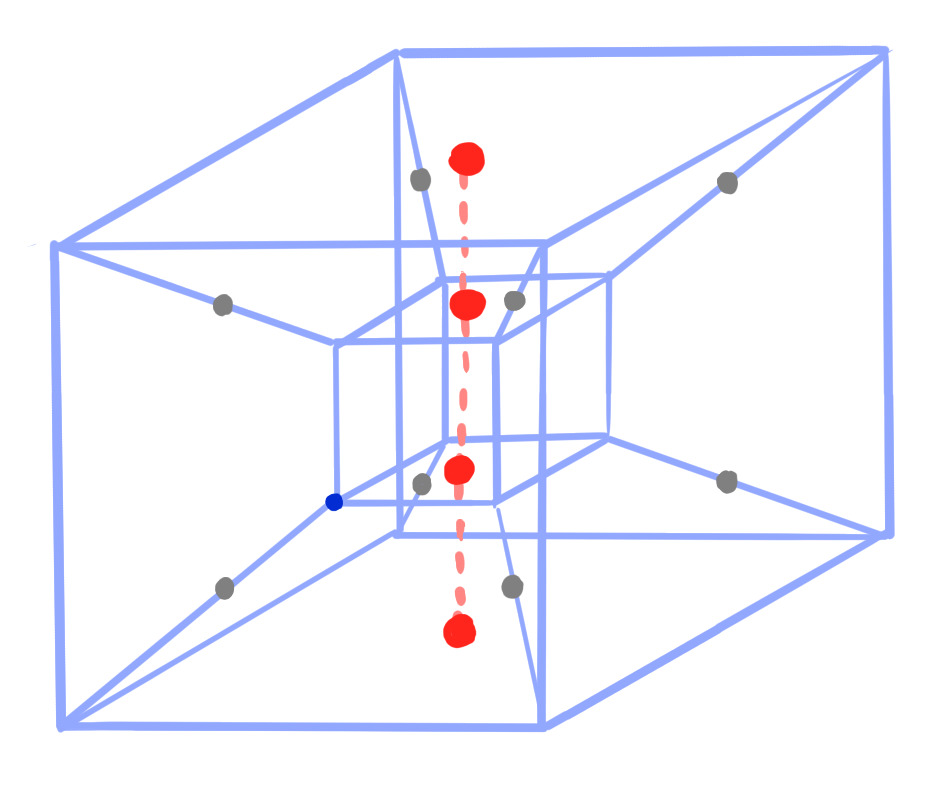}
    \put(33,25){\scriptsize\contour{white}{$\color{blue}r$}}
    \put(40,10){$\color{red}f_4$}
    \put(51,31){\contour{white}{$\color{red}f$}}
    \put(39,49){\contour{white}{$\color{red}f_3$}}
    \put(52,65){$\color{red}f_{34}$}
    \put(19,17){\scriptsize\contour{white}{$p$}}
    \put(55,17){\scriptsize\contour{white}{$p_1$}}
    \put(16,53){\scriptsize\contour{white}{$p_3$}}
    \put(56,55){\scriptsize\contour{white}{$p_{13}$}}
    \put(40,27){\scriptsize\contour{white}{$p_2$}}
    \put(80,30){\scriptsize\contour{white}{$p_{12}$}}
    \put(80,67){\scriptsize\contour{white}{$p_{123}$}}
    \put(38,68){\scriptsize\contour{white}{$p_{23}$}}
  \end{overpic}
  \caption{Different face types of $F_N^{12} \cong V_N$.
    Labels in the figure correspond to the notation used in the proof of Proposition~\ref{prop:conjugatefacetoconjugate}.
    \emph{Left}: $12$-face of $F_N^{12}$.
    \emph{Middle}: $13$-face of $F_N^{12}$ with an associated $23$-face of $\Z^N$.
    \emph{Right}: $34$-face of $F_N^{12}$ with a surrounding 4-cube of $\Z^N$.}
  \label{fig:conjugate-face-net-restriction}
\end{figure}

\begin{proposition}
  \label{prop:conjugatefacetoconjugate}
    The restriction $h^{ij}: F_N^{ij} \rightarrow \RP^n$  of a conjugate face-net $h: F_N \rightarrow \RP^n$ to $F_N^{ij} \simeq V_N$ is a conjugate vertex-net.
  \end{proposition}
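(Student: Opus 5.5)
Without loss of generality we take $(i,j)=(1,2)$. Under the identification $F_N^{12}\simeq V_N$, each face of $V_N$ is spanned by two directions $k<l$. We must show that the four points of $h^{12}$ on such a face are coplanar, and there are three cases, according to $|\{k,l\}\cap\{1,2\}|$. This is exactly the case split indicated in Figure~\ref{fig:conjugate-face-net-restriction}. Throughout we assume non-degeneracy and, by the lift lemmas (Lemma~\ref{lem:dimfacenets}), that we are in general position. Coplanarity is preserved under central projection, so it suffices to prove the claim for the lift.

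\emph{Case $\{k,l\}=\{1,2\}$.} The $12$-faces $f,f_1,f_2,f_{12}$ with lower corners $r,r+e_1,r+e_2,r+e_1+e_2$ all contain the vertex $v=r+e_1+e_2$. Hence their images lie in the plane $\square h(v)$, and there is nothing more to prove.

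\emph{Case $|\{k,l\}\cap\{1,2\}|=1$, say $(k,l)=(1,3)$.} The faces $f,f_1$ share the vertices $v=r+e_1$ and $v_2=r+e_1+e_2$. The faces $f_3,f_{13}$ share $v_3,v_{23}$. Moreover, $f$ and $f_3$ share the edge $(r+e_2)$--$(r+e_2+e_3)$ perpendicular to a $23$-face, and similarly on the other side. I would use the planes at the four vertices $v,v_1,v_{12},v_2$ shown in the figure. The points $h(f)$ and $h(f_1)$ both lie in $\square h(v)\cap\square h(v_2)$. The $13$-faces and $23$-faces incident to these vertices give further points. The key intermediate claim is that $h(f),h(f_1),h(f_3),h(f_{13})$ all lie in the span of the $12$-face point and the $13$-face point adjacent to them. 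More concretely, one shows that the four planes $\square h$ at $v,v_1,v_2,v_{12}$, restricted to the $3$-cube spanned by directions $1,2,3$ containing these faces, all lie in a common $3$-space. The points $h(f),h(f_1),h(f_3),h(f_{13})$ then lie in the intersection of two $2$-dimensional subspaces (lines pairwise), which forces coplanarity. Concretely, the line $h(f)\vee h(f_1)=\square h(v)\cap\square h(v_2)$ and the line $h(f_3)\vee h(f_{13})$ both meet the line spanned by the two relevant $13$-face points. I would check dimensions using the general position count of Lemma~\ref{lem:dimfacenets}.

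\emph{Case $\{k,l\}\cap\{1,2\}=\emptyset$, say $(3,4)$.} This is the main obstacle. The four $12$-faces are $f,f_3,f_4,f_{34}$, and they sit in a $4$-cube of $\Z^N$ spanned by directions $1,2,3,4$. I would introduce auxiliary points $p_\sigma$ for faces in a $3$-cube, as in the right-hand figure, and reduce to the previous case twice. Applying Case~2 to the directions $(1,3)$ and $(2,3)$ shows that $h(f),h(f_3)$ and the intermediate $13$- and $23$-face points are coplanar in suitable quadruples. In particular, the line $h(f)\vee h(f_3)$ is determined as the intersection of planes $\square h$ at two vertices of the $4$-cube. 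The same holds for $h(f_4)\vee h(f_{34})$. Both lines then lie in the $3$-space spanned by the $\square h$ planes at the vertices $r+e_1+e_2+\epsilon$ for $\epsilon\in\{0,e_3,e_4,e_3+e_4\}$. Two lines in a common $3$-space that are additionally coplanar with a third line through $h(f)$ and $h(f_4)$ must then be coplanar. To make this rigorous, I expect to need a Desargues-type or intersection-dimension argument, counting dimensions in the generic lift. I would try first to prove directly that the four planes $\square h$ at those four vertices pairwise meet in lines and together span only a $3$-space, which yields coplanarity of the four points.
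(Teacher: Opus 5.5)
Your case split and your $12$-face case are fine. The other two cases have genuine gaps.

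\textbf{The $13$-case.} Several of your claims here are false.
\begin{itemize}
\item $f$ and $f_3$ are parallel $12$-faces with no common vertex, so they share no edge.
\item The planes $\square h$ at the four vertices of a quad span a $4$-space in the generic lift of Lemma~\ref{lem:dimfacenets} ($2+\delta(B)=4$), not a $3$-space.
\item Two lines that each meet a third line can still be skew, so that would not give coplanarity either.
\end{itemize}
What matters is not a common $3$-space but a common point. The vertices $v=r+e_1$, $v_2=r+e_1+e_2$ (shared by $f,f_1$) and $v_3=r+e_1+e_3$, $v_{23}=r+e_1+e_2+e_3$ (shared by $f_3,f_{13}$) are exactly the four vertices of a $23$-face $\tilde f$ of $\Z^N$. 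Since $h$ is a conjugate face-net, $h(\tilde f)$ lies in all four planes. Hence it lies on both lines $h(f)\vee h(f_1)=\square h(v)\cap\square h(v_2)$ and $h(f_3)\vee h(f_{13})=\square h(v_3)\cap\square h(v_{23})$. Two intersecting lines are coplanar, and that is the whole argument.

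\textbf{The $34$-case.} You rightly flag this as the main obstacle, but the proposal contains no working argument.
\begin{itemize}
\item No vertex of $\Z^N$ is incident to both $f$ and $f_3$, so $h(f)\vee h(f_3)$ is not an intersection of planes $\square h(w)$.
\item The planes at $r+e_1+e_2+\epsilon$, $\epsilon\in\{0,e_3,e_4,e_3+e_4\}$, span a $4$-space in general position. That space contains the four points but constrains nothing.
\item Four points in a common $3$-space need not be coplanar anyway.
\item Invoking Case~2 only yields planar quads of $h^{12}$ containing direction $1$ or $2$. You give no mechanism transferring this to the $34$-quads.
\end{itemize}

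What is needed is a dimension count over the whole $4$-cube of $\Z^N$ at $r$ in directions $1,2,3,4$. Generically, adjacent planes span a $3$-space, quads a $4$-space, cubes a $5$-space, and the $4$-cube a $6$-space. The paper's argument runs as follows.
\begin{itemize}
\item Form the eight $3$-spaces $A_{abc}=\square h(r+ae_1+be_2+ce_3)\vee\square h(r+ae_1+be_2+ce_3+e_4)$, indexed by the vertices of an auxiliary $3$-cube.
\item For each quad $q$ of that cube, $B_q=\bigcap_{p\in q}A_p$ is a line. In particular $h(f)\vee h(f_4)=B_{q^{12}}$ and $h(f_3)\vee h(f_{34})=B_{q^{12}_3}$.
\item Lines of adjacent quads meet, since both lie in the plane $A_p\cap A_{p_1}$.
\item The three lines at a corner are concurrent, since $A_p\cap A_{p_1}\cap A_{p_2}\cap A_{p_3}$ is a single point (three planes in the $3$-space $A_p$).
\item The corners $p$ and $p_3$ share the distinct lines $B_{q^{13}}$ and $B_{q^{23}}$, so both concurrency points equal $B_{q^{13}}\cap B_{q^{23}}$. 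Hence $B_{q^{12}}$ and $B_{q^{12}_3}$ meet.
\end{itemize}
This is exactly the coplanarity of $h(f),h(f_3),h(f_4),h(f_{34})$.
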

\begin{proof}
  Without loss of generality, let $i=1, j=2$.
  We verify planarity for all face types of $F_N^{12}$.
  \begin{itemize}
  \item
    $12$-faces:\\
    Clearly, the $12$-faces of $h^{12}$ are planar.
    Indeed, consider a $12$-face $(f, f_1, f_{12}, f_2)$ of $F_N^{12}$ with incident vertex $v \in V_N$
    (see Figure~\ref{fig:conjugate-face-net-restriction}, left), i.e.,
    \begin{equation}
      r + e_1 + e_2 = v \ \sim\ f,\, f_1,\, f_{12},\, f_2 \ \longleftrightarrow\  r,\, r+e_1,\, r+e_1+e_2,\, r+e_2
    \end{equation}
    with some $r \in V_N$.
    Then, since $h$ is a conjugate vertex-net, the four points lie in the plane $\square h(v)$,
    \begin{align}
      h^{12}(f),\, h^{12}(f_1),\, h^{12}(f_{12}),\, h^{12}(f_2) \in \square h(v).
    \end{align}
  \item
    $13$-faces (analogously $kl$-faces with $k=1,2$ and $l \geq 3$):\\
    Consider a $13$-face $(f, f_1, f_{13}, f_2)$ of $F_N^{12}$ and ``incident'' $23$-face $\tilde f = (v, v_2, v_{23}, v_3)$ of $\Z^N$ (see Figure~\ref{fig:conjugate-face-net-restriction}, middle), i.e.,
    \begin{equation}
      \label{eq:13-face}
      \begin{aligned}
      \begin{aligned}
        &f,\, f_1,\, f_{13},\, f_3 &&\longleftrightarrow &&r,\, r+e_1,\, r+e_1+e_3,\, r+e_3\\
        &v,\, v_2,\, v_{23},\, v_3 &&\makebox[\widthof{$\longleftrightarrow$}][c]{=} &&r+e_1,\, r+e_1+e_2,\, r+e_1+e_2+e_3,\, r+e_1+e_3\\        
      \end{aligned}\\
      v, v_2 \sim f, f_1, \qquad v_3, v_{23} \sim f_3, f_{13}, \qquad
      v, v_2, v_{23}, v_3 \sim \tilde f\qquad
    \end{aligned}
    \end{equation}
    with some $r \in V_N$.
    Consider the two lines
    \begin{align}
      L &\coloneqq h^{12}(f) \vee h^{12}(f_1) = \square h(v) \cap \square h(v_2),\\
      L_3 &\coloneqq h^{12}(f_3) \vee h^{12}(f_{13}) = \square h(v_3) \cap \square h(v_{23}).
    \end{align}
    The four planes $\square h(v), \square h(v_2), \square h(v_3), \square h(v_{23})$ intersect in the point $h(\tilde f)$,
    thus the two lines $L$ and $L_3$ are coplanar.
  \item
    $34$-faces (analogously $kl$-faces with $3 \leq k < l$):\\
    Consider a $34$-face $(f, f_3, f_{34}, f_4)$ of $F_N^{12}$, i.e.,
    \begin{align}
      f,\, f_3,\, f_{34},\, f_4 \ \longleftrightarrow\ r,\, r+e_3,\, r+e_3+e_4,\, r+e_4
    \end{align}
    with some $r \in V_N$.
    $f, f_3, f_{34}, f_4$ are four faces of the 4-cube $C$ in $\Z^N$ given by
    \begin{equation}
      \set{r+a \cdot e_1+b \cdot e_2 + c \cdot e_3 + d \cdot e_4}{(a, b, c, d) \in \Z_2^4 = \{0,1\}^4}
    \end{equation}
    (see Figure~\ref{fig:conjugate-face-net-restriction}, right).
    
    In the following, we assume that $h$ satisfies the dimensional statement of Lemma~\ref{lem:dimfacenets}.
    Otherwise, we could replace $h$ with a lift $\hat h$.
    It is straightforward to verify that in what follows any conclusions drawn for $\hat h$ are invariant under the central projection $\pi$ and hence also hold for $h$.
    Thus, we may assume that
    two adjacent planes of $\square h$ span a 3-space,
    the four planes of a quad span a 4-space,
    the eight planes of a cube span a 5-space,
    and the sixteen planes of a 4-cube span a 6-space,
    respectively.

    Consider the eight 3-spaces
    \begin{align}
      A_{abc} &\coloneqq \square h(r+a \cdot e_1+b \cdot e_2 + c \cdot e_3) \vee \square h(r+a \cdot e_1+b \cdot e_2 + c \cdot e_3 + e_4),
    \end{align}
    where $(a, b, c) \in \Z_2^3 = \{0,1\}^3 $.
    Thereby, we associated the eight 3-spaces $A_{abc}$ with the vertices of a 3-cube,
    where each vertex, in turn, may be associated with an edge in 4-direction of the 4-cube $C$.
    
    Let $q^{12} = (p,p_1,p_{12},p_2)$ denote the vertices of a quad in the 3-cube $\Z_2^3$.
    The intersection $A_p \cap A_{p_1}$ is a plane, since the two 3-spaces belong to a quad of the 4-cube $C$.
    Consequently, $A_p \cap A_{p_1} \cap A_{p_2}$ is a line.
    In particular, we have that 
    \begin{align}
      h^{12}(f) \vee h^{12}(f_4) = A_{000} \cap A_{100} \cap A_{010} = A_{000} \cap A_{100} \cap A_{010} \cap A_{110},\\
      h^{12}(f_3) \vee h^{12}(f_{34}) = A_{001} \cap A_{101} \cap A_{011} = A_{001} \cap A_{101} \cap A_{011} \cap A_{111}.
    \end{align}	
    Similarly, for an arbitrary quad $q^{12}$ we obtain that
    \begin{align}
      B_{q^{12}} \coloneqq A_p \cap A_{p_1} \cap A_{p_2} \cap A_{p_{12}},
    \end{align}	
    is a line as well.
    
    Next, let $q^{13} \coloneqq (p,p_1,p_{13},p_3)$ be the vertices of another quad in the 3-cube $\Z_2^3$, which shares an edge with $q^{12}$. Since both lines $B_{q^{12}}$, $B_{q^{13}}$ are contained in the plane $A_p \cap A_{p_1}$, the two lines intersect in a point. Moreover, the intersection 
    \begin{align}
      A_p \cap A_{p_1} \cap A_{p_2} \cap A_{p_3} = \bigcap_{m=1}^3 (A_p \cap A_{p_m}),
    \end{align}
    is a point, since it is the intersection of three planes in the 3-space $A_p$. As a result, the intersection $B_{q^{12}} \cap B_{q^{13}} \cap B_{q^{23}}$ is also a point.
    In other words, the intersections of the lines associated with three pairwise adjacent quads of the 3-cube intersect in a point.
    By iterating this argument, we obtain that the lines associated with any two quads of the 3-cube intersect in a point.
    In particular, we obtain that the intersection
    \begin{align}
      (h^{12}(f) \vee h^{12}(f_4)) \cap (h^{12}(f_3) \vee h^{12}(f_{34}))
    \end{align}
    is a point. This is equivalent to the planarity of the $34$-faces and completes the proof.
    \qedhere
  \end{itemize}
\end{proof}

We next prove the converse of the previous proposition, showing that every conjugate vertex-net
admits an extension to a conjugate face-net.
\begin{proposition}
  \label{prop:conjugatetoconjugateface}
  For every conjugate vertex-net $h: F^{ij}_N \rightarrow \RP^n$ on $F_N^{ij} \simeq V_N$
  there exists a unique conjugate face-net $\lceil h \rceil: F_N \rightarrow \RP^n$,
  such that $\lceil h \rceil$ restricts to $h$ on $F^{ij}_N$. 
\end{proposition}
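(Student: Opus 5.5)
Without loss of generality let $i=1$, $j=2$. The plan is to build $\lceil h \rceil$ in two stages: first the planes $\square\lceil h\rceil(v)$ for all $v\in V_N$, then the points on faces of every other type as intersections of these planes. Uniqueness will come for free, since each step is forced.

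\emph{Step 1 (planes).} Every vertex $v$ is incident to exactly one $12$-face of each of the four quadrant types, namely $v-e_1-e_2$, $v-e_2$, $v$ and $v-e_1$ under $F_N^{12}\simeq V_N$. These four faces form a $12$-quad of the conjugate vertex-net $h$. So in any conjugate face-net extending $h$ the plane must be
\begin{equation}
  \square\lceil h\rceil(v) \coloneqq \square h(v-e_1-e_2),
\end{equation}
where the right-hand side is the plane of that $12$-quad of $h$. This plane is forced, by non-degeneracy of the face-net, since the three faces incident to $v$ already span the plane.

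\emph{Step 2 (points on $kl$-faces).} For any face $\tilde f\in F_N$ with vertices $v,v_k,v_{kl},v_l$ we must set
\begin{equation}
  \lceil h\rceil(\tilde f)\coloneqq \square\lceil h\rceil(v)\cap\square\lceil h\rceil(v_k)\cap\square\lceil h\rceil(v_l)\cap\square\lceil h\rceil(v_{kl}),
\end{equation}
and this again is forced by non-degeneracy. The point of this step is to show that this intersection is a single point. There are two cases.
\begin{itemize}
\item \emph{Faces with $k\in\{1,2\}$, $l\ge3$.} Reverse the middle case of Proposition~\ref{prop:conjugatefacetoconjugate}. Adjacent planes $\square\lceil h\rceil(v)$ and $\square\lceil h\rceil(v_2)$ share the line $h(f)\vee h(f_1)$, and similarly in the shifted layer. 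Planarity of the $13$-quad $(f,f_1,f_{13},f_3)$ of $h$ makes these two lines coplanar, so they meet in a point. One then checks that this point lies on all four planes.
\item \emph{Faces with $3\le k<l$.} The four planes are now $12$-quad planes of $h$ shifted in directions $k,l$. They lie in the 3D sub-net of $h$ spanned by directions $1$ (or $2$), $k$, $l$. Here I would use the standard fact that the planes of a conjugate vertex-net on a cube span a 3-space generically. Lifting via Lemma~\ref{lem:dimvertexnets}, the four $12$-planes of a $kl$-configuration live in the span of a $2\times2\times2\times2$ block; this is the $\{1,2,k,l\}$ block with $\delta=4$ after the shift, so they sit in a 4-space. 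In a 4-space, two planes sharing a line span a 3-space, and so on. Mimicking the $34$-case argument backwards, pairwise intersections of neighbouring planes are lines (by the previous case), and these lines pairwise meet. Four such lines in general position, pairwise meeting and not coplanar, must be concurrent. This yields the point.
\end{itemize}

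\emph{Step 3 (conjugacy and consistency).} By construction every point $\lceil h\rceil(\tilde f)$ with $\tilde f\sim v$ lies in $\square\lceil h\rceil(v)$, so $\lceil h\rceil$ is a conjugate face-net. On $12$-faces the definition recovers $h$, because $h(f)$ lies in the four relevant $12$-quad planes. This gives existence, and uniqueness follows from Steps~1 and~2.

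The main obstacle is the $kl$-case with $k,l\ge3$ in Step 2, where one must show that four planes meet in a point rather than having empty intersection. I expect this to reduce, after lifting, to a dimension count within the cube configuration together with the concurrency lemma for pairwise-meeting lines. This is essentially the reverse of the 3-cube argument in the proof of Proposition~\ref{prop:conjugatefacetoconjugate}.
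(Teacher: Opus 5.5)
Your Steps~1 and~3, and your treatment of faces of type $kl$ with $k\in\{1,2\}$, $l\ge 3$, agree with the paper's proof. The case $3\le k<l$, however, has a genuine gap.

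Write $P_1,P_2,P_3,P_4$ for the planes at $v,v_k,v_{kl},v_l$. Lines of cyclically neighbouring pairs, such as $P_1\cap P_2$ and $P_2\cap P_3$, meet for free because both lie in $P_2$. But this only produces the four triple points $P_1\cap P_2\cap P_3$, $P_2\cap P_3\cap P_4$, $P_3\cap P_4\cap P_1$, $P_4\cap P_1\cap P_2$, and nothing in your argument forces them to coincide. For the opposite pair $P_1\cap P_2$ and $P_3\cap P_4$, any common point lies in all four planes. So the assertion that ``these lines pairwise meet'' is precisely the statement to be proved, and once it is known, the concurrency lemma is not needed. Note also that the configuration does not sit in a 3D sub-net in directions $1,k,l$. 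The $12$-planes need both directions $1$ and $2$, so you genuinely work in the 4-cube in directions $1,2,k,l$, as you then correctly say.

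The missing ingredient is a dimension count in the 4-space $S=P_1\vee P_2\vee P_3\vee P_4$ (after lifting via Lemma~\ref{lem:dimvertexnets}), using spans of neighbouring planes. The paper sets $H_1=P_1\vee P_2$, $H_2=P_2\vee P_3$, $H_3=P_3\vee P_4$, $H_4=P_4\vee P_1$. Each is the span of the image of a 3-cube, hence a hyperplane of $S$. Neighbouring ones meet exactly in the shared plane ($H_4\cap H_1=P_1$, $H_1\cap H_2=P_2$, etc.), so $\bigcap_m H_m=\bigcap_m P_m$. Since four hyperplanes of a 4-space always have a common point, the four planes meet.

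In the language of your proposal: $P_1\cap P_2$ and $P_3\cap P_4$ are both contained in $H_2\cap H_4$. This is a plane, because $H_2\neq H_4$ (otherwise all four planes would lie in a 3-space). So the two opposite lines are coplanar and meet in a point common to all four planes. With this step supplied, your proof goes through.
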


\begin{proof}
    Without loss of generality, let $i=1, j=2$.
    We claim that the extension of $h$ is given by
    \begin{align}
      \lceil h \rceil (f) \coloneqq \bigcap_{v \sim f} \square h(v). \label{eq:defcompletion}
    \end{align}
    If the four planes defining $\lceil h \rceil(f)$ intersect in a single point,
    then, by construction, $\lceil h \rceil$ is a conjugate face-net that restricts to $h$, and uniqueness is immediate.
    We proceed by showing that Equation~\eqref{eq:defcompletion} defines points for all types of faces $f \in F_N$.
    \begin{itemize}
    \item
      $12$-faces:\\
      For a $12$-face $f$ the intersection $\lceil h \rceil(f)$ is the point $h(f)$.

    \item
      $23$-faces (analogously $kl$-faces with $k=1,2$ and $l \geq 3$):\\
      Consider a $23$-face $\tilde f = (v, v_2, v_{23}, v_3)$ of $\Z^N$ and ``incident'' $13$-face $(f, f_1, f_{13}, f_2)$ of $F_N^{12}$ as in \eqref{eq:13-face} (see Figure~\ref{fig:conjugate-face-net-restriction}, middle).
      The intersection is given by
      \begin{equation}
        \label{eq:extension-focal-point}
        \begin{aligned}
          \lceil h \rceil(\tilde f)
          &= \square h(v) \cap \square h(v_2) \cap \square h(v_{23}) \cap \square h(v_3)\\
          &= (h(f) \vee h(f_1)) \cap  (h(f_3) \vee h(f_{13})).
        \end{aligned}
      \end{equation}
      Since $h$ is a conjugate vertex-net, the two lines are in a common plane, and thus intersect in a unique point (called a \emph{focal point} of $h$).
    \item
      $34$-faces (analogously $kl$-faces with $3 \leq k < l$):\\
      Consider a $34$-face $f = (v, v_3, v_{34}, v_4)$ of $\Z^N$, i.e.,
      \begin{align}
        v,\, v_3,\, v_{34},\, v_4 \ = \ r,\, r+e_3,\, r+e_3+e_4,\, r+e_4
      \end{align}
      with some $r \in V_N$.
      In this case, the intersection $\lceil h \rceil(f)$ is the intersection of the four planes
      \begin{align}
        P_1 &\coloneqq \square h(v),& P_2 &\coloneqq \square h(v_3),& P_3 &\coloneqq \square h(v_{34}), &P_4 &\coloneqq \square h(v_4).
      \end{align}
      The span of the four planes is 4-dimensional, since they span the points of the image of a 4-cube of a conjugate net.
      Note that, analogously as in the proof of Proposition~\ref{prop:conjugatefacetoconjugate},
      in non-generic cases one may invoke Lemma~\ref{lem:dimvertexnets} to work with a lift instead.
      Let us also define the four 3-dimensional spaces
      \begin{align}
        H_1 &\coloneqq P_1 \vee P_2,& H_2 &\coloneqq P_2 \vee P_3,& H_3 &\coloneqq P_3 \vee P_4, &H_4 &\coloneqq P_4 \vee P_1.
      \end{align}
      Clearly, the intersection of these four 3-dimensional spaces coincides with the intersection of the four planes $P_1$, $P_2$, $P_3$, $P_4$.
      Moreover, the intersection of four 3-dimensional spaces in a 4-dimensional space is a single point, which proves the claim.
      \qedhere
    \end{itemize}
\end{proof}

As a consequence of the previous two propositions,
there is a bijection between conjugate vertex-nets and conjugate face-nets,
which together prove Theorem~\ref{th:conjugatebijection}, that is:
\begin{proof}[Proof of Theorem~\ref{th:conjugatebijection}]
  The statement follows directly from Proposition~\ref{prop:conjugatefacetoconjugate} and Proposition~\ref{prop:conjugatetoconjugateface}.
\end{proof}

\begin{remark}
  \label{rem:extension-symmetry}
  Applied to binets, and given two directions $1 \leq i < j \leq N$,
  we may view a conjugate binet $b \cong (g, \lceil h \rceil)$ as the extension of a pair of conjugate vertex-nets $(g, h)$
  defined on $V_N$ and $F_N^{ij}$, respectively.
  Note that by the symmetry of $V_N$ and $F_N^{ij}$, we can interchange their roles.
  This leads to a second way of extending a pair of conjugate vertex-nets.
  We can view $(g, h)$ as a pair of conjugate nets defined on $F_N^{ij}$ and $V_N$, respectively,
  and thus extend it to a conjugate binet $(\lceil g \rceil, h)$ ``in the other way''.

  If we view conjugate binets as \emph{fundamental transformations} of two-dimensional discrete conjugate nets,
  then this observation recovers the symmetry between vertices and faces within the directions of the discrete surface.
  In this interpretation, the symmetry is then broken only in the directions of the transformations.
\end{remark}

\begin{remark}
  \label{rem:focal-points}
  As seen from \eqref{eq:extension-focal-point} in the proof of Proposition~\ref{prop:conjugatetoconjugateface},
  for $N>2$ the extension of a conjugate vertex-net $h$ to a conjugate face-net $\lceil h \rceil$ contains focal points of $h$.
  Conversely, and in particular in comparison with the smooth theory of conjugate systems,
  we may view a conjugate face-net, and hence a conjugate binet,
  as a collection of points of a conjugate system together with some of its focal points
  (see also Figure~\ref{fig:focal-points} and Section~\ref{sec:smooth-vs-discrete}).
\end{remark}

\subsection{Multi-dimensional consistency}
\label{sec:conjugate-binets-consistency}

In this section we show the multi-dimensional consistency of conjugate binets.
Multi-dimensional consistency is a form of discrete integrability \cite{ddgbook},
based on generalizing the definitions of objects on $\Z^2$ to $\Z^N$ with $N>2$.
To do this, one needs to find a suitable definition for a generalization
and then show that it does not lead to contradictions, that is, that the space of such generalized objects is non-empty.
More specifically, a system on $\Z^N$ is a \emph{$k$-dimensional system} if it is uniquely determined by $(k-1)$-dimensional initial data, also known as Cauchy data. The initial data is usually given on the coordinate planes, which then uniquely determines the remaining data on $\Z^k \subset \Z^N$. Moreover, if $(k-1)$-dimensional initial data also uniquely and without contradictions determines solutions on $\Z^m$ for $m > k$, the system is called \emph{multi-dimensionally} consistent.
It is sufficient to show this for $m = k+1$.

The generalization of conjugate nets to $\Z^N$ in terms of quadrilateral lattices, called conjugate vertex-nets in this paper,
is due to Doliwa and Santini \cite{dsqnet}, who also proved their multi-dimensional consistency.
For further reference, let us restate their result in the following proposition.
\begin{proposition}[\cite{dsqnet}] \label{prop:conjugatenetsconsistency}
  Conjugate vertex-nets are multi-dimensionally consistent 3D-systems.
\end{proposition}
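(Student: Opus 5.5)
The statement is the classical theorem of Doliwa and Santini. I would prove it by the standard two-step argument: first show that conjugate vertex-nets form a 3D-system, then show that the construction is consistent on a 4-cube. Throughout, I will assume genericity in the sense of Lemma~\ref{lem:dimvertexnets}: on each elementary cube the relevant points span the expected number of dimensions. If this fails, I replace $g$ by a lift $\hat g$; all conclusions are incidence statements, so they are preserved under the central projection $\pi$.

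\textbf{3D-system.} Take an elementary 3-cube with vertex $g$ and neighbours $g_1,g_2,g_3$. Assume $g_{12},g_{13},g_{23}$ are given, each lying in the plane spanned by the corresponding two neighbours and $g$. Then all seven points lie in the 3-space $S=g\vee g_1\vee g_2\vee g_3$. The point $g_{123}$ must lie in the three planes
\begin{align}
  \pi_1 = g_1\vee g_{12}\vee g_{13},\quad \pi_2 = g_2\vee g_{12}\vee g_{23},\quad \pi_3 = g_3\vee g_{13}\vee g_{23},
\end{align}
all contained in $S$. Generically, three planes in a 3-space meet in exactly one point, so $g_{123}$ exists and is unique. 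Hence data on the three coordinate planes of $\Z^3$ determine the net uniquely, cube by cube. Non-degeneracy of the resulting quads is a generic condition.

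\textbf{4D-consistency.} Take a 4-cube and give $g,g_i,g_{ij}$ for $1\le i<j\le 4$ generically. Then everything lies in a 4-space $P$. Using the 3D step, the four points $g_{ijk}$ are determined from the four 3-cubes through $g$. The point $g_{1234}$ can then be computed in four ways, from the 3-cubes adjacent to $g_{1234}$ in directions $jkl$ at base point $g_i$. I must show these four constructions agree.

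The key observation is that $g_{ijk}\in S_{ijk}:=g\vee g_i\vee g_j\vee g_k$, and hence $g_{ijk}$ also lies in the 3-space $T_i := g_i\vee g_{ij}\vee g_{ik}\vee g_{il}$, where $\{j,k,l\}$ is the complement of $i$. The candidate for $g_{1234}$ computed in the cube based at $g_i$ lies in three planes of $T_i$. These planes are $T_i\cap T_j$-type planes: for example, the plane $g_{ij}\vee g_{ijk}\vee g_{ijl}$ equals $T_i\cap T_j$, both being planes contained in the two 3-spaces, which generically meet in a plane. So every candidate lies in $T_i\cap T_j\cap T_k$ for suitable indices. Such candidates equal the point $\bigcap_{m=1}^4 T_m$, which is a single point because four generic hyperplanes in $P\cong\RP^4$ meet in exactly one point. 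Hence all four candidates coincide.

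\textbf{Main obstacle.} The delicate step is the identification of the face planes of the second-layer cubes with the pairwise intersections $T_i\cap T_j$. This requires checking that $g_{ijk}$ and $g_{ijl}$ both lie in $T_i\cap T_j$, and that this intersection is exactly a plane. I would verify it through dimension counts in the lifted, generic situation. The same type of argument appears in the proof of Proposition~\ref{prop:conjugatetoconjugateface}, which intersects four 3-spaces in a 4-space. Finally, consistency on 4-cubes implies consistency on $\Z^m$ for all $m\ge 4$ by the standard argument.
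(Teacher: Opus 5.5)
Your argument is correct and is the standard Doliwa--Santini proof: you exhibit the three face planes through $g_{1234}$ in the $3$-space $T_i$ as the intersections $T_i\cap T_j$, so every candidate equals the symmetric point $\bigcap_{m=1}^4 T_m$. The paper does not reprove this proposition but cites \cite{dsqnet}; the one thing to fix is your justification of $g_{ijk}\in T_i$, which does not follow from $g_{ijk}\in S_{ijk}$ but from planarity of the $jk$-face $(g_i,g_{ij},g_{ijk},g_{ik})$, placing $g_{ijk}$ in $g_i\vee g_{ij}\vee g_{ik}\subset T_i$.
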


We now use the bijection between conjugate face-nets and conjugate vertex-nets,
developed in the previous section,
to deduce the multi-dimensional consistency of conjugate face-nets.
\begin{proposition} \label{prop:conjugatefacenetsconsistency}
  Conjugate face-nets are multi-dimensionally consistent 3D-systems.
\end{proposition}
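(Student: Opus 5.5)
The plan is to transfer Proposition~\ref{prop:conjugatenetsconsistency} through the bijection of Theorem~\ref{th:conjugatebijection}, fixing the pair of directions $i=1$, $j=2$ throughout. By Propositions~\ref{prop:conjugatefacetoconjugate} and~\ref{prop:conjugatetoconjugateface}, a conjugate face-net $h$ on $F_N$ is the same thing as a conjugate vertex-net $h^{12}$ on $F_N^{12}\simeq V_N$. The inverse map is $h^{12}\mapsto\lceil h^{12}\rceil$. The first step is to make both directions of this correspondence local: $\lceil h\rceil(f)$ for a face $f$ of a block only uses the values of $h^{12}$ on the $12$-faces near that block.

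The second step is to set up the Cauchy data and check that the system is 3D. Take the data of a face-net on the coordinate planes, meaning all faces of $\Z^N$ lying in the coordinate $2$-planes (or the appropriate $2$-dimensional initial region). Translate this into Cauchy data for the vertex-net $h^{12}$ on $F_N^{12}\simeq\Z^N$. The $12$-faces contained in the coordinate planes give $h^{12}$ on part of the data directly. The missing values are recovered as follows: the face values in the $kl$-planes ($k\le2<l$ or $3\le k<l$) are intersections of the planes $\square h(v)$, and conversely the lines $h^{12}(f)\vee h^{12}(f_1)$ are intersections of $\square h$-planes, as in the $13$-face case of Proposition~\ref{prop:conjugatefacetoconjugate}. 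For example, on the $1$-$3$ coordinate layer, $h^{12}$ is determined by the $13$-face quads through the focal points $h(\tilde f)$ and the known planes.

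In the simplest concrete form, we view the vertex-net data on $F_N^{12}$ restricted to the coordinate planes of $V_N$ as the primary Cauchy data. Proposition~\ref{prop:conjugatenetsconsistency} then gives a unique conjugate vertex-net $h^{12}$ on $\Z^3$, and Proposition~\ref{prop:conjugatetoconjugateface} gives a unique face-net on $F_3$. Thus face-nets are determined by $2$-dimensional data and are 3D-systems.

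The third step is multi-dimensional consistency. Given Cauchy data on $\Z^4$, Proposition~\ref{prop:conjugatenetsconsistency} extends $h^{12}$ uniquely and without contradiction to $F_4^{12}\simeq\Z^4$. Proposition~\ref{prop:conjugatetoconjugateface} then produces a unique face-net $\lceil h^{12}\rceil$ on $F_4$. We must also check that its restriction to each $3$-dimensional sub-lattice agrees with the face-net obtained by solving on that $\Z^3$ alone. This follows from uniqueness in Proposition~\ref{prop:conjugatetoconjugateface} applied on each sub-lattice, provided the sub-lattice contains directions $1,2$.

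The main obstacle is the sub-cubes that do not contain both directions $1$ and $2$, for instance the $\{3,4,5\}$ or $\{1,3,4\}$ sub-lattices. For these, the face-net data cannot be read off from $F^{12}$ inside that sub-lattice. Moreover, the choice of the pair $(i,j)$ would seem to break symmetry. To handle this, the first attempt will be to use Proposition~\ref{prop:conjugatefacetoconjugate} for a different pair $(i',j')$ contained in the sub-lattice. The face-net $\lceil h^{12}\rceil$ restricts to a conjugate vertex-net on $F^{i'j'}$, and the uniqueness of Proposition~\ref{prop:conjugatetoconjugateface} for $(i',j')$ identifies the two constructions. This shows that the 3D evolution is independent of the auxiliary choice $(i,j)$, and hence that the Cauchy problem is well posed in the usual symmetric sense.
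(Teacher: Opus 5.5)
Your proposal is correct and follows the paper's route. The paper's proof is exactly the one-line transfer of Proposition~\ref{prop:conjugatenetsconsistency} through the bijection $h \leftrightarrow h^{ij}$ of Theorem~\ref{th:conjugatebijection}. Your choice of $h^{12}$ on the coordinate planes of $F_N^{12}\simeq V_N$ as Cauchy data, and your restriction-plus-uniqueness argument for sub-lattices not containing directions $1,2$, make explicit what the paper leaves implicit.

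The one loose spot is your sketched direct translation of face values in the coordinate planes into $h^{12}$-data. This is not a purely local step: the focal points $h(\tilde f)$ you invoke on the $1$-$3$ layer are not among those face values. Since you then take the $h^{12}$-data as primary, nothing in your argument depends on it.
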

\begin{proof}
  Since conjugate face-nets are in bijection with conjugate vertex-nets (Theorem~\ref{th:conjugatebijection})
  and conjugate nets are multi-dimensionally consistent (Proposition~\ref{prop:conjugatenetsconsistency}),
  conjugate face-nets are likewise multi-dimensionally consistent.
\end{proof}
\begin{remark}
  \label{rem:conjugate-face-net-propagation}
  The propagation for the 3D-system of conjugate face-nets is best described in terms of its representation by planes (see Definition~\ref{def:planes}~\ref{def:planes-face-net} and Remark~\ref{rem:conjugate-face-net-planes}).
  For an elementary 3-cube, the plane corresponding to the eighth vertex is determined by completing the octahedron formed by the seven other planes.
\end{remark}

Together, this yields the multi-dimensional consistency of conjugate binets.
\begin{proposition}\label{prop:conjugatebinetconsistency}
  Conjugate binets are multi-dimensionally consistent 3D-systems.
\end{proposition}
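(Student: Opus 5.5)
The plan is to obtain the statement by combining Proposition~\ref{prop:conjugatenetsconsistency} and Proposition~\ref{prop:conjugatefacenetsconsistency}. The key structural observation is that, in Definition~\ref{def:conjugate}, a conjugate binet imposes no condition that couples its vertex part $g = b|_{V_N}$ with its face part $h = b|_{F_N}$. The constraints on $g$ involve only the points $g(v)$, namely planarity of the four points on each face. The constraints on $h$ involve only the points $h(f)$, namely planarity of the faces around each vertex. The space of conjugate binets on $\Z^N$ is therefore simply the product of the space of conjugate vertex-nets and the space of conjugate face-nets, and non-degeneracy of a binet is likewise defined componentwise.

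I would first fix the Cauchy data for the binet system. For a 3D-system this consists of the values of $b$ on all vertices and faces contained in the coordinate planes of $\Z^3$. This data splits into Cauchy data for $g$, the values on vertices of the coordinate planes, and Cauchy data for $h$, the values on faces lying in the coordinate planes. For the face-net part I would pass through the identification $F_N^{ij} \simeq V_N$ of Theorem~\ref{th:conjugatebijection}: the coordinate-plane data of $h$ determines the coordinate-plane data of the restriction $h^{ij}$, and conversely $\lceil h^{ij} \rceil$ recovers $h$. Alternatively, one can work directly with the plane description $\square h$ of Remark~\ref{rem:conjugate-face-net-planes}, in which the propagation is given by the octahedron completion of Remark~\ref{rem:conjugate-face-net-propagation}. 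In either formulation, Proposition~\ref{prop:conjugatefacenetsconsistency} applies, and $h$ is uniquely and consistently determined on $\Z^3$ and on every $\Z^m$ with $m>3$.

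For the vertex part, Proposition~\ref{prop:conjugatenetsconsistency} applied to $g$ gives unique and contradiction-free propagation from the coordinate-plane data. Since the two propagations never use each other's data, their combination $b = (g,h)$ is uniquely determined on $D_3$. It is a conjugate binet because each component satisfies its own defining conditions. The same reasoning on $\Z^4$ shows that no contradictions arise, which, as noted above, suffices for every $m$. Hence conjugate binets form a multi-dimensionally consistent 3D-system.

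The main obstacle is bookkeeping rather than geometry. I must check that the coordinate-plane Cauchy data of a face-net on $F_N$ corresponds exactly to admissible Cauchy data for the vertex-net $h^{ij}$, so that the identification does not spoil the 3D-system structure. I would verify this using the plane formulation. Generic non-degeneracy can be assumed throughout, and where needed one passes to lifts as in Lemmas~\ref{lem:dimvertexnets} and~\ref{lem:dimfacenets}.
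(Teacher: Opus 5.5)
Your proposal is correct and follows the paper's own proof. The paper likewise notes that the vertex part and the face part of a conjugate binet are independent, and then combines Proposition~\ref{prop:conjugatenetsconsistency} with Proposition~\ref{prop:conjugatefacenetsconsistency}. The Cauchy-data bookkeeping you single out as the main obstacle belongs to the proof of Proposition~\ref{prop:conjugatefacenetsconsistency}, which you may simply invoke here.
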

\begin{proof}
  Since the conjugate vertex-net and the conjugate face-net of a conjugate binet are independent,
  this is an immediate consequence of Proposition~\ref{prop:conjugatenetsconsistency} and Proposition~\ref{prop:conjugatefacenetsconsistency}.
\end{proof}

Finally, combining the three previous propositions, we have proven one of our main results, Theorem~\ref{th:conjugateconsistency}.

\begin{proof}[Proof of Theorem~\ref{th:conjugateconsistency}]
  Combine Proposition~\ref{prop:conjugatenetsconsistency}, Proposition~\ref{prop:conjugatefacenetsconsistency} and Proposition~\ref{prop:conjugatebinetconsistency}.  
\end{proof}

\section{Polar conjugate binets}
\label{sec:polarbinets}

In this section we establish the multi-dimensional consistency of \emph{polar conjugate binets},
which we have introduced in Definition~\ref{def:polar-conjugate}.
Since polar conjugate binets form a special class of conjugate binets,
this can be achieved using another concept of discrete integrability, namely \emph{consistent reductions}.
A reduction is an additional constraint imposed on a $k$-dimensional system.
It is called consistent if the constraint needs to be prescribed only on the initial data,
and is then automatically satisfied on all of $\Z^k$.
If the underlying system is multi-dimensionally consistent, then so is any consistent reduction.

We proceed to prove that polar conjugate binets are a consistent reduction of conjugate binets, that is Theorem~\ref{th:polarconsistency}.

\begin{proof}[Proof of Theorem~\ref{th:polarconsistency}]
  Let $\mathcal{Q}$ be a quadric in $\RP^n$ and let
  $b \colon D_N \to \RP^n$ be a conjugate binet.
  A polar conjugate binet is characterized by the polarity constraint
  \begin{equation}
    b(v) \in b(f)^\pol \quad \text{for all incident } v \in V_N,\ f \in F_N,
  \end{equation}
  or, equivalently, by
  \begin{equation}
    \square b(d) \subset b(d)^\pol \quad \text{for all } d \in D_N,
  \end{equation}
  where $(\cdot)^\pol$ denotes the polar subspace with respect to $\mathcal{Q}$.
  In the following, we use this latter characterization and propagate the binet together
  with its associated planes (see Remark~\ref{rem:conjugate-face-net-propagation}).

  \begin{figure}[t]
    \centering
    \begin{overpic}[width=0.27\textwidth]{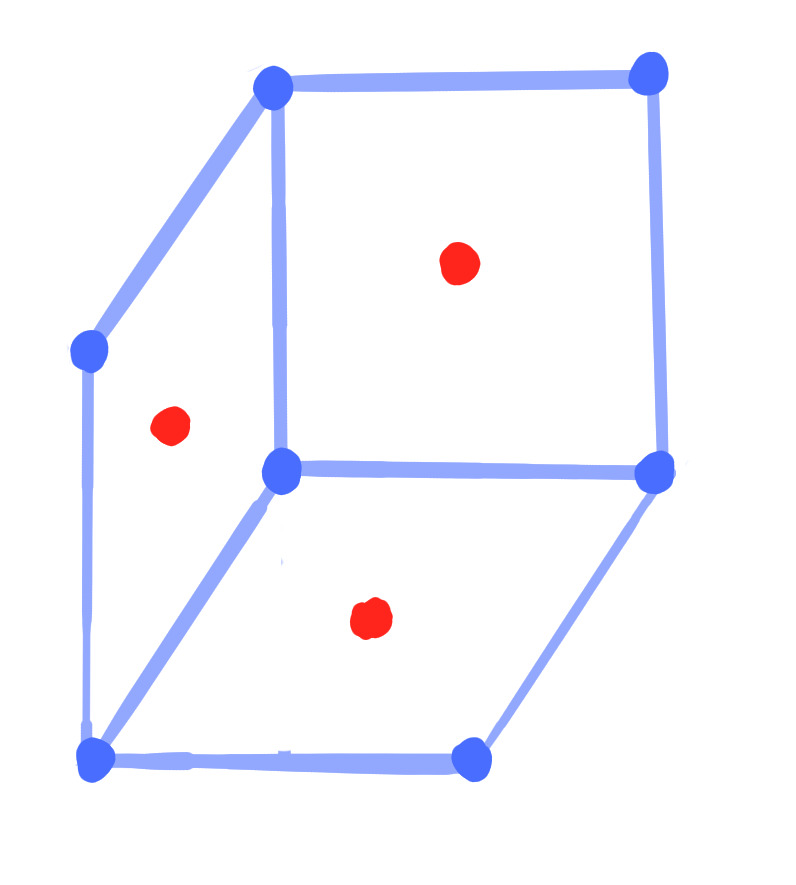}
      \put(34,49){$\color{blue}v$}
      \put(4,6){$\color{blue}v_1$}
      \put(78,45){$\color{blue}v_2$}
      \put(25,95){$\color{blue}v_3$}
      \put(55,6){$\color{blue}v_{12}$}
      \put(76,95){$\color{blue}v_{23}$}
      \put(-3,61){$\color{blue}v_{13}$}
      \put(45,28){$\color{red}f^{12}$}
      \put(55,68){$\color{red}f^{23}$}
      \put(15,42){$\color{red}f^{13}$}
    \end{overpic}
    \hspace{0.1\textwidth}
    \begin{overpic}[width=0.27\textwidth]{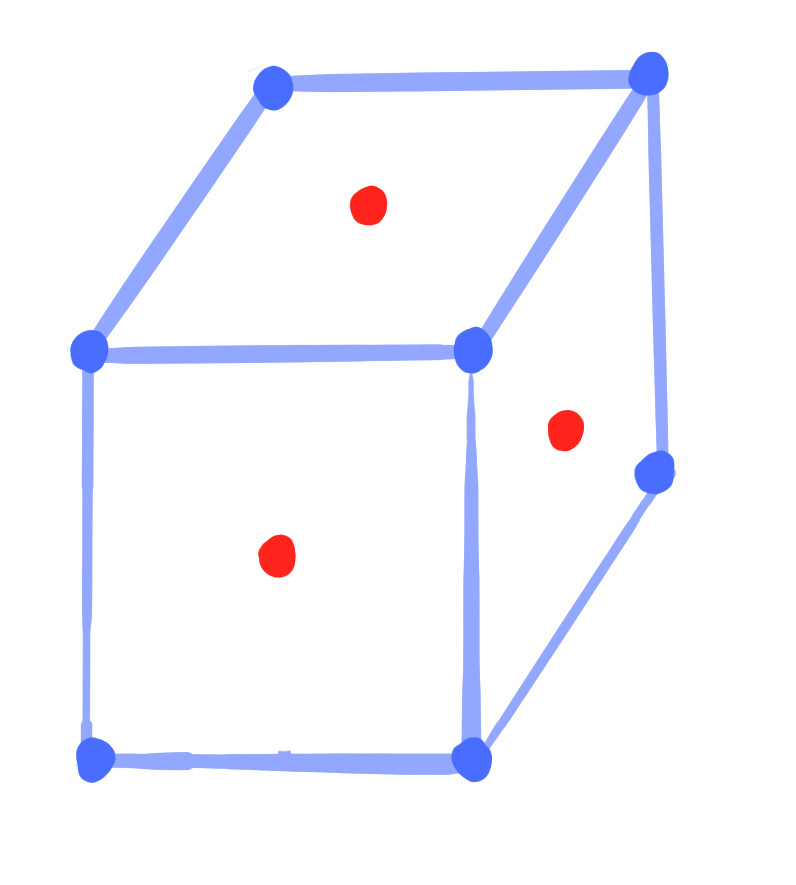}
      \put(57,59){$\color{blue}v_{123}$}
      \put(4,6){$\color{blue}v_1$}
      \put(78,45){$\color{blue}v_2$}
      \put(25,95){$\color{blue}v_3$}
      \put(55,6){$\color{blue}v_{12}$}
      \put(76,95){$\color{blue}v_{23}$}
      \put(-3,61){$\color{blue}v_{13}$}
      \put(44,75){$\color{red}f^{12}_3$}
      \put(34,35){$\color{red}f^{23}_1$}
      \put(59,41){$\color{red}f^{13}_2$}
    \end{overpic}
    \caption{Combinatorics for the consistency around the cube of a polar conjugate binet.}
    \label{fig:consistent-reduction}
  \end{figure}

  Consider the eight vertices
  \begin{equation}
    v,\, v_1,\, v_2,\, v_3,\, v_{12},\, v_{23},\, v_{13},\, v_{123}
  \end{equation}
  of a 3-cube in $\Z^N$ and its six faces
  \begin{equation}
    f^{12},\, f^{23},\, f^{13},\, f^{12}_3,\, f^{23}_1,\, f^{13}_2,
  \end{equation}
  as shown in Figure~\ref{fig:consistent-reduction}.
  Assume that the polarity constraint holds at all vertices except $v_{123}$
  and on the three faces $f^{12}, f^{23}, f^{13}$.
  To complete the cube, it suffices to verify that
  \begin{equation}
    \begin{aligned}
      &\square b(f^{12}_3) \subset b(f^{12}_3)^\pol, \quad
      \square b(f^{23}_1) \subset b(f^{23}_1)^\pol, \quad
      \square b(f^{13}_2) \subset b(f^{13}_2)^\pol,\\
      &\square b(v_{123}) \subset b(v_{123})^\pol
    \end{aligned}
    \label{eq:polar-completion}
  \end{equation}
  holds.
  We first consider the face $f^{12}_3$.
  By definition of a conjugate binet,
  \begin{equation}
    \begin{aligned}
      \square b(f^{12}_3) &= b(v_{23}) \vee b(v_3) \vee b(v_{13}),\\
      b(f^{12}_3) &= \square b(v_{23}) \cap \square b(v_3) \cap \square b(v_{13}).
    \end{aligned}
  \end{equation}
  Since the polarity constraint holds on the initial data, we have
  \[
    b(v_{23}) \in \square b(v_{23})^\pol,\quad
    b(v_3) \in \square b(v_3)^\pol,\quad
    b(v_{13}) \in \square b(v_{13})^\pol,
  \]
  and therefore
  \begin{equation}
    \begin{aligned}
      \square b(f^{12}_3)
      &\subset \square b(v_{23})^\pol \vee \square b(v_3)^\pol \vee \square b(v_{13})^\pol\\
      &= \bigl(\square b(v_{23}) \cap \square b(v_3) \cap \square b(v_{13})\bigr)^\pol\\
      &= b(f^{12}_3)^\pol.
    \end{aligned}
  \end{equation}
  The same argument applies to the faces $f^{23}_1$ and $f^{13}_2$.
  Finally, for the vertex $v_{123}$ we have
  \begin{equation}
    \begin{aligned}
      \square b(v_{123}) &= b(f^{12}_3) \vee b(f^{13}_2) \vee b(f^{23}_1),\\
      b(v_{123}) &= \square b(f^{12}_3) \cap \square b(f^{13}_2) \cap \square b(f^{23}_1),
    \end{aligned}
  \end{equation}
  and, using the polarity of the three faces just established,
  \begin{equation}
    \begin{aligned}
      \square b(v_{123})
      &\subset \square b(f^{12}_3)^\pol \vee \square b(f^{13}_2)^\pol \vee \square b(f^{23}_1)^\pol\\
      &= \bigl(\square b(f^{12}_3) \cap \square b(f^{13}_2) \cap \square b(f^{23}_1)\bigr)^\pol\\
      &= b(v_{123})^\pol.
    \end{aligned}
  \end{equation}
  This verifies all conditions in \eqref{eq:polar-completion} and completes the proof.
\end{proof}

\section{Principal binets}
\label{sec:principal-binets}

We have introduced \emph{principal binets} in Definition~\ref{def:principal-binet}.
The orthogonality constraint that relates the two conjugate nets of a principal binet is defined on the crosses $C_N$ of $\Z^N$.
Note that, up to permutation, there is only one cross per edge in $\Z^2$, whereas in $\Z^N$ each edge is contained in $\binom{2N-2}{2}$ crosses (cf.\ Figure~\ref{fig:crosses-types}).

\begin{remark}
  \label{rem:principal-constraints}
  For an edge $(v,v') \in E_N$ of a conjugate binet, the points of all faces $f, f' \in F_N$ such that $(v, f, v', f') \in C_N$ form a cross lie on the line $\square b(v) \cap \square b(v')$.
  Hence, despite the large number of combinatorial crosses, a principal binet effectively imposes a single orthogonality constraint per edge $(v,v') \in E_N$, which involves the two lines $b(v) \vee b(v')$ and $\square b(v) \cap \square b(v')$.  
\end{remark}

In this section we establish that principal binets constitute a consistent reduction of conjugate binets.
Our approach is based on Möbius lifts, as introduced in \cite{atbinets}.

Let $\mobq \subset \RP^{n+1}$ be the \emph{Möbius quadric}, which is the unit-sphere in some affine chart of $\RP^{n+1}$.
Let $B$ be a point on $\mobq$, and let
\begin{equation}
  \pi: \RP^{n+1} \setminus\{B\} \rightarrow \R^n  
\end{equation}
be the standard stereographic projection extended to $\RP^{n+1}$ as a central projection with center $B$.
We recall the following definition of Möbius lifts for principal binets from \cite{atbinets}.
Note that in \cite{atbinets} Möbius lifts were introduced more generally,
but we only require the case of principal binets in the following.
\begin{definition}
  \label{def:moebius-lift}
  Let $b : D_N \rightarrow \R^n$ be a principal binet.
  A binet $b_\mobq : D_N \rightarrow \RP^{n+1}$ is called a \emph{Möbius lift} of $b$ if
  \begin{enumerate}
  \item
    $b_\mobq$ is a polar conjugate binet with respect to $\mobq$,
  \item
    $b$ is the projection of $b_\mobq$, that is $\pi \circ b_\mobq = b$.
    \qedhere
  \end{enumerate}
\end{definition}

Let us recollect two relevant properties of Möbius lifts.

\begin{lemma}\
  \label{lem:moebius-lifts}
  \nobreakpar
  \begin{enumerate}
  \item
    \label{lem:moebius-lifts-existence}
    Every principal binet has a one-parameter family of Möbius lifts.
  \item
    \label{lem:moebius-lifts-projection}
    The projection $\pi$ of any polar conjugate binet (with respect to $\mobq$) is a principal binet.
    \qedhere
  \end{enumerate}
\end{lemma}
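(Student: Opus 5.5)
The plan is to work with the quadratic form $\sca{\cdot,\cdot}$ on $\R^{n+2}$ defining $\mobq$, in homogeneous coordinates. We take $\mobq=\{x_1^2+\dots+x_n^2 + x_{n+1}^2 - x_{n+2}^2 = 0\}$ and $B=(0,\dots,0,1,1)$, so that $\pi$ maps $[x]$ to the point of $\R^n$ obtained by intersecting $[x]\vee B$ with the hyperplane $\{x_{n+1}=0\}$ and dehomogenizing. Equivalently, a point $p\in\R^n$ lifts to the line $\pi^{-1}(p)=\{[(p,0,1)+t(0,\dots,0,1,1)]\}$.

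\textbf{Part (ii).} Let $b_\mobq$ be polar conjugate with respect to $\mobq$. Since $\pi$ is a central projection, it maps planar quadruples to planar quadruples, so $b=\pi\circ b_\mobq$ is a conjugate binet; genericity of the center is assumed. It remains to check orthogonality on a cross $(v,f,v',f')$. Polarity gives that the line $\ell=b_\mobq(v)\vee b_\mobq(v')$ is polar-orthogonal to $m=b_\mobq(f)\vee b_\mobq(f')$: every point of $\ell$ is conjugate to every point of $m$, i.e.\ $\ell\subset m^\pol$. Write representatives $X=(x,\alpha,\beta)$ on $\ell$ and $Y=(y,\gamma,\delta)$ on $m$. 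Direction vectors of the projected lines in $\R^n$ are obtained from differences of affine representatives. The key computation is that if $X,X'$ span $\ell$ and $Y,Y'$ span $m$ with all mutual products zero, then subtracting suitable multiples of $B$ (which changes nothing after projection) lets us normalize the last two coordinates. This reduces $\sca{X-X',Y-Y'}$ to the Euclidean product of the projected direction vectors plus terms that vanish because $\sca{B,B}=0$ and because of the chosen normalization. I would verify this cleanly by choosing representatives $X=(x,s,s+1)$. Then $\pi[X]=x$, and $\sca{X,X'}=x\cdot x'+s s'-(s+1)(s'+1)$. Hence $\sca{X-X',Y-Y'}=(x-x')\cdot(y-y')-(s-s')-(u-u')\cdot 0\ldots$, and the linear terms need care. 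The cleaner route is to use that $\ell\subset m^\pol$ gives four relations $\sca{X,Y}=\sca{X,Y'}=\sca{X',Y}=\sca{X',Y'}=0$, whose alternating sum yields $(x-x')\cdot(y-y')=0$, since the non-Euclidean part $ss'-(s+1)(s'+1)=-s-s'-1$ is additive-separable and cancels in the alternating sum. That gives the orthogonality.

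\textbf{Part (i).} Given a principal binet $b$, we construct the lift inductively as a Cauchy problem, in the spirit of the proof of Theorem~\ref{th:polarconsistency}. Each point $b(d)$ has a one-parameter fibre $\pi^{-1}(b(d))$, parametrized by $s_d$ in $X_d=(b(d),s_d,s_d+1)$. The polarity condition for an incident pair $v\sim f$ reads $b(v)\cdot b(f)-s_v-s_f-1=0$, which is \emph{linear}: $s_f=b(v)\cdot b(f)-1-s_v$. Choose $s_{v_0}$ freely at one vertex. This determines $s_f$ for every incident face, and from those $s_{v'}$ for the other vertices. The plan is to show this is well defined, i.e.\ that around every cycle in the incidence graph of $D_N$ the alternating sum of the $b(v)\cdot b(f)$ vanishes. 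The cycles are generated by the $4$-cycles $v,f,v',f'$, which are exactly crosses, together with cycles around vertex/face stars in higher dimension. For a cross the condition is $b(v)\cdot b(f)-b(v)\cdot b(f')+b(v')\cdot b(f')-b(v')\cdot b(f)=(b(v)-b(v'))\cdot(b(f)-b(f'))=0$, which is precisely the principal condition. So consistency holds around crosses.

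One must then check that crosses generate the cycle space of the incidence graph of $D_N$. This is a combinatorial statement about the bipartite graph $V_N\cup F_N$, and I expect it to be the main obstacle. I would argue via a spanning argument: any cycle can be homotoped across $4$-cycles using that two faces sharing a vertex in a common 3-cube are linked through crosses. Given this, the lift is a polar binet whose projection is $b$. Planarity of the lifted quads holds because the lifted points over a planar quad lie in the $3$-space $\pi^{-1}(\text{plane})$, and polarity with the lifted incident point cuts this down to a plane. The free parameter $s_{v_0}$ provides the one-parameter family.
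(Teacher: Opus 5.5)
Your proposal is correct and follows the paper's route. The paper defers to the $N=2$ computation of \cite{atbinets} and notes that it involves only a single cross, so it extends to $D_N$ viewed as a simply connected quad-graph. That is exactly your approach: you propagate the lift parameter linearly (these parameters are the squared radii of the orthogonal sphere representation in disguise), and consistency around a cross is precisely the principal condition.

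The combinatorial step you flag as the main obstacle is in fact easy. Eliminating the face unknowns $s_f$ leaves the equations $s_v - s_{v'} = \omega(v,v')$ on the edges of $\Z^N$. The cross condition makes $\omega$ well defined, and it telescopes to zero around each square. Hence $\omega$ is exact, because the square complex of $\Z^N$ is simply connected.
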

\begin{remark}
  \label{rem:moebius-lifts-combinatorics}
  The statements of Lemma~\ref{lem:moebius-lifts} were shown in \cite{atbinets} for the case $N=2$.
  The proof only requires considering a single cross (a single quad of the double graph $D_N$),
  so the results extend immediately to arbitrary dimension $N$ and also to simply connected quad-graphs.
\end{remark}
\begin{remark}
  \label{rem:orthogonal-sphere-representation}
  The points of the Möbius lift may be interpreted as spheres centered at the points of the principal binet.
  They constitute an \emph{orthogonal sphere representation} for the principal binet \cite{atbinets}.
  In the one-parameter family of Möbius lifts, the radius of one sphere may be chosen arbitrarily
  and then the radii of all other spheres are determined by the orthogonality constraints.
\end{remark}

Using these notions, we are in a position to prove Theorem~\ref{th:principalconsistency},
the main result on the multi-dimensional consistency of principal binets.

\begin{proof}[Proof of Theorem~\ref{th:principalconsistency}]
  Let $b : D_3 \rightarrow \R^n$ be a conjugate binet.
  Assume that on the union of the 2-dimensional coordinate planes, $b$ satisfies the constraints defining a principal binet.
  Note that this implies additional conditions at the corner where the coordinate planes meet. 
  By Lemma~\ref{lem:moebius-lifts}~\ref{lem:moebius-lifts-existence} and Remark~\ref{rem:moebius-lifts-combinatorics}, we may choose a Möbius lift $b_\mobq$ on the initial data.
  The Möbius lift is a polar conjugate binet in $\RP^{n+1}$.
  Therefore, by Theorem \ref{th:polarconsistency}, we can extend the Möbius lift from the initial data to a polar conjugate binet $b_\mobq$ on all of $D_3$.
  Since both $b$ and $b_\mobq$ are conjugate binets and are uniquely determined by their initial data, we have $\pi \circ b_\mobq = b$ on $D_3$.
  Thus, by Lemma~\ref{lem:moebius-lifts}~\ref{lem:moebius-lifts-projection}, $b$ is a principal binet on all of $D_3$.
\end{proof}

\begin{remark}
  Since Theorem~\ref{th:polarconsistency} holds for arbitrary quadrics,
  the result on the consistency of princpal binets in Theorem~\ref{th:principalconsistency} may be transferred to other geometries
  (including hyperbolic, elliptic, Lorentz, isotropic, de Sitter, and anti-de Sitter)
  by an appropriate choice of quadric and lift.
\end{remark}

Given two directions $1 \leq i < j \leq N$, let
$g : V_N \rightarrow \R^n$ and $h : F_N^{ij} \cong V_N \rightarrow \R^n$
be two conjugate vertex-nets.
We denote by $(g, \lceil h \rceil)$ the conjugate binet obtained from $g$
together with the extension of $h$ to a conjugate face-net
$\lceil h \rceil : F_N \rightarrow \R^n$
(cf.\ Theorem~\ref{th:conjugatebijection}).
As explained in Remark~\ref{rem:extension-symmetry},
interchanging the roles of $V_N$ and $F_N^{ij}$ instead leads to an extension
$\lceil g \rceil : F_N \rightarrow \R^n$ of $g$,
and hence to the conjugate binet $(\lceil g \rceil, h)$.
Principal binets exhibit the following symmetry with respect to this interchange.
\begin{theorem}
  \label{th:principalsymmetry}
  Let $1 \leq i < j \leq N$,
  and let $g, h$ be two conjugate vertex-nets defined on $V_N$ and $F_N^{ij}$, respectively.
  Then the conjugate binet $(g,\lceil h\rceil)$ is a principal binet
  if and only if the conjugate binet $(\lceil g\rceil, h)$ is a principal binet.
\end{theorem}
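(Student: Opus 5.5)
The plan is to reduce the statement to Möbius lifts and then observe that polarity with respect to $\mobq$ only depends on a symmetric incidence relation between $V_N$ and $F_N^{ij}$. Concretely, I would prove the following claim. The binet $(g,\lceil h\rceil)$ is principal if and only if there exist conjugate vertex-nets $g_\mobq : V_N \to \RP^{n+1}$ and $h_\mobq : F_N^{ij}\to\RP^{n+1}$ with $\pi\circ g_\mobq = g$ and $\pi \circ h_\mobq = h$, such that
\begin{equation*}
  g_\mobq(v) \in h_\mobq(f)^\pol \quad \text{for all } v\in V_N,\ f\in F_N^{ij} \text{ with } v\sim f .
\end{equation*}
Once this claim is established, I will show that the same reduced condition also characterizes principality of $(\lceil g\rceil, h)$. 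This gives the equivalence.

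\textbf{Step 1: "only if".} If $(g,\lceil h\rceil)$ is principal, Lemma~\ref{lem:moebius-lifts}~\ref{lem:moebius-lifts-existence} gives a Möbius lift $b_\mobq$. Restricting $b_\mobq$ to $V_N$ and to $F_N^{ij}$ produces $g_\mobq$ and $h_\mobq$, and the displayed polarity holds by Definition~\ref{def:polar-conjugate}. By Proposition~\ref{prop:conjugatefacetoconjugate}, $h_\mobq$ is a conjugate vertex-net.

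\textbf{Step 2: "if".} Let $h_\mobq$ be as in the claim and set $b_\mobq := (g_\mobq, \lceil h_\mobq\rceil)$. Two things need checking.

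First, $\pi\circ\lceil h_\mobq\rceil = \lceil h\rceil$. Both sides are conjugate face-nets restricting to $h$ on $F_N^{ij}$: central projection preserves the coplanarity conditions, and the lift Lemmas~\ref{lem:dimvertexnets}, \ref{lem:dimfacenets} handle genericity. Equality then follows from the uniqueness in Proposition~\ref{prop:conjugatetoconjugateface}.

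Second, $b_\mobq$ is polar. For every vertex $v$, the plane $\square h_\mobq(v)$ is spanned by the four $ij$-faces incident to $v$, since these form an $ij$-quad of $F_N^{ij}$. Hence the hypothesis gives $\square h_\mobq(v)\subset g_\mobq(v)^\pol$. For an arbitrary face $f\sim v$ we have $\lceil h_\mobq\rceil(f)\in\square h_\mobq(v)$ by \eqref{eq:defcompletion}. Therefore $g_\mobq(v)$ and $\lceil h_\mobq\rceil(f)$ are polar for all incident pairs, so $b_\mobq$ is a polar conjugate binet. Lemma~\ref{lem:moebius-lifts}~\ref{lem:moebius-lifts-projection} then shows that $\pi\circ b_\mobq = (g,\lceil h\rceil)$ is principal.

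\textbf{Step 3: symmetry of the reduced condition.} Under the identification $F_N^{ij}\ni f\leftrightarrow r\in V_N$, the relation $v\sim f$ becomes
\begin{equation*}
  v\in\{r,\ r+e_i,\ r+e_j,\ r+e_i+e_j\}.
\end{equation*}
Now interchange the roles, so that $h$ lives on $V_N$ (index $r$) and $g$ lives on $F_N^{ij}$ (index $s$). The incidence becomes $r\in\{s,\ s+e_i,\ s+e_j,\ s+e_i+e_j\}$. This is the same relation after the lattice translation $s\mapsto s+e_i+e_j$, which preserves $\Z^N$ and conjugacy. Polarity is a symmetric relation, so the reduced condition for $(g,h)$ coincides with the reduced condition for $(h,g)$. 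Applying the claim to both configurations finishes the proof.

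I expect the main obstacle to be Step 2. It relies on two points: that the extension $\lceil\cdot\rceil$ commutes with the stereographic central projection $\pi$, which needs non-degeneracy of the lifted nets and the uniqueness argument; and that Lemma~\ref{lem:moebius-lifts} applies on all of $D_N$, which Remark~\ref{rem:moebius-lifts-combinatorics} permits. I would handle the commutation first via uniqueness. If degenerate projections cause trouble, I would fall back on lifting to generic position as in the proof of Proposition~\ref{prop:conjugatefacetoconjugate}.
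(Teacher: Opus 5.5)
Your proposal is correct and follows essentially the paper's route. You take a Möbius lift, restrict it to a pair $(g_\mobq,h_\mobq)$ that is polar on incident $V_N$/$F_N^{ij}$ pairs, re-extend, and check polarity using that each extended point lies in a plane spanned by lifted points polar to the relevant partner; then Lemma~\ref{lem:moebius-lifts} gives principality. The differences are only organizational: you isolate the symmetric lifted condition, which the paper records just after its proof, and you verify explicitly the incidence symmetry and $\pi\circ\lceil h_\mobq\rceil=\lceil h\rceil$, whereas the paper treats the converse direction and the commutation with $\pi$ as evident.
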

\begin{proof}
  Let $b \coloneqq (g,\lceil h\rceil)$ and $b' \coloneqq (\lceil g\rceil, h)$ be the two conjugate binets obtained as extensions of $(g, h)$.
  Assume $b$ is a principal binet.
  Then, by Lemma~\ref{lem:moebius-lifts}~\ref{lem:moebius-lifts-existence}, it has a Möbius lift $b_\mobq$, which is a polar conjugate binet.
  The restriction of $b_\mobq$ to $V_N \cup F^{ij}_N$ yields a pair of conjugate vertex-nets $(g_\mobq, h_\mobq)$.
  Since all planes $\square \lceil h_\mobq \rceil = \square h_\mobq$ and thus their points of intersection are already uniquely determined by $h_\mobq$, it is straightforward to verify that $b_\mobq = (g_\mobq, \lceil h_\mobq \rceil)$.
  Moreover, since $b_\mobq$ is a polar conjugate binet, we have
  \begin{equation}
    \begin{aligned}
      \square h_\mobq(v) &\subset g_\mobq(v)^\pol &&\text{for all~}v \in V_N, \text{~and}\\
      \square g_\mobq(f) &\subset h_\mobq(f)^\pol &&\text{for all~}f \in F_N.
    \end{aligned}
  \end{equation}
  
  Next, we define $b'_\mobq$ as the other extension of the Möbius lift $(g_\mobq, h_\mobq)$, that is
  \begin{align}
    b'_\mobq \coloneqq (\lceil g_\mobq \rceil, h_\mobq).
  \end{align}
  It is clear that $\pi \circ b'_\mobq = b'$.
  We show that $b'_\mobq$ is a Möbius lift of $b'$.
  By Proposition~\ref{prop:conjugatetoconjugateface}, $b'_\mobq$ is a conjugate binet.
  Therefore it remains to show that $b'_\mobq$ is a polar binet.

  \begin{figure}[t]
    \centering
    \includegraphics[width=0.28\textwidth]{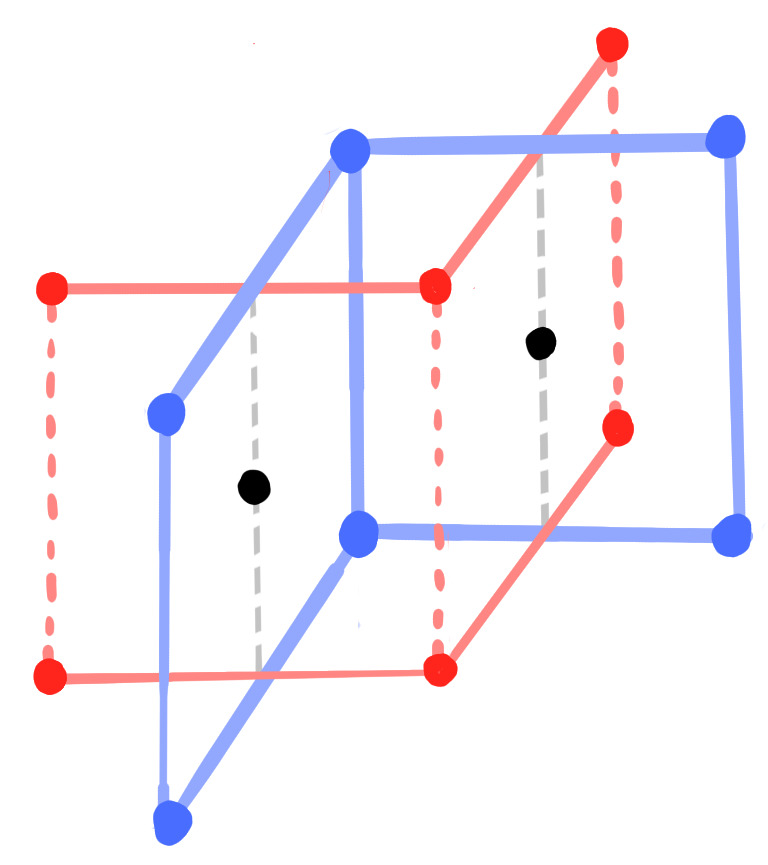}
    \caption{
      Vertices $V_3$ (blue) and $12$-faces $F_3^{12}$ (red) of $\Z^3$.
      Upon identification of $F_3^{12}$ with the vertices of another $\Z^3$,
      the faces of this $\Z^3$ are naturally identified with $W_3 = V_3 \cup (F_3\setminus F_3^{12})$ (blue and black).
    }
    \label{fig:symmetric-extension}
  \end{figure}
  If we identify $F_N^{ij}$ with the vertices of $\Z^N$, then the faces of this $\Z^N$ may be naturally identified with
  \begin{equation}
    W_N \coloneqq V_N \cup (F_N \setminus F_N^{ij})
  \end{equation}
  (see Figure~\ref{fig:symmetric-extension}).
  Then the extension of $g_\mobq$ is given by
  \begin{align}
    \lceil g_\mobq\rceil(w)   = \bigcap_{f \in F_N^{ij},\ f\sim w} \square g_\mobq(f)\qquad\text{for all}~w \in W_N.
  \end{align}
  Thus, for any incident $w \in W_N$ and $f \in F_N^{ij}$, we have
  \begin{align}
    \lceil g_\mobq\rceil(w) = \bigcap_{\tilde f \in F_N^{ij},\ \tilde f\sim w} \square g_\mobq(\tilde f) \subset \square g_\mobq(f) \subset h_\mobq(f)^\pol.
  \end{align}
  Hence, $b'_\mobq$ is a Möbius lift of $b'$,
  and by Lemma~\ref{lem:moebius-lifts}~\ref{lem:moebius-lifts-projection}, it follows that $b'$ is a principal binet.
\end{proof}

Thus, a pair of conjugate vertex-nets
\begin{equation}
  g : V_N \rightarrow \R^n, \qquad
  h : F_N^{ij} \rightarrow \R^n
\end{equation}
define a principal binet, if they possess a Möbius lift $(g_\mobq, h_\mobq)$ as in the proof,
that is a pair of conjugate vertex-nets
\begin{equation}
  g_\mobq : V_N \rightarrow \RP^{n+1},\qquad
  h_\mobq : F_N^{ij} \rightarrow \RP^{n+1}  
\end{equation}
such that
\begin{enumerate}
\item $\pi \circ g_\mobq = g$, $\pi \circ h_\mobq = h$, and
\item $g(v)$ and $h(f)$ are related by polarity (with respect to $\mobq$) for all incident $v \in V_N$ and $f \in F_N^{ij}$.
\end{enumerate}
In this way, the relation between $g$ and $h$ becomes completely symmetric.
However, the pair $(g,h)$ is not defined symmetrically on $\Z^N$, since the $ij$-directions are distinguished.
On the other hand, when binets on $\Z^N$ are interpreted as \emph{Ribaucour transformations} of discrete surfaces,
it is natural for certain directions to play a distinguished role (see also Remark~\ref{rem:extension-symmetry} and Section~\ref{sec:circular-nets}).

\begin{remark}
  \label{rem:more-focal-points}
  By Remark~\ref{rem:focal-points}, the points of the extension $(g,\lceil h\rceil)$ may be interpreted as focal points of the pair $(g,h)$.
  Passing instead to the extension $(\lceil g\rceil,h)$ amounts to introducing additional focal points to the conjugate system.
  In comparison with the smooth theory, the resulting symmetry in imposing orthogonality conditions with respect to different focal points
  is investigated further in Section~\ref{sec:smooth-vs-discrete}.
\end{remark}

\begin{question}
  Does there exist a simple, direct geometric condition on the two conjugate vertex-nets $(g, h)$
  that is equivalent to the pair $(g,\lceil h\rceil)$ (or equivalently $(\lceil g\rceil, h)$) forming a principal binet?
\end{question}

\begin{question}
  For each pair of directions, we obtain two possible extensions to a principal binet.
  Do these extensions form a ``compatible'' system of principal binets in a meaningful sense?
  Furthermore, among the resulting points, which should be considered points of the conjugate system and which should be regarded as focal points?
\end{question}

\section{Circular, conical and principal contact element nets}
\label{sec:circular-conical}

In this section, we briefly revisit earlier discretizations of principal nets and their relation to principal binets.

\subsection{Circular nets and Ribaucour  transformations}
\label{sec:circular-nets}

Let $b : D_N \rightarrow \R^n$ be a principal binet with M\"obius lift $b_\mobq : D_N \rightarrow \RP^{n+1}$.
The points of $b_\mobq$ can be interpreted as spheres centered at the points of $b$, an interpretation known as the orthogonal sphere representation (see Remark~\ref{rem:orthogonal-sphere-representation}).
Furthermore, for each quad $f \in F_N$, the intersection of the plane $\square b_\mobq(f)$ with the M\"obius quadric $\mobq$ defines a circle $C(f)$.
By construction, this circle is orthogonal to the spheres corresponding to the vertices incident with $f$.
These circles generalize the circles of a circular net.
Indeed, if the M\"obius lift lies entirely on $\mobq$, the spheres of the orthogonal sphere representation degenerate to points, and the four points of each quad are contained in the corresponding circle.
This relation to circular nets was discussed in \cite{atbinets} for the case $N=2$ and immediately generalizes to arbitrary dimensions.

For $N>2$, we can recover further geometric structures analogous to those of circular nets.
For the M\"obius lift of a principal binet, the span of each elementary 3-cube is a three-dimensional projective subspace.
The intersection of this span with the M\"obius quadric therefore defines a sphere.
In the special case of circular nets, this sphere contains the eight points corresponding to the vertices of the cube.
In the general case of a principal binet, it is a sphere orthogonal to the eight spheres of the orthogonal sphere representation corresponding to the vertices of the cube.
Furthermore, the axes of the six circles corresponding to the faces of the cube intersect at the center of that sphere (see Figure~\ref{fig:central-sphere}).

The spheres associated with each 3-cube of a principal binet also occur naturally in the comparison with discrete orthogonal coordinate systems (see Section~\ref{sec:toc}).
On the other hand, if we view one of the directions in a principal binet as the direction of a fundamental transformation, then the circles involving that direction discretize the touching circles of a \emph{Ribaucour transformation}.
Similarly, the spheres associated with each 3-cube discretize the touching spheres of this Ribaucour transformation.

\begin{figure}[t]
    \centering
    \includegraphics[width=0.49\textwidth]{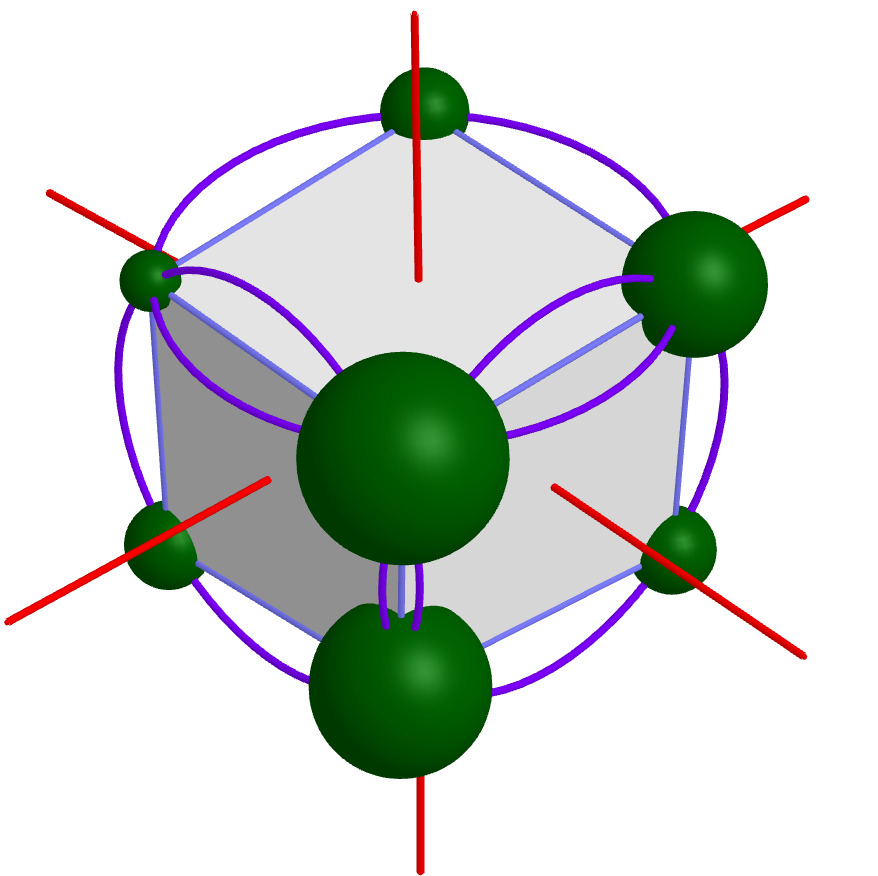}
    \includegraphics[width=0.49\textwidth]{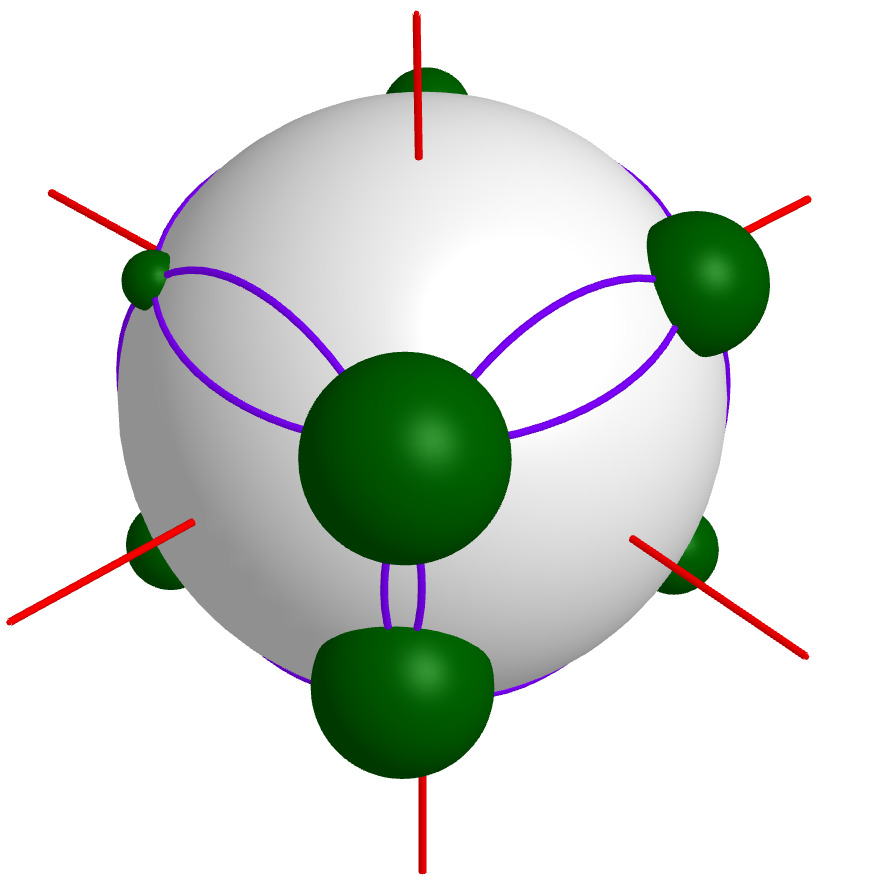}
    \caption{An elementary cube of a principal binet. The restriction to $\Z^3$ is drawn in blue (left only), the circles in violet with axes in red, the spheres of the orthogonal sphere representation on $\Z^3$ in green, and the central sphere in white (right only).}
    \label{fig:central-sphere}
\end{figure}

\subsection{Conical nets as a consistent reduction}
\label{sec:conical-nets}

Similar to circular nets, conical nets can also be viewed as a special case of principal binets with $n=3$ using a Laguerre geometric description \cite{atbinets}.
Conical nets are naturally described as maps assigning to every vertex of $\Z^N$ an oriented plane in $\R^3$.
They are characterized by the condition that the four planes of any elementary quad are in oriented contact with a cone of revolution \cite{lpwywconical, ddgbook}.
In this formulation, conical nets correspond to conjugate vertex-nets inscribed in the Blaschke cylinder \cite{bsorganizing,ddgbook}.
Hence, the multi-dimensional consistency of conjugate vertex-nets in quadrics \cite{doliwaqnetsquadrics} implies the multi-dimensional consistency of conical nets.
For $N>2$, conical nets cannot consistently be viewed as maps of points defined on the vertices of $\Z^N$.
However, if we instead define conical nets via the intersection points of the planes associated with the faces of $\Z^N$, the resulting combinatorics for $N>2$ coincides with that in our definition of conjugate face-nets (Definition~\ref{def:conjugate}~\ref{itm:conjugateface}).
In this way, we can view them as consistent reductions of conjugate face-nets:
\begin{theorem}
  \label{thm:conical-nets-as-face-nets}
  Conical nets are a consistent reduction of conjugate face-nets.
\end{theorem}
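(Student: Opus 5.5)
The plan is to mirror the proof of Theorem~\ref{th:polarconsistency} and the Möbius-lift argument for Theorem~\ref{th:principalconsistency}, now in Laguerre geometry and only for the face-net part. A conical net on $\Z^N$ assigns to each vertex $v$ an oriented plane $P(v)\subset\R^3$. The associated face-net is $h(f)=\bigcap_{v\sim f}P(v)$, which is a point: the four planes of a conical quad touch a common cone of revolution, so they pass through its vertex. The plane spanned by the incident points is then $\square h(v)=P(v)$, so $h$ is a conjugate face-net whose planes $\square h$ are exactly the planes of the conical net (cf.\ Remark~\ref{rem:conjugate-face-net-planes}). Hence the conical condition is a constraint on the map $\square h : V_N\to\mathrm{Planes}(\R^3)$: the four planes of every quad touch a cone of revolution. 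The task is to show that this constraint, imposed on Cauchy data, propagates through the 3D-system of conjugate face-nets.

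Step 1 is to pass to the Blaschke cylinder model. Oriented planes in $\R^3$ correspond to points of the Blaschke cylinder $\mathcal B\subset\RP^4$. Four oriented planes touch a common cone of revolution exactly when the corresponding points of $\mathcal B$ are coplanar in $\RP^4$ \cite{bsorganizing,ddgbook}. So a conical net on $\Z^N$ is the same as a conjugate vertex-net $\tilde g : V_N\to\mathcal B$, with $\tilde g(v)$ representing $\square h(v)$.

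Step 2 is to compare the two propagations on an elementary 3-cube whose data on seven vertices are given and conical. The face-net rule (Remark~\ref{rem:conjugate-face-net-propagation}) determines the eighth plane $\square h(v_{123})$ by completing the octahedron. It computes the three face points $h(f^{12}_3)$, $h(f^{13}_2)$, $h(f^{23}_1)$ as intersections of three known planes, and then takes the plane through these three points. The conjugate vertex-net rule for $\tilde g$ constrained to the quadric $\mathcal B$ is consistent by \cite{doliwaqnetsquadrics}: the eighth point is determined and lies on $\mathcal B$. I must show that the plane represented by this eighth point equals $\square h(v_{123})$. The face $f^{12}_3$ is a quad whose other three vertices $v_3, v_{13}, v_{23}$ are initial data. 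The point $\tilde g(v_{123})$ lies in the plane of $\tilde g(v_3), \tilde g(v_{13}), \tilde g(v_{23})$ in $\RP^4$, so all four oriented planes lie in one contact cone. Their common cone vertex is $\bigcap P = h(f^{12}_3)$, which is already fixed by the three initial planes. Therefore the plane for $\tilde g(v_{123})$ passes through $h(f^{12}_3)$, and likewise through the other two new face points. For generic data these three points determine a unique plane, so it equals $\square h(v_{123})$. Thus the Blaschke propagation and the face-net propagation agree, and the conical constraint holds on the new quads. This is consistency around the cube, and multi-dimensional consistency follows as in Section~\ref{sec:conjugate-binets-consistency}.

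The main obstacle is Step 2's identification. I must check that the contact-cone vertex really is the intersection point $h(f)$ used by the face-net propagation, and handle degenerate cases: cylinders, whose vertex is at infinity, should be treated projectively, and the three face points might be collinear, which would call for a lift argument as in Lemma~\ref{lem:dimfacenets}. Alternatively, one can phrase this as polarity: the Blaschke cylinder is a degenerate quadric, and points of $\R^3$ correspond to cones in it, so the projection may factor through a construction like Theorem~\ref{th:polarconsistency}. I would try the direct incidence argument first.
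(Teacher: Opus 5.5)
Your proposal is correct and follows essentially the paper's route. The paper also identifies a conical net with a conjugate face-net via the cone vertices, and combines the consistency of conical nets (Q-nets in the Blaschke cylinder, via Doliwa's quadric reduction) with the 3D-system of conjugate face-nets. Your Step 2 adds the compatibility check that the paper's one-line proof leaves implicit: the eighth Blaschke plane passes through the three new cone vertices, so it coincides with the octahedron completion of the face-net system.
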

\begin{proof}
  Follows from the multi-dimensional consistency of conjugate face-nets (Proposition~\ref{prop:conjugatefacenetsconsistency})
  and the multi-dimensional consistency of conical nets \cite{bsorganizing,ddgbook,doliwaqnetsquadrics}.
\end{proof}

\subsection{Principal contact element nets and line complexes}
\label{sec:principal-contact-element-nets}

Principal contact element nets can be understood as \emph{isotropic line complexes} in the Lie quadric (note that on $\Z^2$ they are called \emph{isotropic line congruences}) \cite{bsorganizing}. The consistency of isotropic line complexes is discussed in \cite[Section~3.3]{ddgbook}, though a previous special case corresponding to \emph{A-nets} was investigated earlier \cite{doliwaanetspluecker}.
Line congruences in relation to conjugate vertex-nets were introduced in \cite{dsmlinecongruence}, and they are known to be multi-dimensionally consistent \cite{bslinecomplexes}.

A key insight of \cite{atbinets} is that principal binets admit a Lie geometric description:
they lift to \emph{polar line bicongruences} with respect to the Lie quadric
$\lieq \subset \RP^5$.
More precisely, a principal binet $b : D_2 \rightarrow \R^3$ gives rise to a map
\begin{equation}
  b_\lieq : D_2 \rightarrow \mathrm{Lines}(\RP^5),
\end{equation}
such that
\begin{enumerate}
\item the restriction to $V_2$ is a line congruence (lines of adjacent vertices intersect),
\item the restriction to $F_2$ is a line congruence (lines of adjacent faces intersect),
\item for all incident $d,d' \in D_2$ the lines $b_\lieq(d)$ and $b_\lieq(d')$ are related by polarity with respect to $\lieq$.
\end{enumerate}
Viewed in this way, the polar line bicongruences associated with principal binets
naturally generalize the isotropic line congruences arising from principal contact element nets.
The polarity condition implies that the values of $b_\lieq$ on faces are uniquely determined by its values on vertices:
\begin{equation}
  b_\lieq(f) = \bigcap_{v \sim f} \bigl(b_\lieq(v)\bigr)^\perp .
  \label{eq:polarlinecomplex}
\end{equation}

In the present paper, we have shown that principal binets form a consistent reduction of conjugate binets (Theorem~\ref{th:principalconsistency}) by using the multi-dimensional consistency of polar conjugate binets (Theorem~\ref{th:polarconsistency}).
We note, however, that there is also an alternative approach based on the multi-dimensional consistency of line complexes.
This alternative proof was not adopted for several reasons: it requires substantially more preliminaries, it is less amenable to the other observations made in this paper, and it is, per se, limited to the $\R^3$ case.
We briefly sketch the idea of this alternative proof in the remainder of this section.

The Lie lift of principal binets can be extended to higher dimensions as follows.
A principal binet $b \colon D_N \rightarrow \mathbb{R}^3$ gives rise to a map
\[
b_{\lieq} \colon D_N \rightarrow \mathrm{Lines}(\mathbb{RP}^5),
\]
such that
\begin{enumerate}
\item the restriction to $V_N$ is a line complex,
\item the restriction to $F_N$ has the property that the lines of adjacent faces intersect,
\item for all incident $d,d' \in D_N$ the lines $b_\lieq(d)$ and $b_\lieq(d')$ are related by polarity with respect to $\lieq$.
\end{enumerate}

The relation \eqref{eq:polarlinecomplex} remains valid for $N>2$.
Consequently, the multi-dimensional consistency of line complexes implies the multi-dimensional consistency of such Lie lifts, which in turn yields the consistency of principal binets.

\section{Discrete orthogonal coordinate systems}
\label{sec:toc}

A key motivation for introducing principal binets in \cite{atbinets} comes from \emph{discrete confocal quadrics} \cite{bsstconfocali, bsstconfocalii} (see Figure~\ref{fig:discrete-confocal}, left),
which can be regarded as a special case of discrete orthogonal coordinate systems.
For $N>2$, however, the underlying combinatorial structure differs from that used for principal binets.
In this section, we briefly recall the combinatorial setup and orthogonality constraints for discrete confocal quadrics in the case $N=3$ and explain how they relate to principal binets.

\begin{figure}[t]
  \centering
  \includegraphics[width=0.5\textwidth]{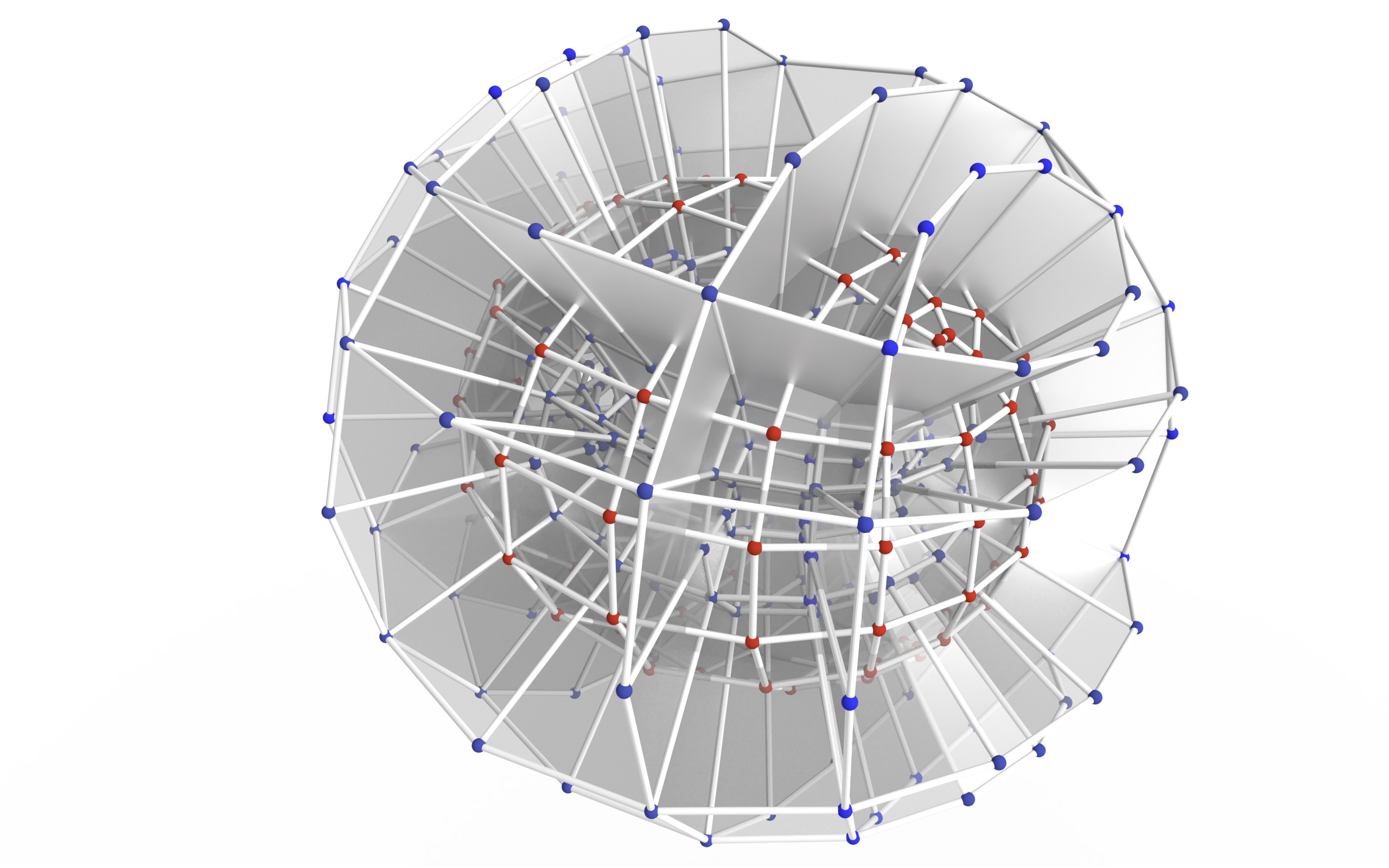}
  \hspace{1cm}
  \includegraphics[width=0.3\textwidth]{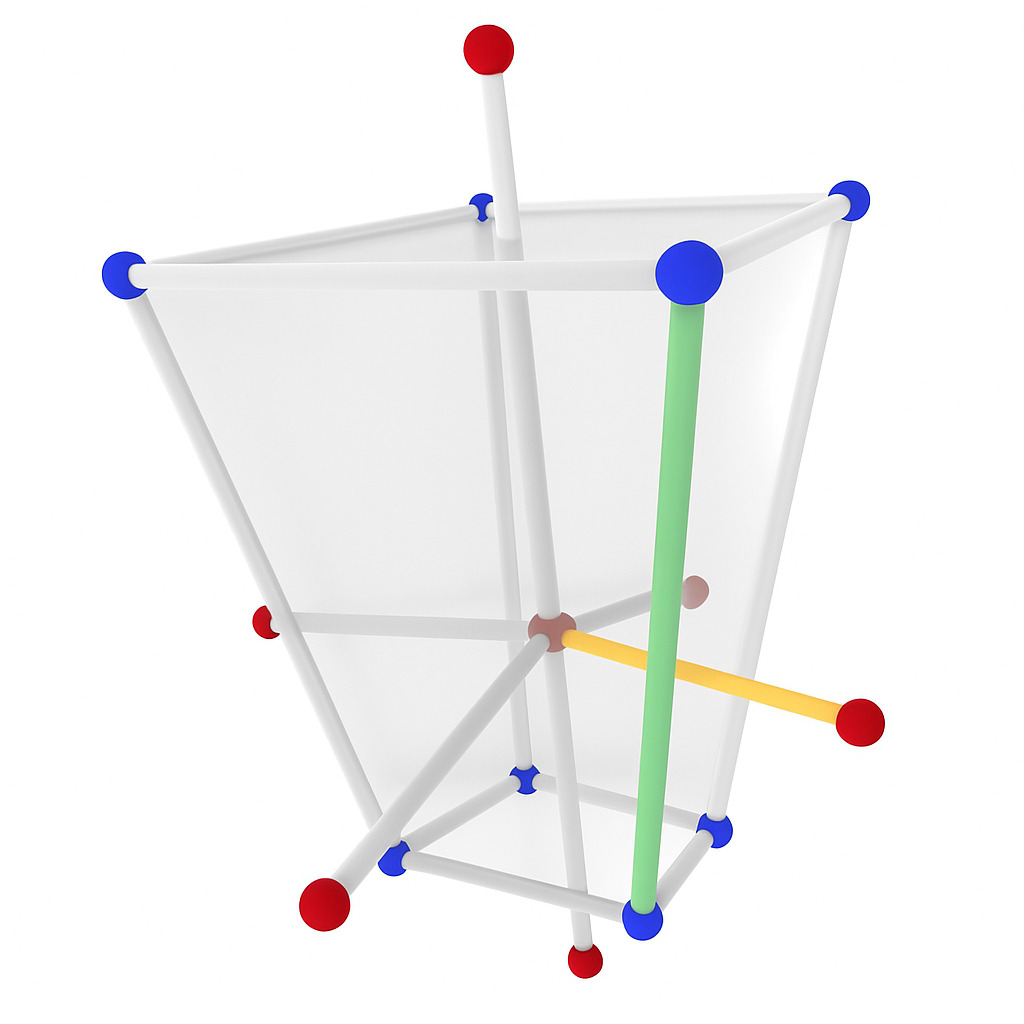}
  \caption{
    \emph{Left}: Discrete confocal quadrics on $\Z^3$ ($V_3$ in blue, $\cube_3$ in red).
    \emph{Right}: Elementary cube of a discrete orthogonal coordinate system.
    The green and yellow edges are an example of an orthogonal pair of edges.
  }
  \label{fig:discrete-confocal}
\end{figure}

In \cite{bsstconfocalii}, discrete confocal quadrics are defined as maps $(\frac{1}{2}\Z)^N \rightarrow \R^N$.
However, the orthogonality constraints involve only the restrictions of these maps to pairs of dual sublattices, such as $\Z^N$ and $(\Z^N)^* = (\Z + \frac{1}{2})^N$.
In the case $N=3$, we can naturally identify $(\Z^3)^*$ with the set of 3-cubes in $\Z^3$, which we denote by $\cube_3$.
From this perspective, the following definition of discrete orthogonal coordinate systems can be extracted from~\cite{bsstconfocalii} (see also \cite{Techter2021}):

\begin{definition}
  \label{def:docs}
A \emph{(three-dimensional) discrete orthogonal coordinate system (dOCS)} is a map $x: V_3 \cup \cube_3 \rightarrow \R^3$ such that, for all pairs of adjacent vertices $v, v' \in V_3$ and all pairs of adjacent cubes $c, c' \in \cube_3$ that share a common face (i.e., $v, v' \in c, c'$), the two lines $x(v) \vee x(v')$ and $x(c) \vee x(c')$ are orthogonal (see Figure~\ref{fig:discrete-confocal}, right).
\end{definition}

An immediate consequence of Definition~\ref{def:docs} is that the image of every face (and dual face) lies in a plane.
Hence, the restriction of a dOCS to $V_3$ (or to $\cube_3 \cong V_3$, respectively) is a conjugate vertex-net.

The notion of a Möbius lift for principal binets (Definition~\ref{def:moebius-lift}) can be adapted to dOCS as follows \cite{Techter2021}:

\begin{definition}
  \label{def:docs-moebius-lift}
  Let $x : V_3 \cup \cube_3 \rightarrow \R^3$ be a dOCS.
  A map $x_\mobq : V_3 \cup \cube_3 \rightarrow \RP^{4}$ is called a \emph{Möbius lift} of $x$ if
  \begin{enumerate}
  \item
    \label{def:docs-moebius-lift-polarity}
    for every incident $v \in V_3$ and $c \in \cube_3$, the points $x(v)$ and $x(c)$ are related by polarity with respect to $\mobq$,
  \item
    $x$ is the projection of $x_\mobq$, that is, $\pi \circ x_\mobq = x$.
    \qedhere
  \end{enumerate}
\end{definition}

The polarity constraint immediately implies that the restriction of a Möbius lift $x_\mobq$ of a dOCS to $V_3$ (and to $\cube_3 \cong V_3$, respectively) is again a conjugate vertex-net.
The existence and completeness results for Möbius lifts of principal binets (Lemma~\ref{lem:moebius-lifts}) extend directly to Möbius lifts of dOCS \cite{Techter2021}.
\begin{lemma}\
  \label{lem:docs-moebius-lift}
  \nobreakpar
  \begin{enumerate}
  \item
    Every dOCS has a one-parameter family of Möbius lifts.
  \item
    \label{lem:docs-moebius-lift-projection}
    The projection $\pi$ of any map $V_3 \cup \cube_3 \rightarrow \RP^4$ that satisfies the polarity constraint~\ref{def:docs-moebius-lift-polarity} from Definition~\ref{def:docs-moebius-lift} is a dOCS.
    \qedhere
  \end{enumerate}
\end{lemma}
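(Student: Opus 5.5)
The plan is to mirror the proof of Lemma~\ref{lem:moebius-lifts} for principal binets. Both statements reduce to the local relation between a single orthogonal pair of edges, $x(v)\vee x(v')$ and $x(c)\vee x(c')$, and the polarity of the four lifted points. The key identity concerns points in $\R^n$ lifted to $\RP^{n+1}$ as spheres, i.e.\ points off $\mobq$ with center $\pi(X)$ and squared radius $r^2$ (possibly imaginary). Two lifted points $X,Y$ are polar with respect to $\mobq$ if and only if the spheres are orthogonal:
\begin{equation}
  |\pi(X)-\pi(Y)|^2 = r_X^2 + r_Y^2 .
\end{equation}

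For the quad $v,v'\sim c,c'$, suppose all four incidences are polar. Writing the orthogonality equation for the pairs $(v,c)$, $(v,c')$, $(v',c)$, $(v',c')$ and forming the alternating sum, all radii cancel. We obtain $|x_v-x_c|^2-|x_v-x_{c'}|^2-|x_{v'}-x_c|^2+|x_{v'}-x_{c'}|^2=0$, which equals $-2\langle x_v-x_{v'},\,x_c-x_{c'}\rangle=0$. This is exactly the orthogonality required in Definition~\ref{def:docs}, and it proves part~\ref{lem:docs-moebius-lift-projection}. The only combinatorial input is that every adjacent pair $v,v'$ and every adjacent pair $c,c'$ sharing the face containing $v,v'$ form such a four-cycle of incidences; this holds by definition.

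For part (i), I would proceed conversely by propagation on the bipartite incidence graph $\Gamma$ between $V_3$ and $\cube_3$, where $v$ is joined to $c$ if $v\in c$. Fix the squared radius $r_0^2$ of one vertex arbitrarily; this is the one parameter. Along each edge of $\Gamma$, the relation $r_X^2+r_Y^2=|x_X-x_Y|^2$ determines the neighbour's squared radius uniquely, and hence its lifted point, which lies on the line through $B$ and $x$. The lifted point is determined by the center and by $r^2\in\R$. Consistency is required around every cycle of $\Gamma$, and the alternating sum of the edge relations around a cycle gives the condition. Since $\Gamma$ connects each point to its eight neighbours in the other lattice, I would show that its cycle space is generated by the $4$-cycles $(v,c,v',c')$ arising from a shared face, and for these the dOCS orthogonality is exactly the closing condition by the computation above.

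The main obstacle is this cycle-generation claim. Note that cycles such as $v,c,v'',c'$ with $v,v''$ diagonal in a face of $c\cap c'$ reduce to two elementary $4$-cycles through an intermediate vertex, because consecutive vertices of that common face are adjacent. Similarly, two cubes sharing only an edge or a vertex are connected through a chain of face-adjacent cubes all containing the common vertex. $\Gamma$ then behaves like the incidence graph of a simply connected cell structure: it can be viewed as the $1$-skeleton of a CW complex built on $\R^3$ whose $2$-cells are the elementary $4$-cycles, and this complex should be simply connected. If that argument becomes messy, I would fall back on the Remark~\ref{rem:moebius-lifts-combinatorics} style argument: the proof for principal binets only used the local cross, and the same holds here once every cycle decomposes into crosses. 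Finally, I would note that the lifted points are only defined up to the degenerate case where a point of $x$ coincides with $B$'s image, which is excluded generically.
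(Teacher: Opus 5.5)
Your proposal is correct and is essentially the route the paper takes: the paper gives no separate proof but defers to \cite{Techter2021} and to the single-quad argument behind Lemma~\ref{lem:moebius-lifts} (cf.\ Remarks~\ref{rem:moebius-lifts-combinatorics} and~\ref{rem:orthogonal-sphere-representation}), and your sphere-orthogonality identity on a quad $(v,c,v',c')$ is exactly that argument, giving part~(ii) directly and the closing condition for propagating the radii in part~(i). The cycle-generation step you flag is also taken for granted in the paper, and it does hold: $\Gamma$ is the Cayley graph of the lattice $\Z^3\cup(\Z+\frac12)^3$ with respect to $\pm g_1,\dots,\pm g_4$, where $g_1=\frac12(1,1,1)$, $g_2=\frac12(1,-1,-1)$, $g_3=\frac12(-1,1,-1)$, $g_4=\frac12(-1,-1,1)$, and this lattice is presented by the commutators $[g_i,g_j]$ together with the relator $g_1g_2g_3g_4$, so simple connectivity of the Cayley complex shows that every cycle of $\Gamma$ is a sum of translated relator $4$-cycles, which are elementary quads (for $g_1g_2g_3g_4$) or the face-diagonal and edge-sharing quads (for the commutators) that split into two elementary ones via an intermediate vertex, respectively an intermediate cube containing the shared edge, as you indicate.
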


Moreover, the polarity constraint \ref{def:docs-moebius-lift-polarity} from Definition~\ref{def:docs-moebius-lift}
implies that the restriction of $x_\mobq$ to $V_3$ (or, respectively, to $\cube_3$)
uniquely determines the entire lift via
\begin{equation}
  x_\mobq(c) = \left(\bigvee_{v \sim c} x(v)\right)^\perp .
\end{equation}
Hence, on the level of Möbius lifts, there is an immediate correspondence between principal binets and dOCS:
starting from a conjugate vertex-net on $V_3$, one may extend either to faces $F_3$ or to cubes $\cube_3$.
The extension to $\cube_3$ is unique, whereas the extension to $F_3$ depends on initial data.
We now make this correspondence explicit and describe its effect on the projections.

Let $b_\mobq \colon V_3 \cup F_3 \rightarrow \RP^4$ be a Möbius lift of a principal binet
$b \colon V_3 \cup F_3 \rightarrow \R^3$.
Since the restriction of $b_\mobq$ to $V_3$ is a conjugate vertex-net, the span of the vertices of every elementary 3-cube is three-dimensional.
Consequently, its polar subspace is a single point.
Geometrically, this point can be interpreted as the center of a sphere associated with each 3-cube (see Section~\ref{sec:circular-nets}).
We therefore define a map $x_\mobq \colon V_3 \cup \cube_3 \rightarrow \RP^4$ by
\begin{equation}
  \begin{aligned}
    x_\mobq(v) &\coloneqq b_\mobq(v)
      && \text{for all } v \in V_3,\\
    x_\mobq(c) &\coloneqq \left(\bigvee_{v \sim c} b_\mobq(v)\right)^\perp
      && \text{for all } c \in \cube_3.
  \end{aligned}
\end{equation}
By construction, $x_\mobq$ satisfies the polarity constraint \ref{def:docs-moebius-lift-polarity} from Definition~\ref{def:docs-moebius-lift} and therefore defines,
by Lemma~\ref{lem:docs-moebius-lift}~\ref{lem:docs-moebius-lift-projection},
a Möbius lift of a dOCS.

We now describe how this dOCS arises directly from the principal binet $b$.
The point $x_\mobq(c)$ is the intersection point of all lines
$x_\mobq(c) \vee x_\mobq(c')$ with $c' \in \cube_3$ adjacent to $c$.
Each such line is polar to the plane $\square b_\mobq(f)$,
where $f \in F_3$ is the face incident to $c$ and $c'$.
Under projection, this line becomes the normal line of the principal binet (or, equivalently, the axis of the circle $C(f)$ introduced in Section~\ref{sec:circular-nets}),
\begin{equation}
  \pi\bigl(x_\mobq(c) \vee x_\mobq(c')\bigr) = N(f),
\end{equation}
where $N(f)$ denotes the line through $b(f)$ orthogonal to $\square b(f)$.
Consequently, the dOCS $x \colon V_3 \cup \cube_3 \rightarrow \R^3$ associated with $b$ is given by
\begin{equation}
  \label{eq:principal-binet-to-docs}
  \begin{aligned}
    x(v) &\coloneqq b(v)
      && \text{for all } v \in V_3,\\
    x(c) &\coloneqq \bigcap_{f \sim c} N(f)
      && \text{for all } c \in \cube_3.
  \end{aligned}
\end{equation}
In particular, the resulting points of $x$ depend only on the principal binet $b$
and not on the choice of Möbius lift.

Conversely, let $x_\mobq : V_3 \cup \cube_3 \rightarrow \RP^4$ be the Möbius lift of a dOCS
$x : V_3 \cup \cube_3 \rightarrow \R^3$.
First, we set
\begin{equation}
  b_\mobq(v) \coloneqq x_\mobq(v)
  \qquad \text{for all } v \in V_3.
\end{equation}
Furthermore, by the polarity condition on the Möbius lift, we have
\begin{equation}
  b_\mobq(f) \in \bigl(\square b_\mobq(f)\bigr)^\perp
  = x_\mobq(c) \vee x_\mobq(c')
  \qquad \text{for all } f \in F_3,
\end{equation}
where $c, c' \in \cube_3$ are the two cubes incident to $f$.
These points may be chosen arbitrarily on one-dimensional initial data,
for instance along coordinate axes in the coordinate planes.
All remaining points in the coordinate planes are then uniquely determined on these lines
by the conjugacy condition, as described in \cite[Section~20]{atbinets}.
Finally, by Theorem~\ref{th:polarconsistency}, polar conjugate binets form a consistent reduction of conjugate binets,
and thus all remaining points in $F_3$ are uniquely determined by the 3D-system of conjugate face-nets.

We now describe how this principal binet arises directly from the dOCS $x$.
The construction carries over directly to the projection.
First, we set
\begin{equation}
  \label{eq:docs-to-principal-binets1}
  b(v) \coloneqq x(v)
  \qquad \text{for all } v \in V_3.
\end{equation}
We then construct the remaining points
\begin{equation}
  \label{eq:docs-to-principal-binets2}
  b(f) \in x(c) \vee x(c')
  \qquad \text{for all } f \in F_3,
\end{equation}
from one-dimensional initial data,
where $c, c' \in \cube_3$ are the two cubes incident to $f$.
The construction proceeds first to two-dimensional data by the conjugacy condition,
as described in \cite[Section~20]{atbinets},
and then to all of $F_3$ using the 3D-system of conjugate face-nets.
This yields a principal binet by Theorem~\ref{th:principalconsistency}.

Note that although the constructions for the Möbius lift and for the projection look identical,
for the Möbius lift the restriction to $V_3$ uniquely determines all remaining points,
while for the projection the restriction to $F_3$ is still used.

We summarize this correspondence in the following theorem.

\begin{theorem}
  \label{thm:binets-docs}
  There is a natural correspondence between principal binets $b : V_3 \cup F_3 \rightarrow \R^3$
  and dOCS $x : V_3 \cup \cube_3 \rightarrow \R^3$ on $\Z^3$:
  \begin{enumerate}
  \item
    Every principal binet
    defines a unique dOCS
    by \eqref{eq:principal-binet-to-docs}.
  \item
    Every dOCS
    defines a principal binet
    by \eqref{eq:docs-to-principal-binets1} and
    by \eqref{eq:docs-to-principal-binets2} on one-dimensional initial data, and propagation as described above (via the
    3D-system of conjugate face-nets).\qedhere
  \end{enumerate}
\end{theorem}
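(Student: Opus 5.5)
The plan is to prove both directions on the level of Möbius lifts and then project, using Lemma~\ref{lem:moebius-lifts} and Lemma~\ref{lem:docs-moebius-lift} to pass between the Euclidean and the lifted pictures. The discussion preceding the theorem already carries out most of the constructions; the proof consists in checking that each step is well defined and that the outcome does not depend on the auxiliary choices.

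For (i), start from a principal binet $b$ and choose a Möbius lift $b_\mobq$, which exists by Lemma~\ref{lem:moebius-lifts}~\ref{lem:moebius-lifts-existence}. Since $b_\mobq|_{V_3}$ is a conjugate vertex-net, the span of the vertices of an elementary cube should be three-dimensional; by Lemma~\ref{lem:dimvertexnets} this holds generically. Hence its polar $x_\mobq(c)$ is a point. Incidence gives the polarity constraint of Definition~\ref{def:docs-moebius-lift}, and Lemma~\ref{lem:docs-moebius-lift}~\ref{lem:docs-moebius-lift-projection} shows that the projection $x$ is a dOCS.

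The key step for uniqueness is to identify $\pi(x_\mobq(c))$ intrinsically. Take $c,c'$ sharing a face $f$. The line $x_\mobq(c)\vee x_\mobq(c')$ is the polar of the plane $\square b_\mobq(f)$, since both points are polar to the four vertices of $f$. The polar of a plane meeting $\mobq$ in a circle projects to the axis of that circle. Moreover, $b_\mobq(f)\in(\square b_\mobq(f))^\perp$, so this line passes through $b_\mobq(f)$, and its projection is the line through $b(f)$ orthogonal to $\square b(f)$, namely $N(f)$. Hence $x(c)=\bigcap_{f\sim c}N(f)$, which does not involve the lift. This gives both the formula \eqref{eq:principal-binet-to-docs} and its independence of the one-parameter choice of lift.

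For (ii), start from a dOCS $x$ and take a Möbius lift $x_\mobq$. Set $b_\mobq(v)=x_\mobq(v)$. Choose $b_\mobq(f)$ on the line $x_\mobq(c)\vee x_\mobq(c')=(\square b_\mobq(f))^\perp$ along coordinate axes. Then fill in the coordinate planes via the $N=2$ conjugacy construction of \cite{atbinets}, which keeps each face point on its polar line. Theorem~\ref{th:polarconsistency} then propagates this to a polar conjugate binet on all of $D_3$. Its restriction to $V_3$ is the given conjugate vertex-net, because the vertex-net and face-net propagate independently. Projecting and applying Lemma~\ref{lem:moebius-lifts}~\ref{lem:moebius-lifts-projection} yields a principal binet with $b|_{V_3}=x|_{V_3}$.

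The projected description \eqref{eq:docs-to-principal-binets2} then follows from Theorem~\ref{th:principalconsistency}, together with the fact that $\pi$ maps the lines $x_\mobq(c)\vee x_\mobq(c')$ to $x(c)\vee x(c')$. The main obstacle is showing that the face points produced by the conjugate face-net propagation still lie on $x(c)\vee x(c')$ for faces off the initial data. I would argue this on the lift, where polarity is preserved by Theorem~\ref{th:polarconsistency} and the polar of $\square b_\mobq(f)$ is exactly the line $x_\mobq(c)\vee x_\mobq(c')$, and then project. Genericity issues are handled by the lifts of Lemmas~\ref{lem:dimvertexnets} and~\ref{lem:dimfacenets}.
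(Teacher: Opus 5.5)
Your proposal is correct and follows the paper's argument essentially step for step. For (i) you take a Möbius lift, define $x_\mobq(c)$ as the pole of the cube's span, and identify $\pi(x_\mobq(c)\vee x_\mobq(c'))=N(f)$ to get independence of the lift. For (ii) you choose face points on the polar lines $x_\mobq(c)\vee x_\mobq(c')=(\square b_\mobq(f))^\perp$ and propagate via Theorem~\ref{th:polarconsistency}. You obtain principality from Lemma~\ref{lem:moebius-lifts}~\ref{lem:moebius-lifts-projection} applied to the lifted binet, while the paper invokes Theorem~\ref{th:principalconsistency} on the projected construction.

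One small correction: that each cube spans a 3-space in $\RP^4$ is a non-degeneracy assumption. It does not follow from Lemma~\ref{lem:dimvertexnets}, whose lifts leave $\RP^4$ and hence the Möbius quadric behind.
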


The definition of three-dimensional dOCS (Definition~\ref{def:docs})
can be generalized literally to maps
\begin{equation}
  x : V_N \cup \cube_N \rightarrow \R^3
\end{equation}
in arbitrary dimensions $N \geq 3$, where $\cube_N$ denotes the set of $3$-cubes in $\Z^N$.
Note that this is not the same as the notion of higher-dimensional dOCS used in
\cite{bsstconfocali, bsstconfocalii},
but rather remains fundamentally based on three dimensions.
However, with this definition, together with the previous observation that on the level of the Möbius lift
a dOCS is uniquely determined by its restriction to $V_N$,
and using the multi-dimensional consistency of conjugate vertex-nets
(Proposition~\ref{prop:conjugatenetsconsistency}, \cite{dsqnet}),
we obtain the following new observation.
\begin{theorem}
  \label{thm:docs-consistency}
  dOCS are multi-dimensionally consistent 3D-systems.
\end{theorem}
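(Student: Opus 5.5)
The plan is to work entirely on the level of Möbius lifts and reduce the statement to the multi-dimensional consistency of conjugate vertex-nets (Proposition~\ref{prop:conjugatenetsconsistency}). The dOCS condition is imposed on $V_N \cup \cube_N$, with adjacency on $\cube_N$ given by sharing a $2$-face. Lemma~\ref{lem:docs-moebius-lift} identifies dOCS, up to the one-parameter choice of lift, with projections of maps $x_\mobq : V_N \cup \cube_N \to \RP^4$ satisfying the polarity constraint. By the formula $x_\mobq(c) = \bigl(\bigvee_{v \sim c} x_\mobq(v)\bigr)^\perp$, such a map is uniquely determined by its restriction to $V_N$, which is a conjugate vertex-net. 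Conversely, every conjugate vertex-net $g : V_N \to \RP^4$ whose elementary $3$-cubes span $3$-spaces defines such a map by this formula. The point to check here is that this is literally the same argument as in the three-dimensional case: Lemma~\ref{lem:docs-moebius-lift}, and in particular part~\ref{lem:docs-moebius-lift-projection}, only involves a single pair of adjacent vertices and a single pair of adjacent cubes sharing a face. So, as in Remark~\ref{rem:moebius-lifts-combinatorics}, it holds for any $N$.

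The steps in order are as follows. First, take Cauchy data for a dOCS on the coordinate $2$-planes of $\Z^3$, namely $x$ on the vertices and on the $3$-cubes touching them, as far as these data make sense. Choose a Möbius lift of the initial data, using the one-parameter freedom. Second, propagate the vertex part $x_\mobq|_{V}$ as a conjugate vertex-net in $\RP^4$ via Proposition~\ref{prop:conjugatenetsconsistency}; this is uniquely determined and consistent on $\Z^N$ for all $N$. Third, define $x_\mobq$ on all $3$-cubes by polarity to the span of their eight vertices. Project by $\pi$ and invoke Lemma~\ref{lem:docs-moebius-lift}~\ref{lem:docs-moebius-lift-projection} to conclude that the result is a dOCS on $V_N \cup \cube_N$.

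It remains to check that the result is compatible with the projected data and is uniquely determined. The projected vertex net $x|_{V}$ is itself a conjugate vertex-net, since the faces are planar. It is therefore determined by its Cauchy data through the same 3D-system, so $\pi \circ x_\mobq = x$ on vertices, independently of the lift. For the cubes, I will use the description from the binet correspondence: the point $x(c)$ is the common intersection of the projected lines $\pi(x_\mobq(c) \vee x_\mobq(c'))$. Each such line is the line through the circumcenter-type point orthogonal to the face plane, so $x(c)$ is determined by $x|_{V}$ (and the Cauchy data) alone. This gives uniqueness and hence the 3D property. Multi-dimensional consistency then follows because every ingredient is either vertex-net propagation, which is consistent by \cite{dsqnet}, or a local polarity construction on each $3$-cube.

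The main obstacle is the non-degeneracy needed for polarity to produce points. The eight vertices of each elementary $3$-cube of the lift must span exactly a $3$-space in $\RP^4$, and the span of a $4$-cube would generically be $4$-dimensional. For the $3$-cubes I would argue as in Lemma~\ref{lem:dimvertexnets}, genericity of the lifted initial data. I also need independence of the projected cube points from the chosen lift. For that I would first try the normal-line description $x(c)=\bigcap_{f\sim c} N(f)$, which depends only on the projected vertex-net.
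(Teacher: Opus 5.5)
Your overall route coincides with the paper's. There the theorem is justified in a single sentence: the Möbius lift of a dOCS is determined by its restriction to $V_N$, which is a conjugate vertex-net in $\RP^4$, so consistency is inherited from Proposition~\ref{prop:conjugatenetsconsistency}.

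The gap is in the step you use for uniqueness and lift-independence of the cube points. The line $N(f)$ of Section~\ref{sec:toc} is the normal through the face point $b(f)$ of a principal binet, and a dOCS has no face points. For a dOCS, $\pi\bigl(x_\mobq(c)\vee x_\mobq(c')\bigr)$ is the normal to the plane of $f$ through the radical center of the four vertex spheres of $f$ in the orthogonal sphere representation. That center depends on their squared radii $\rho_v$, which are not determined by $x|_{V}$. Concretely, take $x|_{V_3}=\mathrm{id}_{\Z^3}$ and arbitrary strictly increasing $s_1,s_2,s_3\colon\Z\to\R$. Then $x(a+[0,1]^3)\coloneqq(s_1(a_1),s_2(a_2),s_3(a_3))$ defines a dOCS, because the dual edge between $a+[0,1]^3$ and $a\pm e_k+[0,1]^3$ is parallel to $e_k$ and hence orthogonal to the shared face. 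So $\bigcap_{f\sim c}N(f)$ is not a function of the projected vertex-net, and your uniqueness claim rests on a false statement. Relatedly, prescribing $x$ on all cubes touching the coordinate planes overdetermines the system. Beyond $x|_V$, the genuine Cauchy data are the $\rho_v$ along the coordinate axes, up to a common constant, because planarity of a lifted quad determines its fourth radius from the other three.

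The repair is short. Write the lift as spheres with centers $x(v)$, $x(c)$ and squared radii $\rho_v$, $\sigma_c$. Polarity then reads $|x(v)-x(c)|^2=\rho_v+\sigma_c$, so the one-parameter family of lifts is exactly $\rho_v\mapsto\rho_v+t$, $\sigma_c\mapsto\sigma_c-t$. A common shift of the $\rho_v$ is preserved by the propagation. Indeed, planarity of a lifted quad means that $|x|^2-\rho$ restricts to an affine function on the plane of the quad, and adding a constant does not affect this. A common shift also does not move the radical center of the eight vertex spheres of a cube, and that center is exactly $x(c)$. Hence any two lifts of the same initial data propagate to the same projected dOCS. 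With this in place of the $N(f)$ argument, your proof goes through.
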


\begin{remark}
  Theorem~\ref{thm:docs-consistency}, together with the correspondence
  established in Theorem~\ref{thm:binets-docs}, provides an alternative approach
  to proving the multi-dimensional consistency of principal binets
  (Theorem~\ref{th:principalconsistency}) in the case of $\R^3$.
\end{remark}

\section{Smooth orthogonal coordinate systems}
\label{sec:smooth-vs-discrete}

By Theorem~\ref{thm:binets-docs}, there is a close correspondence
between three-dimensional principal binets $b : D_3 \rightarrow \R^3$
and discrete orthogonal coordinate systems.
In an appropriate continuum limit, the latter should converge to smooth orthogonal coordinate systems.
Hence, three-dimensional principal binets can be viewed as a discretization of smooth three-dimensional orthogonal coordinate systems.

On the other hand, as discussed in Remark~\ref{rem:focal-points}, given a conjugate binet $b : D_3 \rightarrow \R^3$,
one may regard the points on the vertices $V_3$ together with the points on the 12-faces $F_3^{12}$ as discretizations of the coordinate system
(in the sense that these points should converge to a smooth coordinate system in the continuum limit).
From this perspective, the points on the $13$-face $F_3^{13}$ and the $23$-faces $F_3^{23}$ play the role of focal points.
Nevertheless, these focal points are still involved in the orthogonality constraints of the principal binet (Definition~\ref{def:principal-binet}).
In this section we investigate how the orthogonality conditions on the focal points still lead to an orthogonal coordinate system in the smooth case.

In the smooth case, a triply conjugate system $x : U \subset \R^3 \rightarrow \R^3$ is governed by three equations of the form \cite{darbouxlecons, ddgbook}
\begin{equation}
  \label{eq:tcs}
  \partial_i \partial_j x = a_{ij}\,\partial_i x + a_{ji}\,\partial_j x
\end{equation}
for distinct $i, j \in \{1, 2, 3\}$,
with six smooth functions $a_{ij} : U \rightarrow \R$ satisfying three compatibility conditions
\begin{equation}
  \label{eq:tcs-compatibility}
  \partial_k a_{ij} = \partial_j a_{ik} = a_{ij} a_{jk} + a_{kj} a_{ik} - a_{ij} a_{ik}.
\end{equation}

The \emph{Laplace transforms} (or \emph{focal points}) of $x$ in $i$-direction along the $ij$-surfaces are given by
\[
  L_{ij} = x - \frac{1}{a_{ji}} \, \partial_i x.
\]
We assume that the coefficients $a_{ij}$ do not vanish, so that the Laplace transforms are well defined,
and that the transforms are non-degenerate, in particular, that their derivatives do not vanish.

\begin{figure}[t]
  \centering
  \begin{overpic}[width=0.4\textwidth]{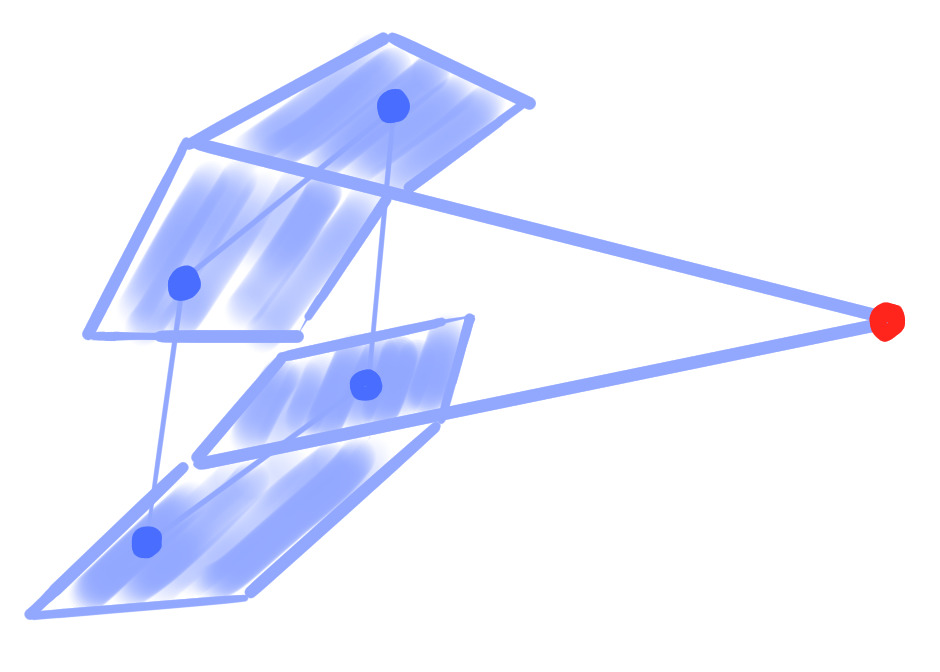}
    \put(95,30){$\color{red}b(f)$}
    \put(19,9){\contour{white}{$\color{blue}\square b(v)$}}
    \put(42,33){\contour{white}{$\color{blue}\square b(v_2)$}}
    \put(-1,45){\contour{white}{$\color{blue}\square b(v_3)$}}
    \put(18,64){\contour{white}{$\color{blue}\square b(v_{23})$}}
  \end{overpic}
  \hspace*{0.05\textwidth}
  \begin{overpic}[width=0.4\textwidth]{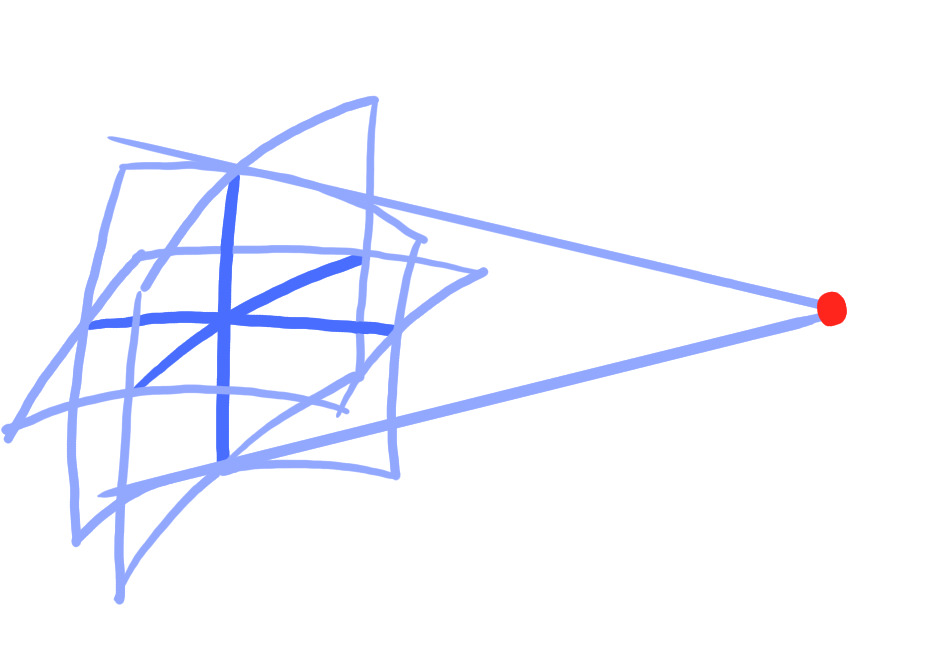}
    \put(89,32){$\color{red}L_{13}$}
    \put(41,34.5){\contour{white}{$\color{blue}1$}}
    \put(38,43){\contour{white}{$\color{blue}2$}}
    \put(23.5,53){\contour{white}{$\color{blue}3$}}
  \end{overpic}
  \caption{
    \emph{Left}: Point on a $23$-face of a conjugate binet.
    \emph{Right}: Focal point $L_{13}$ of a triply conjugate system.
  }
  \label{fig:focal-points}
\end{figure}

Now, we identify which focal points are discretized by the points on the $13$-faces and $23$-faces.
Consider a $23$-face $f = (v, v_2, v_{23}, v_3) \in F_3^{23}$.
The two intersection lines $\square b(v) \cap \square b(v_2)$ and $\square b(v_3) \cap \square b(v_{23})$ discretize tangent lines in the $1$-direction.
Consequently, the point of intersection $b(f)$ discretizes a focal point $L_{13}$ in the $1$-direction along the $13$-surfaces (see Figure~\ref{fig:focal-points}).
Similarly, a point $b(f)$ on a $13$-face $f \in F_3^{13}$ discretizes a focal point $L_{23}$ in the $2$-direction along the $23$-surfaces.

In terms of orthogonality conditions for a principal binet,
we distinguish the following two types, depending on the directions of edges in $E_3$:
\begin{itemize}
\item Crosses involving edges in $1$-direction or $2$-direction:
  It suffices to consider crosses of the form
  $(v, f, v', f') \in C_3$ with $f, f' \in F_3^{12}$.
  Crosses with edges in $1$-direction or $2$-direction involving $13$-faces or $23$-faces yield dependent conditions due to Remark~\ref{rem:principal-constraints}.
  In the smooth limit, these correspond to the orthogonality condition $\langle \partial_1 x,\, \partial_2 x \rangle = 0$.
\item Crosses involving edges in $3$-direction:
  It suffices to consider crosses of the form
  $(v, f, v', f') \in C_3$ with $v' = v + e_3$ and $f, f' \in F_3^{23}$.
  Again, crosses with edges in $3$-direction involving $13$-faces yield dependent conditions due to Remark~\ref{rem:principal-constraints}.
  In the smooth limit, these correspond to the orthogonality condition $\langle \partial_2 L_{13},\, \partial_3 x \rangle = 0$.
\end{itemize}

It turns out, that also in the smooth case these two orthogonality constraints characterize orthogonal coordinate systems.
\begin{theorem}
  \label{thm:tos-from-focal-points}
  A triply conjugate system $x$ is orthogonal
  if and only if
  \begin{equation}
    \sca{\partial_1 x, \partial_2 x} = 0
    \quad\text{and}\quad
    \sca{\partial_2 L_{13}, \partial_3 x} = 0.
  \end{equation}
\end{theorem}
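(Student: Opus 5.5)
The plan is to express $\partial_2 L_{13}$ in the tangent frame $\partial_1 x, \partial_2 x$ using \eqref{eq:tcs} and \eqref{eq:tcs-compatibility}. The second condition then becomes a linear relation between the metric coefficients $g_{ij} \coloneqq \sca{\partial_i x, \partial_j x}$. Differentiating the first condition gives a second linear relation, and the two together force $g_{13}=g_{23}=0$.

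\textbf{Step 1: compute $\partial_2 L_{13}$.} By definition $L_{13} = x - \frac{1}{a_{31}}\partial_1 x$. Differentiating in the $2$-direction and inserting $\partial_1\partial_2 x = a_{12}\partial_1 x + a_{21}\partial_2 x$ gives
\begin{equation*}
  \partial_2 L_{13} = \Bigl(1-\frac{a_{21}}{a_{31}}\Bigr)\partial_2 x + \Bigl(\frac{\partial_2 a_{31}}{a_{31}^2} - \frac{a_{12}}{a_{31}}\Bigr)\partial_1 x .
\end{equation*}
The compatibility condition \eqref{eq:tcs-compatibility} with $(i,j,k)=(3,1,2)$ reads
\begin{equation*}
  \partial_2 a_{31} = a_{31}a_{12} + a_{21}a_{32} - a_{31}a_{32} .
\end{equation*}
Substituting it, the coefficient of $\partial_1 x$ should collapse to $a_{32}(a_{21}-a_{31})/a_{31}^2$, so that
\begin{equation*}
  \partial_2 L_{13} = \frac{a_{31}-a_{21}}{a_{31}}\Bigl(\partial_2 x - \frac{a_{32}}{a_{31}}\,\partial_1 x\Bigr).
\end{equation*}
By the standing non-degeneracy assumption, $\partial_2 L_{13}\neq 0$, so the prefactor satisfies $a_{31}\neq a_{21}$. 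Hence the condition $\sca{\partial_2 L_{13},\partial_3 x}=0$ is equivalent to
\begin{equation*}
  a_{31}\, g_{23} = a_{32}\, g_{13}.
\end{equation*}

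\textbf{Step 2: differentiate the first condition.} For a conjugate system, \eqref{eq:tcs} gives
\begin{equation*}
  \partial_k g_{ij} = (a_{ik}+a_{jk})\,g_{ij} + a_{ki}\,g_{kj} + a_{kj}\,g_{ik}
\end{equation*}
for distinct $i,j,k$. If $g_{12}\equiv 0$ on $U$, then taking $(i,j,k)=(1,2,3)$ gives
\begin{equation*}
  a_{31}\,g_{23} + a_{32}\,g_{13} = 0 .
\end{equation*}
Combining this with the relation from Step 1 yields $2a_{32}g_{13}=0$. Since the coefficients $a_{ij}$ are assumed nonvanishing, $g_{13}=0$, and then $g_{23}=0$ because $a_{31}\neq 0$. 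Together with $g_{12}=0$, this shows $x$ is orthogonal.

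\textbf{Step 3: converse.} If $x$ is orthogonal, then $g_{12}=0$ holds directly. Moreover, $\partial_2 L_{13}$ is a combination of $\partial_1 x$ and $\partial_2 x$ by Step 1, so it is orthogonal to $\partial_3 x$.

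The main obstacle is the algebra in Step 1: getting the index convention of \eqref{eq:tcs-compatibility} right so that the $a_{12}$ terms cancel. The rest is linear algebra. One should also note that the argument uses $g_{12}=0$ on an open set, not just pointwise, since it is differentiated in Step 2, and that the nonvanishing of $a_{32}$, $a_{31}$ and of $a_{31}-a_{21}$ comes from the non-degeneracy assumptions stated before the theorem.
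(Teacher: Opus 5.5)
Your proposal is correct and follows the paper's proof essentially step by step. You use the same expression $\partial_2 L_{13} = \frac{a_{31}-a_{21}}{a_{31}}\bigl(\partial_2 x - \frac{a_{32}}{a_{31}}\partial_1 x\bigr)$, obtained from the compatibility condition with $(i,j,k)=(3,1,2)$, and combine it with $a_{31}\sca{\partial_2 x,\partial_3 x} + a_{32}\sca{\partial_1 x,\partial_3 x} = 0$, which comes from differentiating $\sca{\partial_1 x,\partial_2 x}=0$ in the $3$-direction; your remark that this condition must hold on an open set rather than pointwise is a worthwhile clarification that the paper leaves implicit.
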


\begin{proof}
  Assume $\sca{\partial_1 x, \partial_2 x} = 0$.

  First, observe that the remaining two orthogonality conditions for an orthogonal system,
  $\sca{\partial_2 x, \partial_3 x} = 0$ and $\sca{\partial_3 x, \partial_1 x} = 0$,
  are not independent.
  Indeed, differentiating the first condition gives
  \begin{equation}
    \label{eq:orthogonality-dependence}
    0
    = \partial_3 \sca{\partial_1 x, \partial_2 x}
    = \sca{\partial_1 \partial_3 x, \partial_2 x} + \sca{\partial_1 x, \partial_2 \partial_3 x}
    = a_{31} \sca{\partial_2 x, \partial_3 x} + a_{32} \sca{\partial_1 x, \partial_3 x},
  \end{equation}
  where we used \eqref{eq:tcs}.

  Next, consider the derivative of the Laplace transform:
  \begin{equation}
    \begin{aligned}
      \partial_2 L_{13}
      &= \partial_2\left( x - \frac{1}{a_{31}} \partial_1 x \right) \\
      &= \partial_2 x + \frac{\partial_2 a_{31}}{a_{31}^2} \partial_1 x - \frac{1}{a_{31}} \partial_1 \partial_2 x \\
      &= \frac{\partial_2 a_{31} - a_{12} a_{31}}{a_{31}^2} \partial_1 x + \frac{a_{31} - a_{21}}{a_{31}} \partial_2 x \\
    \end{aligned}
  \end{equation}
  where we used \eqref{eq:tcs}.
  Now using \eqref{eq:tcs-compatibility}, we obtain
  \begin{equation}
    \partial_2 L_{13} = \frac{a_{21} - a_{31}}{a_{31}} \left( \frac{a_{32}}{a_{31}} \partial_1 x - \partial_2 x \right).
  \end{equation}
  By assumption, the Laplace transform is non-degenerate, in particular $a_{21} \neq a_{31}$.
  Hence, the scalar product with $\partial_3 x$ reads
  \begin{equation}
    \sca{\partial_2 L_{13}, \partial_3 x}
    = \frac{a_{21} - a_{31}}{a_{31}} \left( \frac{a_{32}}{a_{31}} \sca{\partial_1 x, \partial_3 x} - \sca{\partial_2 x, \partial_3 x} \right).
  \end{equation}
  If $x$ is orthogonal, this is indeed zero.
  Conversely, if this scalar product vanishes, then by \eqref{eq:orthogonality-dependence} and our assumptions
  $a_{31}, a_{32} \neq 0$ and $a_{21} \neq a_{31}$, we deduce $\sca{\partial_2 x, \partial_3 x} = \sca{\partial_3 x, \partial_1 x} = 0$.
\end{proof}

\bibliographystyle{alpha}

\bibliography{references}

@misc{atbinets,
    title="{Principal binets}",
      author={Niklas C. Affolter and Jan Techter},
      year={2025},
    note={To appear in Discrete \& Computational Geometry}
}

@article{bpisothermic,
	author = {Pinkall, Ulrich and Bobenko, Alexander},
	journal = {Journal für die reine und angewandte Mathematik},
	pages = {187-208},
	title = {Discrete isothermic surfaces.},
	url = {http://eudml.org/doc/153826},
	volume = {475},
	year = {1996},
}

@inbook{bobenkocircular,
	place={Cambridge}, 
	series={London Mathematical Society Lecture Note Series}, 
	title={Discrete conformal maps and surfaces}, 
	DOI={10.1017/CBO9780511569432.009},
	booktitle={Symmetries and Integrability of Difference Equations}, 
	publisher={Cambridge University Press}, 
	author={Bobenko, Alexander I.}, 
	year={1999}, 
	pages={97–108}, 
	collection={London Mathematical Society Lecture Note Series}
}

@article{bslinecomplexes,
	author = {Bobenko, Alexander I. and Schief, Wolfgang K.},
	title = "{Discrete line complexes and integrable evolution of minors}",
	journal = {Proceedings of the Royal Society A},
	volume = {471},
	number = {2175},
	year = {2015},
	month = {03},
	doi = {10.1098/rspa.2014.0819},
	url = {https://doi.org/10.1098/rspa.2014.0819},
}

@article{bsorganizing,
	doi = {10.1070/rm2007v062n01abeh004380},
	url = {https://doi.org/10.1070%2Frm2007v062n01abeh004380},
	year = 2007,
	month = {02},
	publisher = {{IOP} Publishing},
	volume = {62},
	number = {1},
	pages = {1--43},
	author = {Alexander I. Bobenko and Yuri B. Suris},
	title = "{On organizing principles of discrete differential geometry. Geometry of spheres}",
	journal = {Russian Mathematical Surveys},
}

@book{ddgbook,
	title={Discrete Differential Geometry: Integrable Structure},
	author={Bobenko, Alexander I. and Suris, Yuri B.},
	isbn={978-0-8218-4700-8},
	series="{Graduate Studies in Mathematics}",
	volume="98",
	year={2008},
	publisher={American Mathematical Society}
}

@article{bsstconfocali,
	author = {Bobenko, Alexander I. and Schief, Wolfgang K. and Suris, Yuri B. and Techter, Jan},
	title = "{On a discretization of confocal quadrics. I. An integrable systems approach}",
	journal = {Journal of Integrable Systems},
	volume = {1},
	number = {1},
	pages = {xyw005},
	year = {2016},
	month = {08},
	issn = {2058-5985},
	doi = {10.1093/integr/xyw005},
	url = {https://doi.org/10.1093/integr/xyw005},
	eprint = {https://academic.oup.com/integrablesystems/article-pdf/1/1/xyw005/19520676/xyw005.pdf},
}

@article{bsstconfocalii,
	author = {Bobenko, Alexander I. and Schief, Wolfgang K. and Suris, Yuri B. and Techter, Jan},
	title = "{On a Discretization of Confocal Quadrics. A Geometric Approach to General Parametrizations}",
	journal = {International Mathematics Research Notices},
	volume = {2020},
	number = {24},
	pages = {10180-10230},
	year = {2018},
	month = {12},
	issn = {1073-7928},
	doi = {10.1093/imrn/rny279},
	url = {https://doi.org/10.1093/imrn/rny279},
	eprint = {https://academic.oup.com/imrn/article-pdf/2020/24/10180/34948952/rny279.pdf},
}

@article{cdscircular,
	title = "The integrable discrete analogues of orthogonal coordinate systems are multi-dimensional circular lattices",
	journal = "Physics Letters A",
	volume = "235",
	number = "5",
	pages = "480 - 488",
	year = "1997",
	issn = "0375-9601",
	doi = "https://doi.org/10.1016/S0375-9601(97)00657-9",
	url = "http://www.sciencedirect.com/science/article/pii/S0375960197006579",
	author = "Jan Cieśliński and Adam Doliwa and Paolo M. Santini"
}

@book{darbouxlecons,
  author    = {Gaston Darboux},
  title     = {Leçons sur la théorie générale des surfaces et les applications géométriques du calcul infinitésimal},
  edition   = {2},
  publisher = {Gauthier‐Villars},
  address   = {Paris, France},
  year      = {1915},
  note      = {Deuxième édition revue et augmentée},
}

@Article{dellingercbpatterns,
	author={Dellinger, Felix},
	title="{Discrete Isothermic Nets Based on Checkerboard Patterns}",
	journal={Discrete {\&} Computational Geometry},
	year={2023},
	month={Sep},
	day={14},
	issn={1432-0444},
	doi={10.1007/s00454-023-00558-1},
	url={https://doi.org/10.1007/s00454-023-00558-1}
}

@article{doliwaqnetsquadrics,
	title = {Quadratic reductions of quadrilateral lattices},
	journal = {Journal of Geometry and Physics},
	volume = {30},
	number = {2},
	pages = {169-186},
	year = {1999},
	issn = {0393-0440},
	doi = {https://doi.org/10.1016/S0393-0440(98)00053-9},
	url = {https://www.sciencedirect.com/science/article/pii/S0393044098000539},
	author = {Adam Doliwa},
}

@article{doliwaanetspluecker,
	title = "{Discrete asymptotic nets and W-congruences in Plücker line geometry}",
	journal = "Journal of Geometry and Physics",
	volume = "39",
	number = "1",
	pages = "9 - 29",
	year = "2001",
	issn = "0393-0440",
	doi = "https://doi.org/10.1016/S0393-0440(00)00070-X",
	url = "http://www.sciencedirect.com/science/article/pii/S039304400000070X",
	author = "Adam Doliwa"
}

@article{dsqnet,
	title = "Multidimensional quadrilateral lattices are integrable",
	journal = "Physics Letters A",
	volume = "233",
	number = "4",
	pages = "365 - 372",
	year = "1997",
	issn = "0375-9601",
	doi = "https://doi.org/10.1016/S0375-9601(97)00456-8",
	author = "Adam Doliwa and Paolo M. Santini"
}

@article{dsmlinecongruence,
	author = {Doliwa, Adam and Santini, Paolo M. and Mañas, Manuel},
	title = {Transformations of quadrilateral lattices},
	journal = {Journal of Mathematical Physics},
	volume = {41},
	number = {2},
	pages = {944-990},
	year = {2000},
	doi = {10.1063/1.533175},	
	URL = {https://doi.org/10.1063/1.533175},	
}

@article{lpwywconical,
	title = "Geometric Modeling with Conical Meshes and Developable Surfaces",
	author = "Yang Liu and Helmut Pottmann and Johannes Wallner and Yongliang Yang and Wenping Wang",
	year = "2006",
	month = "7",
	day = "1",
	doi = "10.1145/1141911.1141941",
	volume = "25",
	pages = "681--689",
	journal = "ACM Transactions on Graphics",
	issn = "0730-0301",
	publisher = "Association for Computing Machinery",
	number = "3",
}

@Article{sauerqnet,
	author={Sauer, Robert},
	title="{Wackelige Kurvennetze bei einer infinitesimalen Fl{\"a}chenverbiegung}",
	journal={Mathematische Annalen},
	year=1933,
	volume=108,
	number=1,
	pages={673-693},
	issn={1432-1807},
	doi={10.1007/BF01452858},
	url={https://doi.org/10.1007/BF01452858}
}

@phdthesis{Techter2021,
  author    = {Techter, Jan},
  title     = {Discrete confocal quadrics and checkerboard incircular nets},
  school    = {Technische Universit\"at Berlin},
  year      = 2021,
  doi       = {10.14279/depositonce-11461},
  url       = {https://depositonce.tu-berlin.de/items/62d4782e-82a4-40fb-9cfa-1526542b9b4b}
}

\end{document}